\documentclass{article}

\usepackage[utf8]{inputenc}

\usepackage[english]{babel}

\usepackage{fullpage}
\usepackage{soul} 
\usepackage{amsmath}
\usepackage{amssymb}
\usepackage{amsthm}
\usepackage{bbm}
\usepackage{mathrsfs}
\usepackage{amsfonts}
\usepackage{dsfont}
\usepackage{xcolor}

\usepackage{authblk}

\usepackage{mathtools}
\mathtoolsset{showonlyrefs}

\usepackage[round]{natbib}

\usepackage{float}

\usepackage{subcaption}

\usepackage{enumitem}

\numberwithin{equation}{section}

\newcommand{\T}{\top}
\newcommand{\diag}{{\text{\texttt{diag}}}}
\newcommand{\del}{\partial}
\newcommand{\RR}{\mathbb{R}}
\newcommand{\R}{\ensuremath{\mathbb{R}}}
\newcommand{\PP}{\mathbb{P}}
\renewcommand{\P}{\ensuremath{\mathbb{P}}}
\newcommand{\N}{\ensuremath{\mathbb{N}}}
\newcommand{\eps}{\varepsilon}
\newcommand{\FF}{\mathcal{F}}
\newcommand{\1}{\mathbbm{1}}

\newcommand{\V}{\ensuremath{\mathbb{V}}}

\newcommand{\sampleSize}{\ensuremath{n}}
\newcommand{\aLocus}{\ensuremath{a}}
\newcommand{\bLocus}{\ensuremath{b}}

\newcommand{\invPopSize}{\ensuremath{\lambda}}
\newcommand{\cumInvPopSize}{\ensuremath{\Lambda}}

\newcommand{\bigTime}{\ensuremath{T}}
\renewcommand{\time}{\ensuremath{t}}
\newcommand{\tMRCA}{T_\text{MRCA}}
\newcommand{\ancestralProcess}{\ensuremath{A}}
\newcommand{\ancStates}{\ensuremath{\mathcal{S}}}
\newcommand{\ancestralRateMatrix}{\ensuremath{Q}}
\newcommand{\distAncestral}{\ensuremath{p}}
\newcommand{\distVecAncestral}{\ensuremath{\mathbf{p}}}
\newcommand{\eff}{\ensuremath{F}}
\newcommand{\vecEff}{\ensuremath{\mathbf{F}}}

\newcommand{\stateFunction}{\ensuremath{v}}
\newcommand{\stateFunctionMatrix}{\ensuremath{V}}

\newcommand{\treeLength}{\ensuremath{L}}
\newcommand{\totalTreeLength}{\ensuremath{\mathcal{L}}}
\newcommand{\treeDimA}{\ensuremath{x}}
\newcommand{\treeDimB}{\ensuremath{y}}

\newcommand{\recoRate}{\ensuremath{\rho}}
\newcommand{\ancestralRecoProcess}{\ensuremath{A^{\rho}}}
\newcommand{\numLineagesRV}{\ensuremath{K}}
\newcommand{\numLineages}{\ensuremath{k}}
\newcommand{\recoStates}{\ensuremath{\mathcal{S}^{\rho}}}
\newcommand{\state}{\ensuremath{s}}
\newcommand{\ancestralRecoMatrix}{\ensuremath{\tilde{Q}}}
\newcommand{\ancestralRecoCoalMatrix}{\ensuremath{Q^{c}}}
\newcommand{\ancestralRecoOnlyMatrix}{\ensuremath{Q^{\rho}}}

\newcommand{\ancestralRecoLimitedProcess}{\ensuremath{\bar{A}^{\rho}}}
\newcommand{\numRecoRV}{\ensuremath{\bar{R}}}
\newcommand{\numReco}{\ensuremath{r}}
\newcommand{\recoLimitedStates}{\ensuremath{\bar{\mathcal{S}}^{\rho}}}
\newcommand{\ancestralRecoLimitedMatrix}{\ensuremath{\bar{Q}}}
\newcommand{\ancestralRecoCoalLimitedMatrix}{\ensuremath{\bar{Q}^{c}}}
\newcommand{\ancestralRecoOnlyLimitedMatrix}{\ensuremath{\bar{Q}^{\rho}}}
\newcommand{\numLineagesRVLimited}{\ensuremath{\bar{K}}}
\newcommand{\absStates}{\ensuremath{\Delta}}

\newcommand{\vecVee}{\ensuremath{{\mathbf{v}}}}
\newcommand{\vecEks}{\ensuremath{{\mathbf{x}}}}
\newcommand{\argChar}{\ensuremath{\tau}}
\newcommand{\rateODE}{\ensuremath{q}}
\newcommand{\inhODE}{\ensuremath{g}}
\newcommand{\rateInt}{\ensuremath{H}}
\newcommand{\numTimePoints}{\ensuremath{M}}
\newcommand{\timeGrid}{\ensuremath{T}}
\newcommand{\treeAGrid}{\ensuremath{X}}
\newcommand{\treeBGrid}{\ensuremath{Y}}
\newcommand{\numTreeAPoints}{\ensuremath{U}}
\newcommand{\intersectTimeA}{\ensuremath{\timeGrid_\treeDimA}}
\newcommand{\intersectTimeB}{\ensuremath{\timeGrid_\treeDimB}}

\newcommand{\firstRevision}[1]{{\color{black}#1}}
\newcommand{\edit}[1]{{\color{black}#1}}


\newtheorem{proposition}{Proposition}[section]

\newtheorem{corollary}[proposition]{Corollary}
\newtheorem{definition}[proposition]{Definition}
\newtheorem{lemma}[proposition]{Lemma}
\newtheorem{remark}[proposition]{Remark}


\def\mc{\mathcal}
\newcommand{\SP}{\hspace{1pt}}




\title{\firstRevision{Computing} the joint distribution of the total tree length across loci in populations with variable size \\ \vspace{10pt} \normalsize (\rm In: {\it Theoretical Population Biology} (2017), Vol. 118, p.1-19.) }



\author[1,3]{Alexey Miroshnikov}
\author[2,3]{Matthias Steinrücken}
\affil[1]{University of California, Los Angeles, Department of Mathematics}
\affil[2]{\edit{University of Chicago, Department of Ecology and Evolution}}
\affil[3]{University of Massachusetts Amherst, Department of Biostatistics and Epidemiology}

\date{}

\begin{document}

\maketitle 

\begin{abstract}


In recent years, a number of methods have been developed to infer complex demographic histories, especially historical population size changes, from genomic sequence data. Coalescent Hidden Markov Models have proven to be particularly useful for this type of inference.
Due to the Markovian structure of these models, an essential building block is the joint distribution of local genealogical trees, or statistics of these genealogies, at two neighboring loci in populations of variable size. Here, we present a novel method to compute the marginal and the joint distribution of the total length of the genealogical trees at two loci separated by \edit{at most one recombination event} for samples of arbitrary size. To our knowledge, no method to compute these distributions has been presented in the literature to date. We show that they can be obtained from the solution of certain hyperbolic systems of partial differential equations. We present a numerical algorithm, based on the method of characteristics, that can be used to efficiently and accurately solve these systems and compute the marginal and the joint distributions. We demonstrate its utility to study the properties of the joint distribution. Our flexible method can be straightforwardly extended to \edit{handle an arbitrary fixed number of recombination events, to} include the distributions of other statistics of the genealogies as well, and can also be applied in structured populations.
\end{abstract}

\smallskip

{\bf Keywords: coalescent theory, variable population size, hyperbolic systems of PDEs}

\medskip

{\bf AMS subject classification: 92D\firstRevision{10}, 60J27, 60J28, 35L40}






\section{Introduction}

\firstRevision{Unraveling the complex demographic histories of humans or other species and understanding their effects on contemporary genetic variation is a central goal of population genetics.}
\firstRevision{In addition to} advancing our \firstRevision{knowledge} of the \firstRevision{evolutionary} processes that shape genomic variation, \firstRevision{demographic inference} is also an important step \firstRevision{towards understanding} disease related genetic variation. Recent rapid population growth, for \firstRevision{example}, severely affects the distribution of rare genetic variants~\citep{Keinan2012}, which have been linked to complex genetic diseases. \firstRevision{Moreover, ancient and contemporary population structure can lead to the accumulation of private genetic variation in certain sub-populations.}

\firstRevision{Methods to study genetic variation, or perform inference, in populations with varying size or more complex demographic histories have been developed based on the Wright-Fisher diffusion, describing the evolution of population allele frequencies forward in time~\citep{Griffiths2003,Zivkovic2015,Gutenkunst2009,Excoffier2013}, or the Coalescent process, a model for the genealogical relationship in a sample of individuals~\citep{Griffiths1994b,Griffiths1996,Griffiths1998,Zivkovic2008,Bhaskar2015,Kamm2017}.}
\firstRevision{A powerful representation of genetic variation data that has been used in this context is the Site-Frequency-Spectrum. In this representation, however, any linkage information present in the genetic data is ignored.}
\firstRevision{With the increasing availability of full-genomic sequence data, linkage information is more readily available, and}
approaches \firstRevision{based on Coalescent Hidden Markov Models (HMM) that use this linkage information} have proven to be particularly successful for demographic inference \firstRevision{and other} population genetic applications.

In a population-sample of genomic sequences, the genealogical relationships vary along the genome, due to intra-chromosomal recombination. The Coalescent-HMMs approximate the intricate correlation structure between these local \firstRevision{genealogical trees} by a Markov chain, the Sequentially Markovian Coalescent~\citep{Wiuf1999,McVean2005}.
\firstRevision{Due to the Markovian structure of the SMC-approximation, an essential building block is thus the transition or joint distribution of these local genealogies at two neighboring loci.
In a sample of size two, the local genealogies are simple trees with two leaves, that is, one-dimensional objects at each locus.
The transitions can be readily computed, and \cite{Li2011} employed this framework to develop a powerful approach to infer population size history.
\edit{Moreover, \cite{Dutheil2009} used Coalescent-HMMs to explore the divergence patterns between humans and great apes, using up to 4 genomic sequences, one for each species.}
However, due to the increase in complexity of the local genealogies with increasing sample size, these approaches cannot be generalized efficiently to larger sample sizes.}

\firstRevision{For large sample sizes, approaches that use Monte-Carlo Markov Chain techniques~\citep{Rasmussen2014}, suitable composite likelihood frameworks~\citep{Sheehan2013,Steinrucken2015}, or representations of the local genealogical trees by lower-dimensional summaries~\citep{Schiffels2014,Terhorst2017} have been developed.
In the latter, the choice on how to represent the local genealogical trees affects the performance of the inference procedure. \cite{Li2011} observed that using the coalescence time between two lineages lacks information in the more recent past, whereas using the first coalescence time in a large sample is less accurate for ancient times~\citep{Schiffels2014}.
A promising low-dimensional representation is the total branch length of the genealogical tree at each locus.
In expectation, this quantity grows without bound as the sample size increases, thus retaining not only information about ancient events, but also about the more recent dynamics.
However, to implement a Coalescent-HMM inference framework using the tree length, it is crucial to efficiently compute the joint distribution of the total tree length at two neighboring loci.
}


\firstRevision{Thus, in this paper,} we present a novel efficient and accurate method to \firstRevision{numerically} compute the joint distribution of the total branch length of the genealogical trees at two neighboring loci for a sample of arbitrary size $\sampleSize$ in populations of varying size, as well as the single-locus marginal distribution. To our knowledge, no method to compute these distributions has been presented in the literature to date that can be applied to arbitrary sample sizes. Moreover, even computing the marginal distribution of the total tree length at a single locus has only received limited attention~\citep{Pfaffelhuber2011,Wiuf1999}. \edit{We present analytical details and numerical results for the case of at most one recombination event separating the two loci, but our methodology can be readily extended to handle an arbitrary, but fixed, maximal number of recombination events, by suitably augmenting the underlying process.}

The inter-coalescent times $\bigTime^{(\sampleSize)}_k$, that is the time period during which $k$ lineages persist in the genealogical tree for a sample of size $\sampleSize$ can be used to compute the total branch length at a single locus  as
\begin{equation}\label{eq_basic_length}
  \totalTreeLength = \sum_{k=2}^n k \bigTime^{(\sampleSize)}_k,
\end{equation}
since in the period $\bigTime^{(\sampleSize)}_k$, $k$ lineages contribute towards the total length.
In the case of a panmictic population of constant size, formulas for the first two moments of the total tree length can be readily obtained using standard arguments for sums of the independently exponentially distributed random variables $\bigTime^{(\sampleSize)}_k$. \firstRevision{Furthermore, $\totalTreeLength$ is distributed like the maximum of $k-1$ exponential variables with intensity $\frac{1}{2}$~\cite[p.~255]{Wiuf1999}.} However, non-constant population size histories introduce intricate dependencies among the inter-coalescent times, and thus it is not straightforward  to generalize this approach.
\cite{Polanski2003} introduced a method to compute the expected inter-coalescence times under variable population size.
However, the coalescence rates of ancestral lineages in the genealogical process depend on past population sizes, whereas the rate for ancestral recombination is constant along each ancestral lineage. The approach of \cite{Polanski2003} depends on the fact that all rates of the process are rescaled uniformly with the same factor, and thus it cannot be extended to the case when ancestral recombination between two linked loci is taken into account.

\cite{Ferretti2013} used another approach to investigate the correlation between the times to the most recent common ancestor at two neighboring loci. The authors approached the problem using coalescent arguments to quantify the changes recombination induces on the local trees, but it is unclear how to generalize their approach efficiently to the total length of the genealogical trees. 
\edit{Furthermore,} \cite{Li2011} presented analytic formulas for the joint distribution of the local genealogies for a sample of size two under variable population size, \firstRevision{but these cannot readily be extended to an arbitrary sample size $n$.}
\edit{\cite{Eriksson2009} presented similar analytic formulas for a population of constant size and explored more complex demographic scenarios using simulations. Introducing suitable Markov chains, \cite{Hobolth2014} investigated the transitional distribution of the local genealogies for samples of size 4, and discussed approximations for larger sample sizes. These Markov chains are closely related to our methodology, but our focus is on exact computations for large sample sizes.}

\firstRevision{
Although we focus on the total tree length under variable population size in a single panmictic population in this paper, our approach can be extended to compute the transition densities for the coalescence time in a sample of size two~\citep{Li2011}, the coalescence time of two distinguished lineages~\citep{Terhorst2017}, and the time of the first coalescent event amongst the sampled sequences~\citep{Schiffels2014}. Furthermore, our method can be generalized to multiple sub-populations related by a complex demographic history (see discussion in Section~\ref{sec_discussion}).
}

\firstRevision{This article is structured as follows.} In Section~\ref{sec_background}, we introduce the requisite notation and the stochastic processes that are involved in computing the marginal and joint distributions. We further introduce a hyperbolic system of partial differential equations (PDEs) in Section~\ref{sec_cdf} that can be solved to compute the distributions of interest. We provide a proof of the main proposition used to derive these equations in Appendix~\ref{app_proof}. In Section~\ref{sec_cdf}, we also provide the details of our novel numerical algorithm based on the method of characteristics that can be used to efficiently compute the solutions to these PDEs. We demonstrate the accuracy of the method, and study the  properties of the joint distribution function in Section~\ref{sec_empirical}. Finally, we discuss the future applications and extensions of this method in Section~\ref{sec_discussion}.

\section{Background and Notation}
\label{sec_background}

In this section, we will introduce the necessary background and notation for the stochastic processes that we employ to compute the marginal and joint distribution of the length of the genealogical trees. We will also provide some details about computing the distribution of these processes, since our main result extends upon the underlying ideas.

\subsection{Ancestral Process at a Single Locus} 

The genealogical relationship of a sample of $\sampleSize$ haploid individuals in a panmictic population of constant size is commonly modeled using Kingman's coalescent~\citep{Kingman1982,Wakeley2008}, and this process and its extensions have found widespread applications. It is a Markov process that describes the dynamics of the ancestral lineages of the sample backwards in time.
Here we focus on the ancestral process $\ancestralProcess(\time)$ \citep[Chapter~4.1]{Tavare2004}. This coarser process records only the number of ancestral lineages in the coalescent process at time $\time$ before present, which is sufficient to compute the total branch length of the coalescent tree.
The initial number of lineages is equal to the sample size $\sampleSize$. Furthermore, at time $\time$, each pair of lineages coalesces at rate one, thus if there are $\ancestralProcess(\time) = k$ lineages at time $\time$, then coalescence of any two lineages happens at rate ${k \choose 2}$. 
This dynamics is followed until all lineages coalesced into a single lineage, and this time is denoted by $\tMRCA$, the time to the most recent common ancestor.




Variable population size is commonly modeled by a positive, real-valued function $\lambda(t)$, which provides the coalescent rate for each pair of ancestral lineages at time $\time$ in the past~\citep[Chapter~4.1]{Tavare2004}.
If the size of the population changes at different points in the past, the rate of coalescence at a given time is inversely proportional to the relative population size at that time. Intuitively, for two lineages to coalesce, they have to find a common ancestor. If the population consists of a large number of individuals, this happens at a lower rate, whereas in small populations, the ancestral lineages coalesce more quickly. 
In the remainder of this paper, we assume that $\invPopSize(\time)$ is continuous. If $\invPopSize(\time)$ is piece-wise continuous, we can obtain the same results by considering each continuous piece separately. For convenience, we further introduce the cumulative coalescent rate at time $\time$ as
\begin{equation}
	\cumInvPopSize (\time) = \int_0^\time \invPopSize(s) ds.
\end{equation}
These considerations yield the following definition.

\begin{definition}[Ancestral Process with variable population size]
	The \emph{ancestral process with variable population size} $\{ \ancestralProcess(\time) \}_{\time \in \R_+}$ is a time-inhomogeneous Markov chain on $\{1, \ldots, \sampleSize\}$ with initial state $\ancestralProcess(0) = \sampleSize$, and the transition rates at time $\time$ are given by the infinitesimal generator matrix
	\begin{equation}
		\ancestralRateMatrix (\time) = \invPopSize(\time) \ancestralRateMatrix,
	\end{equation}
	with
	\begin{equation}\label{eq_ancestral_rates}
		\ancestralRateMatrix_{k,j} := \begin{cases}
								- {k \choose 2},	& \text{if $j = k$},\\
								{k \choose 2},		& \text{if $j = k-1$},\\
								0,					& \text{otherwise}.\\
							\end{cases}
	\end{equation}
\end{definition}

\begin{remark}
	Note that we do require $\ancestralProcess(0) = \sampleSize$, and thus this definition of the ancestral process does depend on the sample size $\sampleSize$. However, for different sample sizes $\sampleSize'$, the rates of the process are given by equation~\eqref{eq_ancestral_rates} as well, only the initial state changes. The dynamics of the process is essentially the same, independent of the sample size, and we therefore do not include the dependence on the sample size explicitly in the notation for the remainder of this article.
\end{remark}

The ancestral process can be used to formally define the time to the most recent common ancestor as
\begin{equation}
	\tMRCA := \inf \big\{\time \in \R_+: \ancestralProcess(\time) \leq 1 \big\},
\end{equation}
the time when the number of lineages reaches one.
Furthermore, with
\begin{equation}
	\distAncestral_k(\time) := \P \big\{\ancestralProcess(\time) = k\}, 
\end{equation} 
for $k \in \{1,\ldots,\sampleSize\}$, the distribution of the ancestral process can be obtained by solving the Kolmogorov-forward-equation~\citep[Chapter~5]{Stroock2008}, a system of ordinary differential equations (ODEs) given by 
\begin{equation}\label{eq_kolmogorov_ancestral}
	\frac{d}{d\time} \big(\distAncestral_1(\time), \ldots, \distAncestral_\sampleSize(\time)\big) = \big(\distAncestral_1(\time), \ldots, \distAncestral_\sampleSize(\time)\big) \ancestralRateMatrix(t).
\end{equation}
Equivalently, perhaps more familiar to the reader, this system can be expressed as
\begin{equation}\label{eq_anc_proc_ode}
		\frac{d}{d\time} \distAncestral_\numLineages(\time) = \invPopSize(\time){k + 1\choose 2} \distAncestral_{\numLineages+1}(\time) - \invPopSize(\time){k \choose 2} \firstRevision{\distAncestral_{\numLineages}(\time)},
\end{equation}
for all $\numLineages \in \{1,\ldots,\sampleSize\}$. The latter version is more explicit about the influence of the number of ancestral lineages and the coalescent-speed function on the dynamics of the ODEs. The relevant solution is given by
\begin{equation}
	\big(\distAncestral_1(0), \ldots, \distAncestral_\sampleSize(0)\big) = (0,\ldots,0,1)
\end{equation}
and 
\begin{equation}\label{eq_dist_anc_matrix_exp}
	 \big(\distAncestral_1(\time), \ldots, \distAncestral_\sampleSize(\time)\big) = \firstRevision{\big(}e^{\cumInvPopSize(\time) \cdot \ancestralRateMatrix} \firstRevision{\big)}_{\sampleSize,\cdot}
\end{equation}
for $\time \in \R_+$, \firstRevision{where $(\cdot)_{n,\cdot}$ refers to the $n$-th row of the matrix.} In~\cite{Tavare2004}, the authors provide an analytic expression for these probabilities using the spectral decomposition of the rate matrix $\ancestralRateMatrix(\time)$. However, the resulting formulas are numerically unstable, so for practical purposes it can be more efficient to solve the system of ODEs numerically using step-wise solution schemes. Furthermore, note that
\begin{equation}\label{eq_distr_tmrca}
	\P \{\tMRCA \leq \time^*\} = \firstRevision{\big(}e^{\cumInvPopSize(\time^*) \cdot \ancestralRateMatrix} \firstRevision{\big)}_{\sampleSize,1}
\end{equation}
holds for $\time^* \in \R_+$, thus equation~\eqref{eq_dist_anc_matrix_exp} can also be used to compute the cumulative distribution function of the time to the most recent common ancestor.

We can employ the ancestral process to compute the total tree length as follows. If at a given time $\time$ there are $k$ ancestral lineages or branches in the coalescent tree, each branch extends further into the past. Thus, we can say that the total sum of branch lengths in the coalescent tree grows at a rate of $k$. Once all lineages have coalesced into a single common ancestral lineage, the most recent common ancestor is reached, and the coalescent tree stops growing. This motivates the following definition.

\begin{definition}
The \emph{accumulated tree length} $\treeLength (\time) \in \R_+$ by time $\time \in \R_+$ is given by
\begin{equation}\label{def_accu_length}
	\treeLength (\time) := \int_0^\time \1_{\{\ancestralProcess(s) > 1\}} \ancestralProcess(s) ds.
\end{equation}
\end{definition}
With this definition, the \emph{total tree length} or the \emph{total sum of the branch lengths} at a single locus is given by
\begin{equation}\label{eq_def_total_tree_length}
	\totalTreeLength := \treeLength \big( \tMRCA \big).
\end{equation}
Note that
\begin{equation}
  \totalTreeLength = \sum_{k=2}^n k \bigTime^{(\sampleSize)}_k
\end{equation}
holds, which is equal to equation~\eqref{eq_basic_length}. Here $\bigTime^{(\sampleSize)}_k$ is the period of time for which $k$ lineages persist in the ancestral process, the inter-coalescent time.
The main goal of this paper is to study the distribution of $\totalTreeLength$ for populations with arbitrary coalescent-rate function $\invPopSize(\time)$ marginally at a single locus and jointly at two loci, which can be computed using a system of hyperbolic PDEs that is closely related to the ODE~\eqref{eq_anc_proc_ode}. For the two-locus case, we will now introduce the joint ancestral process at two linked loci.


\subsection{Ancestral Process with Recombination}

The joint genealogy of the ancestral lineages for two loci, locus $\aLocus$ and $\bLocus$, separated by a recombination distance $\recoRate$ is commonly modeled by the coalescent with recombination~\citep{Hudson1990}.
The initial state in the coalescent with recombination for a sample of size $\sampleSize$ is comprised of $\sampleSize$ lineages, each ancestral to both loci of one sampled haplotype. As in the single-locus coalescent with variable population size, at time $\time$, each pair of lineages can coalesce at rate $\invPopSize(\time)$. In addition, ancestral recombination events happen at rate $\recoRate/2$ along each active lineage.
At a recombination event, the lineage splits into two new lineages, each ancestral to the respective haplotype of the original lineage at only one of the two loci.
Note that recombination happens along each lineage at a constant rate and, unlike the coalescent rate, is not affected by the population size, and thus it does not scale with $\invPopSize(\time)$.

Again, we do not focus on the exact genealogical relationships,
but only on the number of lineages at time $\time$ that are ancestral to a certain locus, given by the \emph{ancestral process with recombination} $\ancestralRecoProcess(\time)$. The process $\ancestralRecoProcess$ for a sample of size two under constant population size is described in detail by \cite{Simonsen1997}. Here we use an extension of this process to samples of arbitrary size $\sampleSize$ and variable population size. \edit{A similar model has also been introduced by \cite{Hobolth2014}.}

\begin{definition}[Ancestral Process with Recombination]\label{def_anc_proc_reco}
For a sample of size $\sampleSize \in \N$ and $\time \in \R_+$, the \emph{ancestral process with recombination} in a population of variable size
\begin{equation}
	\ancestralRecoProcess(\time) = \big( \numLineagesRV_{\aLocus\bLocus} (\time), \numLineagesRV_{\aLocus} (\time), \numLineagesRV_{\bLocus} (\time) \big)
\end{equation}
is a time-inhomogeneous Markov chain with state space
\begin{equation}
	\recoStates := \big\{ \state \in \N_0^3 \big| s_1 + \max\{\state_2,\state_3\} \leq \sampleSize \big\} \big\backslash \big\{ (0,0,0), (0,1,0), (0,0,1)\big\}.
\end{equation}
The component $\numLineagesRV_{\aLocus\bLocus} (\time)$ gives the number of lineages that are ancestral to both loci, $\numLineagesRV_{\aLocus} (\time)$ is the number ancestral to locus $\aLocus$ only, and $\numLineagesRV_{\bLocus} (\time)$ is the number ancestral to locus $\bLocus$ only. The initial state is
\begin{equation}
	\ancestralRecoProcess(0) = (\sampleSize,0,0),
\end{equation}
all $\sampleSize$ lineages ancestral to both loci. The transition rates are given by the infinitesimal generator matrix
\begin{equation}
	\ancestralRecoMatrix(t) = \invPopSize(\time) \ancestralRecoCoalMatrix + \ancestralRecoOnlyMatrix,
\end{equation}
where all off-diagonal entries of $\ancestralRecoCoalMatrix$ (coalescence) are zero, except
\begin{align}
	\ancestralRecoCoalMatrix_{(\numLineages_{\aLocus\bLocus},\numLineages_{\aLocus},\numLineages_{\bLocus}), (\numLineages_{\aLocus\bLocus}-1,\numLineages_{\aLocus},\numLineages_{\bLocus})} & = {\numLineages_{\aLocus\bLocus} \choose 2},\\
	\ancestralRecoCoalMatrix_{(\numLineages_{\aLocus\bLocus},\numLineages_{\aLocus},\numLineages_{\bLocus}), (\numLineages_{\aLocus\bLocus},\numLineages_{\aLocus}-1,\numLineages_{\bLocus})} & = {\numLineages_{\aLocus} \choose 2} + \numLineages_{\aLocus\bLocus} \numLineages_{\aLocus},\\
	\ancestralRecoCoalMatrix_{(\numLineages_{\aLocus\bLocus},\numLineages_{\aLocus},\numLineages_{\bLocus}), (\numLineages_{\aLocus\bLocus},\numLineages_{\aLocus},\numLineages_{\bLocus}-1)} & = {\numLineages_{\bLocus} \choose 2} + \numLineages_{\aLocus\bLocus} \numLineages_{\bLocus},\\
	\text{and} & \\
	\ancestralRecoCoalMatrix_{(\numLineages_{\aLocus\bLocus},\numLineages_{\aLocus},\numLineages_{\bLocus}), (\numLineages_{\aLocus\bLocus}+1,\numLineages_{\aLocus}-1,\numLineages_{\bLocus}-1)} & = \numLineages_{\aLocus} \numLineages_{\bLocus}, \label{eq_smc_rates}
\end{align}
and all off-diagonal entries of $\ancestralRecoOnlyMatrix$ (recombination) are zero, except
\begin{equation}
\begin{split}
	\ancestralRecoOnlyMatrix_{(\numLineages_{\aLocus\bLocus},\numLineages_{\aLocus},\numLineages_{\bLocus}), (\numLineages_{\aLocus\bLocus}-1,\numLineages_{\aLocus}+1,\numLineages_{\bLocus}+1)} & = \frac{\recoRate}{2} \numLineages_{\aLocus\bLocus}.\\
\end{split}
\end{equation}
The state $(1,0,0)$ is defined to be the absorbing state, so all rates leaving this state are set to zero. Furthermore, the diagonal entries of both matrices are set to minus the sum of the off-diagonal entries in the corresponding row.
\end{definition}


\begin{remark}
i) Two versions of the coalescent with recombination are commonly used in the literature, one version for the infinitely-many-sites (IMS) model~\citep{Hudson1990,Griffiths1997}, and another version for the finitely-many-sites (FMS) model~\citep{Paul2011,Steinrucken2015}. In the IMS version, the chromosome is modeled as the interval $[0,1]$, and whenever recombination occurs, it occurs at a uniformly chosen point in this interval. As a result, recombination always occurs at a novel site, and two neighboring local genealogies are separated by at most one recombination event. In the FMS version, multiple recombination events can occur between two loci. It can be obtained from the IMS version by considering the local genealogies at two fixed loci along the continuous chromosome that are separated by a certain fixed recombination distance. Our definition of the ancestral process with recombination is in line with the FMS version for two loci.
\\
ii) The ancestral process with recombination can be defined for an arbitrary number of loci. However, in the remainder of the paper, we will only use the process for two loci.\\
iii) In the literature, some authors use the `full' coalescent with recombination and others the `reduced' coalescent with recombination. The difference between the two is that the `full' version always keeps track of both ancestral lineages that branch off at a recombination event, whereas in the `reduced' version, lineages that do not leave any descendant ancestral material in the contemporary sample are not traced. Our definition of the ancestral process with recombination is compatible with the `reduced' version. Thus, the number of ancestral lineages is bounded, which is not the case in the `full' version.
\\
iv) Following the ideas of~\cite{Wiuf1999}, the correlation structure between all local genealogies along a chromosome can be approximated using the Sequentially Markovian Coalescent (SMC)~\citep{McVean2005}, or the modified version SMC'~\citep{Marjoram2006}. In the SMC, if a lineage has been hit by a recombination event and branches into two, subsequently, the two resulting branches are not allowed to coalesce with each other, whereas such events are permitted under the SMC'. Thus, under the SMC', the rates for coalescence of lineages with no overlapping ancestral material (equation~\eqref{eq_smc_rates}) are as given in Definition~\ref{def_anc_proc_reco}, whereas under the SMC, these rates have to be set to zero.
\end{remark}

Again, the Kolmogorov-forward-equation can be used to compute the distribution of the ancestral process $\ancestralRecoProcess(\time)$ as the solution of
\begin{equation}\label{eq_kolmogorov_ancestral_reco}
	\frac{d}{d\time} \distVecAncestral (\time) = \distVecAncestral (\time) \ancestralRecoMatrix(\time),
\end{equation}
where the row-vector $\distVecAncestral (\time)$ is defined by 
\begin{equation}
 	\distVecAncestral (\time) := \Big(\P \big\{\ancestralRecoProcess(\time) = \state \big\}\Big)_{\state \in \recoStates}.
\end{equation}

Note that the rate matrix $\ancestralRateMatrix(\time)$ in the ODE~\eqref{eq_kolmogorov_ancestral} for the ancestral process at a single locus is triangular for all $\time$. This simplifies approaches to compute solutions substantially, as the solutions can be obtained sequentially for each state of the corresponding Markov chain. In the ancestral process with recombination for two loci on the other hand, with a positive probability, the underlying Markov chain can transition back to a state it already visited before. Consequently, the rate matrix $\ancestralRecoMatrix(\time)$ in the ODE~\eqref{eq_kolmogorov_ancestral_reco} is not triangular, and it is also not possible to transform it into a triangular matrix by permuting the rows and columns.

Since a triangular rate matrix simplifies analytical and numerical approaches significantly, we introduce an approximation to the full ancestral process with recombination that exhibits this property and compute the distributions of the tree lengths under this approximation. To achieve this, we explicitly account for the number of recombination events that have occurred up to a certain time $\time$. For ease of exposition, we further limit the maximal number of recombination events to one. Since in most organism\firstRevision{s} the per generation recombination probability is very small between loci that are physically close, this approximation is justified. Furthermore, numerical experiments supporting this approximation are provided in Section~\ref{sec_empirical}.
\firstRevision{Note that this limiting the number of recombination events to one yields effectively a first-order approximation to the full ancestral process.}

\begin{definition}[Ancestral Process with Limited Recombination]
For a sample of size $\sampleSize \in \N$ and $\time \in \R_+$, the \emph{ancestral process with limited recombination}
\begin{equation}
	\ancestralRecoLimitedProcess(\time) = \big( \numLineagesRVLimited_{\aLocus\bLocus} (\time), \numLineagesRVLimited_{\aLocus} (\time), \numLineagesRVLimited_{\bLocus} (\time), \numRecoRV(\time) \big)
\end{equation}
is a time-inhomogeneous Markov chain with state space
\begin{equation}\label{def_reco_limited_states}
	\begin{split}
		\recoLimitedStates := & \Big(\{1,\ldots,\sampleSize\} \times \{(0,0,0)\} \\
			& \: \cup \{1,\ldots,\sampleSize\} \times \{0,1\} \times \{0,1\} \times \{1\} \Big)\\
			& \: \qquad \qquad \backslash \big\{(n,1,1,1),(n,1,0,1),(n,0,1,1)\big\}.
	\end{split}
\end{equation}
The components $\numLineagesRVLimited_{\aLocus\bLocus} (\time)$, $\numLineagesRVLimited_{\aLocus} (\time)$, and $\numLineagesRVLimited_{\bLocus} (\time)$ have the same interpretation as before, and $\numRecoRV(\time)$ is the number of recombination events that have happened by time $\time$. \edit{The first line in equation~\eqref{def_reco_limited_states} corresponds to the states that can be reached without recombination, and the second line to those that require one recombination event.} The initial state is
\begin{equation}
	\ancestralRecoLimitedProcess(0) = (\sampleSize,0,0,0),
\end{equation}
and the transition rates are given by the infinitesimal generator matrix
\begin{equation}\label{def_limited_reco_matrix}
	\ancestralRecoLimitedMatrix(t) = \invPopSize(\time) \ancestralRecoCoalLimitedMatrix + \ancestralRecoOnlyLimitedMatrix,
\end{equation}
where the entries of $\ancestralRecoCoalLimitedMatrix$ (coalescence) are given by
\begin{equation}
\begin{split}
	\ancestralRecoCoalLimitedMatrix_{(\numLineages_{\aLocus\bLocus},\numLineages_{\aLocus},\numLineages_{\bLocus},\numReco), (\numLineages_{\aLocus\bLocus},\numLineages_{\aLocus},\numLineages_{\bLocus},\numReco)} & = \ancestralRecoCoalMatrix_{(\numLineages_{\aLocus\bLocus},\numLineages_{\aLocus},\numLineages_{\bLocus}), (\numLineages_{\aLocus\bLocus},\numLineages_{\aLocus},\numLineages_{\bLocus})},
\end{split}
\end{equation}
and all off-diagonal entries of $\ancestralRecoOnlyMatrix$ (recombination) are zero, except
\begin{equation}
\begin{split}
	\ancestralRecoOnlyLimitedMatrix_{(\numLineages_{\aLocus\bLocus},\numLineages_{\aLocus},\numLineages_{\bLocus},0), (\numLineages_{\aLocus\bLocus}-1,\numLineages_{\aLocus}+1,\numLineages_{\bLocus}+1,1)} & = \frac{\recoRate}{2} \numLineages_{\aLocus\bLocus},\\
\end{split}
\end{equation}
allowing at most one recombination event.
The diagonal entries are set to minus the sum of the off-diagonal entries in the corresponding row.
The states $(1,0,0,0)$ and $(1,0,0,1)$ are absorbing states, so all rates leaving these states are set to zero. 
\end{definition}

For later convenience, define the relation $\prec$ on $\recoLimitedStates$ as
\begin{equation}\label{def_rel}
	\state \prec \state' :\Leftrightarrow \ancestralRecoLimitedMatrix_{\state',\state}(t) > 0,
\end{equation}
that is, $\state \prec \state'$ holds if $\state$ can be reached from $\state'$ in one step. Note that embedded into the ancestral process with recombination (limited or not) is a single-locus ancestral process for locus $\aLocus$ and for locus $\bLocus$. Thus, we can define the branch length of the genealogical tree at locus $\aLocus$ and $\bLocus$ similar to the one-locus case as follows, and study their joint distribution.

\begin{definition}
For a given time $\time \in \R_+$, the \emph{accumulated tree lengths} $\treeLength^\aLocus (\time) \in \R^+$ at locus $\aLocus$ and $\treeLength^\bLocus (\time) \in \R_+$ at locus $\bLocus$ are given by
\begin{equation}
	\treeLength^\aLocus (\time) := \int_0^\time \1_{\{\numLineagesRVLimited_{\aLocus\bLocus}(s) + \numLineagesRVLimited_{\aLocus}(s) > 1\}} \big( \numLineagesRVLimited_{\aLocus\bLocus}(s) + \numLineagesRVLimited_{\aLocus}(s) \big) ds,
\end{equation}
and
\begin{equation}
	\treeLength^\bLocus (\time) := \int_0^\time \1_{\{\numLineagesRVLimited_{\aLocus\bLocus}(s) + \numLineagesRVLimited_{\bLocus}(s) > 1\}} \big( \numLineagesRVLimited_{\aLocus\bLocus}(s) + \numLineagesRVLimited_{\bLocus}(s) \big) ds.
\end{equation}
\end{definition}
\begin{remark}
This definition of the accumulated tree length can be applied to $\ancestralRecoLimitedProcess$, as well as $\ancestralRecoProcess$. We will not distinguish these cases in our notation, since in the remainder of the paper, we will use $\ancestralRecoLimitedProcess$.
\end{remark}
The \emph{total tree length} at locus $\aLocus$ is thus given by
\begin{equation}\label{def_total_length_a}
	\totalTreeLength^\aLocus := \treeLength^\aLocus \big( \tMRCA^\aLocus \big),
\end{equation}
and at locus $\bLocus$ by
\begin{equation}\label{def_total_length_b}
	\totalTreeLength^\bLocus := \treeLength^\bLocus \big( \tMRCA^\bLocus \big).
\end{equation}
Here, $\tMRCA^\aLocus$ is the time to the most recent common ancestor at locus $\aLocus$
\begin{equation}
	\tMRCA^\aLocus := \inf \big\{ \time \in \RR_+:  \numLineagesRVLimited_{\aLocus\bLocus}(\time) + \numLineagesRVLimited_{\aLocus}(\time) \leq 1 \big\},
\end{equation}
and thus its distribution is given by
\begin{equation}
	\P \{\tMRCA^\aLocus \leq \time^*\} = \P \big\{ \numLineagesRVLimited_{\aLocus\bLocus}(\time^*) + \numLineagesRVLimited_{\aLocus}(\time^*) \leq 1 \big\}
\end{equation}
for $\time^* \in \R_+$. Similar relations hold for locus $\bLocus$. We will now study the joint distribution of $\totalTreeLength^\aLocus$ and $\totalTreeLength^\bLocus$, and also the marginal $\totalTreeLength$. Note that these quantities are computed under the ancestral process with limited recombination, but we will demonstrate in Section~\ref{sec_empirical} that they give an accurate approximation to the respective quantities under the true ancestral process.

\section{Marginal and Joint Distribution of the Total Tree Length}
\label{sec_cdf}

The main goal of this paper is to present a method to compute the marginal and joint cumulative distribution function (CDF) of the total tree length at two linked loci. Thus, we aim at computing
\begin{equation}\label{eq_marginal_cdf}
	\P \{ \totalTreeLength \leq \treeDimA\}
\end{equation}
and
\begin{equation}\label{eq_joint_cdf}
	\P \{ \totalTreeLength^\aLocus \leq \treeDimA , \totalTreeLength^\bLocus \leq \treeDimB \}
\end{equation}
for $\treeDimA, \treeDimB \in \R_+$.

Note that equation~\eqref{eq_distr_tmrca} can be used to compute \firstRevision{the distribution of the time until the ancestral process reaches the absorbing state, which yields the marginal distribution of the $\tMRCA$}.
\firstRevision{The latter is equal to the sum of the inter-coalescence times, and in a population of constant size, equation~\eqref{eq_distr_tmrca} can also be obtained by convolving the densities of independent exponential variables.}
The total branch length is a more general linear combination of \firstRevision{the inter-coalescence times, but~\cite{Wiuf1999} used a similar convolution approach to derive its marginal density.}
\firstRevision{In a population with variable population size, the inter-coalescence times are not mutually independent. However, \cite{Polanski2003} derived formulas for the density of $\tMRCA$ using a uniform rescaling of time by the coalescent-rate function $\invPopSize(\time)$.}

\firstRevision{The two main difficulties in extending these considerations to the total tree length in a two-locus model with variable population size are as follows:}
\firstRevision{First, in a model that includes recombination, only the coalescence rates scale with $\invPopSize(\time)$, while the recombination rate is constant along each lineage.}
\firstRevision{The approach of~\cite{Polanski2003}, however, relies on a uniform rescaling of all rates,}
\firstRevision{and therefore it cannot be applied.} 
\firstRevision{Second, note that, similar to equation~\eqref{def_accu_length}, we can define
\begin{equation}
	T (\time) := \int_0^\time \1_{\{\ancestralProcess(s) > 1\}} ds = \begin{cases}
		t,	& \text{if $t < \tMRCA$},\\
		\tMRCA,	& \text{otherwise}.
	\end{cases}
\end{equation}
With this definition, the quantity $\time$ is not only the time elapsed in the ancestral process,
but it can also be interpreted as the amount accumulated towards $\tMRCA$.}
\firstRevision{The absorption time of the ancestral process can thus be used in equation~\eqref{eq_distr_tmrca} to compute the distribution of $\tMRCA$.}
\firstRevision{However, when the accumulated tree length $\treeLength (\time)$ defined in equation~\eqref{def_accu_length} is considered, the quantity $\time$ only gives the elapsed time, and it cannot be used as the amount accumulated towards $\totalTreeLength$.}
\firstRevision{Thus, our approach to compute the distribution of $\totalTreeLength$, and the joint distribution of $\totalTreeLength^\aLocus$ and $\totalTreeLength^\bLocus$ has to explicitly account for both, the time that has elapsed in the ancestral process, as well as the amount accumulated towards the total tree length.}



To this end, with $\time \in \R_+$, we introduce the time-dependent cumulative distribution functions
\begin{equation}\label{eq_temp_cdf}
	\eff_\numLineages (\time, \treeDimA) := \P \big\{ \ancestralProcess(\time) = \numLineages, \treeLength(\time) \leq \treeDimA \big\}
\end{equation}
for $\numLineages \in \{1,\ldots,\sampleSize\}$ and
\begin{equation}\label{eq_temp_joint_cdf}
	\eff_\state (\time, \treeDimA, \treeDimB) := \P \big\{ \ancestralRecoLimitedProcess(\time) = \state, \treeLength^\aLocus(\time) \leq \treeDimA, \treeLength^\bLocus(\time) \leq \treeDimB \big\}
\end{equation}
for $\state \in \recoLimitedStates$.

We will show that the CDFs~\eqref{eq_marginal_cdf} and~\eqref{eq_joint_cdf} can be computed from the time-dependent CDFs~\eqref{eq_temp_cdf} and~\eqref{eq_temp_joint_cdf}. Furthermore, we will present numerical schemes, to efficiently and accurately compute the time-dependent CDFs~\eqref{eq_temp_cdf} and~\eqref{eq_temp_joint_cdf}.

\subsection{Distribution of the Total Tree length at a Single Locus}

\label{sec_cdf_marginal}

The following lemma shows that the CDF~\eqref{eq_marginal_cdf} can be computed from the time-dependent CDF~\eqref{eq_temp_cdf}.
\begin{lemma}\label{lem_cdf}
With definition~\eqref{eq_temp_cdf}, the relation
\begin{equation}
	\P \{ \totalTreeLength \leq \treeDimA\} = \P \big\{ \ancestralProcess(\bar\time) = 1, \treeLength(\bar\time) \leq \treeDimA \big\} = \eff_1 (\bar\time, \treeDimA )
\end{equation}
holds for $\treeDimA \in \R_+$ and $\bar\time \geq \treeDimA/2$.
\end{lemma}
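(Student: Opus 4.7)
The plan is to unpack the event $\{\totalTreeLength\leq \treeDimA\}$ into a time-indexed event at the fixed horizon $\bar\time$, and then show that the two events coincide as soon as $\bar\time\geq \treeDimA/2$. I would organize the argument around two trivial observations about the sample path of $\ancestralProcess$ and $\treeLength$.

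First, I would note that after $\tMRCA$ the accumulated tree length is frozen: the integrand in equation~\eqref{def_accu_length} contains the indicator $\1_{\{\ancestralProcess(s)>1\}}$, which vanishes for $s\geq \tMRCA$. Hence on the event $\{\tMRCA\leq \bar\time\}$ we have $\treeLength(\bar\time)=\treeLength(\tMRCA)=\totalTreeLength$, and moreover $\{\tMRCA\leq \bar\time\}=\{\ancestralProcess(\bar\time)=1\}$ from the definition of $\tMRCA$ (and the fact that $\ancestralProcess$ is non-increasing with absorbing state $1$). This immediately gives
\begin{equation}
\{\ancestralProcess(\bar\time)=1,\ \treeLength(\bar\time)\leq \treeDimA\}=\{\tMRCA\leq \bar\time,\ \totalTreeLength\leq \treeDimA\}.
\end{equation}

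Second, I would establish the deterministic lower bound $\totalTreeLength\geq 2\tMRCA$. This follows from the representation $\totalTreeLength=\sum_{k=2}^{\sampleSize} k\,\bigTime^{(\sampleSize)}_k$ and $\tMRCA=\sum_{k=2}^{\sampleSize}\bigTime^{(\sampleSize)}_k$ together with $k\geq 2$; equivalently, before MRCA there are always at least two lineages, so $\treeLength$ accumulates at rate $\geq 2$. Combining this with the hypothesis $\bar\time\geq \treeDimA/2$, on $\{\tMRCA>\bar\time\}$ we get $\totalTreeLength\geq 2\tMRCA>2\bar\time\geq \treeDimA$, so
\begin{equation}
\{\totalTreeLength\leq \treeDimA\}\cap\{\tMRCA>\bar\time\}=\emptyset.
\end{equation}

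Consequently $\{\totalTreeLength\leq \treeDimA\}=\{\tMRCA\leq \bar\time,\ \totalTreeLength\leq \treeDimA\}$, and by the first step this set equals $\{\ancestralProcess(\bar\time)=1,\ \treeLength(\bar\time)\leq \treeDimA\}$. Taking probabilities yields the claim. There is no real obstacle here; the only subtlety is correctly identifying why the threshold is exactly $\treeDimA/2$, which is explained by the minimum growth rate $2$ of $\treeLength$ while $\ancestralProcess>1$.
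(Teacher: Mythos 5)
Your proposal is correct and follows essentially the same route as the paper's proof: both rest on the key inequality $2\,\tMRCA \leq \totalTreeLength$ (coming from $\ancestralProcess(s)\geq 2$ before absorption) and on the observation that $\treeLength$ is frozen once $\ancestralProcess$ hits $1$, so that $\treeLength(\bar\time)=\totalTreeLength$ on $\{\ancestralProcess(\bar\time)=1\}$. The only difference is cosmetic — you phrase the argument as showing $\{\totalTreeLength\leq \treeDimA\}\cap\{\tMRCA>\bar\time\}=\emptyset$ rather than directly asserting the event identities in the paper's order.
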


\begin{proof}
First, observe that
\begin{equation}
	2 \tMRCA \leq \int_0^{\tMRCA} \1_{\{\ancestralProcess(s) > 1\}} \ancestralProcess(s) \, ds = \totalTreeLength,
\end{equation}
since $\ancestralProcess(s) \geq 2$ holds for $s < \tMRCA$.
Thus, on the event $\big\{ \totalTreeLength \leq \treeDimA \big\}$, the relation $\tMRCA \leq x/2 \leq \bar\time$ holds, which implies $\ancestralProcess(\bar\time) = 1$, and
therefore
\begin{equation}
	\big\{ \totalTreeLength \leq \treeDimA  \big\} = \big\{ \ancestralProcess(\bar\time) = 1, \totalTreeLength \leq \treeDimA \big\}.
\end{equation}
On the event  $\big\{ \ancestralProcess(\bar\time) = 1 \big\}$, $\bar\time \geq \tMRCA$ and $\treeLength(\bar\time) = \totalTreeLength$ hold, and thus
\begin{equation}
	\big\{ \ancestralProcess(\bar\time) = 1, \totalTreeLength \leq \treeDimA \big\} = \big\{ \ancestralProcess(\bar\time) = 1, \treeLength(\bar\time) \leq \treeDimA \big\},
\end{equation}
which proves the statement of the lemma.
\end{proof}

Lemma~\ref{lem_cdf} shows that the CDF of $\totalTreeLength$ can be computed from the time-dependent CDF $\eff_1 (\time, \treeDimA)$. Due to the structure of the underlying Markov chain, it is necessary to compute the time-dependent CDFs for all states in order to compute it for the absorbing state. Thus, in the remainder of this section, we focus on computing the time-dependent CDFs for all $\numLineages \in \{1,\ldots,\sampleSize\}$. Proposition~\ref{prop_app} derived in Appendix~\ref{app_proof} can be applied to show that the time-dependent CDFs solve a certain system of linear hyperbolic PDEs. This yields the following corollary.

\begin{corollary}\label{cor_marginal_pde}
The row-vector
\begin{equation}
	\vecEff (\time,\treeDimA) := \big( \eff_1 (\time, \treeDimA), \ldots, \eff_\sampleSize (\time, \treeDimA) \big)
\end{equation}
can be obtained for all points in ${\mc U}=\Big\{(x,t): 0 < \treeDimA <\sampleSize \time, \, t>0\Big\}$ as the strong solution of
\begin{equation}\label{eq_marginal_pde}
	\begin{split}
		\partial_\time \vecEff (\time,\treeDimA) + \edit{\partial_\treeDimA \vecEff (\time, \treeDimA)\stateFunctionMatrix} = \vecEff (\time, \treeDimA) \ancestralRateMatrix(\time),
	\end{split}
\end{equation}
with
\begin{equation}
	\stateFunctionMatrix = \diag(0,2,3,\ldots,\sampleSize),
\end{equation}
boundary conditions 
\begin{equation}\label{eq_marginal_boundary}
\begin{aligned}
 	\vecEff (\time, \treeDimA) &= \big( \firstRevision{\P} \big\{ \ancestralProcess(\time) = 1 \big\}, \ldots, \P \big\{ \ancestralProcess(\time) = \sampleSize-1\big\}, 0\big), \quad x=nt\\[2pt]
 \vecEff (\time, 0) &= \big(  0, \, 0, \, \dots, \,0 \big), \quad t>0\,,
\end{aligned}
\end{equation}
and matrix $\ancestralRateMatrix(\time)$ as defined in equation~\eqref{eq_ancestral_rates}.
\end{corollary}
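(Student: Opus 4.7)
The plan is to specialize Proposition~\ref{prop_app} from the appendix, which treats the general situation of a time-inhomogeneous Markov chain coupled with an additive functional that accumulates at state-dependent deterministic rates. In the present setting the chain is the single-locus ancestral process $\ancestralProcess(\time)$ with generator $\ancestralRateMatrix(\time) = \invPopSize(\time)\,\ancestralRateMatrix$, and the functional is $\treeLength(\time)$ as defined in~\eqref{def_accu_length}. The proposition then yields a first-order linear hyperbolic system for $\vecEff$ directly, and what remains is to identify the transport matrix and verify the boundary data.

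Identifying $\stateFunctionMatrix$ is immediate: since $\treeLength$ grows at rate $\numLineages$ while $\ancestralProcess(\time) = \numLineages$ for $\numLineages \geq 2$ and stops growing once the chain is absorbed at $1$, the state-dependent drift in the $\numLineages$-th component is $\stateFunctionMatrix_{\numLineages\numLineages} = \numLineages \cdot \1_{\{\numLineages > 1\}}$, giving $\stateFunctionMatrix = \diag(0, 2, 3, \ldots, \sampleSize)$. As a sanity check, for $\numLineages \geq 2$ one can write $\eff_\numLineages(\time, \treeDimA) = \int_0^\treeDimA f_\numLineages(\time, y)\,dy$ with $f_\numLineages$ the sub-density of $(\ancestralProcess(\time),\treeLength(\time))$ on $\{\numLineages\} \times \R_+$. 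The Kolmogorov forward equation for the joint Markov process is $\partial_\time f_\numLineages + \numLineages\,\partial_y f_\numLineages = (f\,\ancestralRateMatrix(\time))_\numLineages$; integrating in $y$ and using $f_\numLineages(\time,0) = 0$ reproduces $\partial_\time \eff_\numLineages + \numLineages\,\partial_\treeDimA \eff_\numLineages = (\vecEff\,\ancestralRateMatrix(\time))_\numLineages$. For $\numLineages = 1$, the component is the mixed CDF $\eff_1(\time,\treeDimA) = \P\{\tMRCA \leq \time, \totalTreeLength \leq \treeDimA\}$, and its identity $\partial_\time \eff_1 = \invPopSize(\time)\,\eff_2$ is consistent with the vanishing transport speed $\stateFunctionMatrix_{11} = 0$.

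Verifying the boundary conditions reduces to pathwise reasoning on $\treeLength$. At $\treeDimA = 0$: since the chain starts at $\ancestralProcess(0) = \sampleSize \geq 2$, the functional grows immediately at rate $\sampleSize$, so $\treeLength(\time) > 0$ almost surely for every $\time > 0$ and $\eff_\numLineages(\time,0) = 0$. At $\treeDimA = \sampleSize \time$: the pathwise bound $\treeLength(\time) \leq \sampleSize \time$ holds, with equality precisely on $\{\ancestralProcess(s) = \sampleSize \text{ for all } s \in [0,\time]\} = \{\ancestralProcess(\time) = \sampleSize\}$. Hence for $\numLineages < \sampleSize$ the one-sided limit from inside $\mathcal{U}$ is $\eff_\numLineages(\time, (\sampleSize \time)^-) = \P\{\ancestralProcess(\time) = \numLineages, \treeLength(\time) < \sampleSize \time\} = \P\{\ancestralProcess(\time) = \numLineages\}$, while $\eff_\sampleSize(\time, (\sampleSize \time)^-) = \P\{\ancestralProcess(\time) = \sampleSize, \treeLength(\time) < \sampleSize \time\} = 0$.

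The main subtlety, and the step that warrants the most care, is the interpretation of the upper boundary $\treeDimA = \sampleSize \time$ as a one-sided limit from within $\mathcal{U}$. The $\sampleSize$-th component of $\vecEff$ carries a point mass of size $\P\{\ancestralProcess(\time) = \sampleSize\}$ supported on this line, which is itself the characteristic curve for the $\sampleSize$-th equation (as $\stateFunctionMatrix_{\sampleSize\sampleSize} = \sampleSize$); the zero boundary value records the CDF just below this mass. Once the PDE, the boundary data at the two characteristic-inflow boundaries, and the lower-triangular coupling of $\ancestralRateMatrix$ are in place, classical well-posedness for linear hyperbolic systems with non-negative characteristic speeds yields a unique strong solution on $\mathcal{U}$, completing the argument.
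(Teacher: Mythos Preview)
Your proposal is correct and takes essentially the same approach as the paper: define the state function $\stateFunction(\numLineages) = \numLineages \cdot \1_{\{\numLineages > 1\}}$ on $\{1,\ldots,\sampleSize\}$ and invoke Proposition~\ref{prop_app} (together with Remark~\ref{rem_zero} for the vanishing at $\treeDimA = 0$). Your explicit pathwise verification of the boundary values, the Kolmogorov-forward sanity check, and the closing appeal to well-posedness go beyond the paper's one-line invocation, but the core argument is identical.
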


\begin{proof}
Define the function
\begin{equation}
	\stateFunction (\numLineages) :=  \numLineages \cdot \1_{\{\numLineages > 1\}}
\end{equation}
on the state space $\ancStates = \{1,\ldots,\sampleSize\}$ of the ancestral process. This function and the generator $\ancestralRateMatrix(\time)$ satisfy the requirements of Proposition~\ref{prop_app}, and thus, the statement of the corollary follows from \firstRevision{Proposition~\ref{prop_app} and Remark~\ref{rem_zero}.}
\end{proof}

\begin{remark}
The $\sampleSize$-th component of the boundary condition~\eqref{eq_marginal_boundary} is equal to 0 and not $\P \big\{ \ancestralProcess(\time) = \sampleSize\big\}$. This holds for technical reasons that will be detailed in the proof of Proposition~\ref{prop_app}.
\end{remark}



\firstRevision{Note that the process $\big(\ancestralProcess(\time), \treeLength(\time) \big)_{\time \in \R_+}$ is a piecewise-deterministic Markov process (see Remark~\ref{rem_generator_d}).} The right-hand side of equation~\eqref{eq_marginal_pde} is essentially equal to the right-hand side of equation~\eqref{eq_anc_proc_ode}, because the only stochastic element in the underlying dynamics is the ancestral process $\ancestralProcess(\time)$. Given a certain number of lineages $\big\{ \ancestralProcess(\time) = \numLineages\big\}$, the accumulation towards the total tree length happens deterministically at rate $\numLineages$, and is captured by the term $\stateFunctionMatrix\partial_\treeDimA \vecEff (\time, \treeDimA)$.

%

\edit{To derive} a numerical scheme for \edit{the efficient and accurate computation of the} time-dependent CDF $\vecEff (\time,\treeDimA)$, \edit{note that} the system of PDEs introduced in Corollary~\ref{cor_marginal_pde} can be solved using the method of characteristics~\cite[Chapter~3]{Renardy2004}.
Due to the triangular structure of the matrix $\ancestralRateMatrix(\time)$, for a given component with $\numLineages \in \{1,\ldots,\sampleSize\}$, the right-side of equation~\eqref{eq_marginal_pde} does only depend on $\eff_\ell$ with $\ell \geq \numLineages$. Thus, the system of PDEs \eqref{eq_marginal_cdf} can be solved separately for each $\numLineages$, starting at $\numLineages = \sampleSize$, and decreasing it step-by-step.

Furthermore, note that \firstRevision{for $\numLineages \in \{ 1,\ldots,\sampleSize \}$,
\begin{equation}\label{eq_eff_regions}
	\P \big\{ \ancestralProcess(\time) = \numLineages, \treeLength(\time) \leq \treeDimA \big\} = 0, \qquad \text{if $\treeDimA < \stateFunction (\numLineages) \time$},
\end{equation}
since if the ancestral process has $\numLineages$ lineages at time $\time$, it must have accumulated at least $\stateFunction (\numLineages) \time$ towards the total tree length. It can be shown that the solution to equation~\eqref{eq_marginal_pde} exhibits this property. Moreover,
\begin{equation}
	\P \big\{ \ancestralProcess(\time) = \numLineages, \treeLength(\time) \leq \treeDimA \big\} = \P \big\{ \ancestralProcess(\time) = \numLineages \big\}, \qquad \text{if $\treeDimA \geq \sampleSize \cdot \time$},
\end{equation}
since the process can have accumulated at most $\sampleSize \time$.
Thus, we only have to use equation~\eqref{eq_marginal_pde} to compute the solution
\begin{equation}
	\eff_\numLineages (\time, \treeDimA) = \P \big\{ \ancestralProcess(\time) = \numLineages, \treeLength(\time) \leq \treeDimA \big\}
\end{equation}
when $\stateFunction (\numLineages) \time \leq \treeDimA  < \sampleSize \cdot \time$.} 
Note that $\stateFunction (1) = 0$.
Moreover, for $\numLineages = \sampleSize$, the region $\stateFunction (\numLineages) \time \leq \treeDimA  < \sampleSize \cdot \time$ is empty, and thus $\eff_\sampleSize (\time, \treeDimA)$ has a discontinuity along the line $\sampleSize \cdot \time$.
See Figure~\ref{fig_1locus_regions} for a visualization of the different regions for different values of $\numLineages$.
To devise an accurate and efficient numerical scheme for computing the time-dependent CDFs in the interior region, we use the method of characteristics to solve the respective PDE
\begin{equation}\label{eq_marginal_pde_comp}
  \partial_\time \eff_\numLineages (\time,\treeDimA) + \stateFunction (\numLineages) \partial_\treeDimA \eff_\numLineages (\time, \treeDimA) = \eff_\numLineages (\time, \treeDimA) \ancestralRateMatrix_{\numLineages,\numLineages} (\time) + \eff_{\numLineages+1} (\time, \treeDimA) \ancestralRateMatrix_{\numLineages+1,\numLineages} (\time).
\end{equation}

\begin{figure}
\newcommand{\xAx}{$\treeDimA$}
\newcommand{\tAx}{$\time$}
\newcommand{\zero}{$0$}
\newcommand{\nT}{$\treeDimA = \sampleSize \cdot \time$}
\newcommand{\kT}{$\treeDimA = \numLineages \cdot \time$}
\newcommand{\twoT}{$\treeDimA=2 \cdot \time$}
\begin{subfigure}[t]{0.32\textwidth}
	\def\svgwidth{\textwidth}
	\begin{center}
	\newcommand{\pT}{$\P \big\{ \ancestralProcess(\time) = \sampleSize \big\}$}
\begingroup%
  \makeatletter%
  \providecommand\color[2][]{%
    \errmessage{(Inkscape) Color is used for the text in Inkscape, but the package 'color.sty' is not loaded}%
    \renewcommand\color[2][]{}%
  }%
  \providecommand\transparent[1]{%
    \errmessage{(Inkscape) Transparency is used (non-zero) for the text in Inkscape, but the package 'transparent.sty' is not loaded}%
    \renewcommand\transparent[1]{}%
  }%
  \providecommand\rotatebox[2]{#2}%
  \ifx\svgwidth\undefined%
    \setlength{\unitlength}{277.05754798bp}%
    \ifx\svgscale\undefined%
      \relax%
    \else%
      \setlength{\unitlength}{\unitlength * \real{\svgscale}}%
    \fi%
  \else%
    \setlength{\unitlength}{\svgwidth}%
  \fi%
  \global\let\svgwidth\undefined%
  \global\let\svgscale\undefined%
  \makeatother%
  \begin{picture}(1,0.89302092)%
    \put(0,0){\includegraphics[width=\unitlength,page=1]{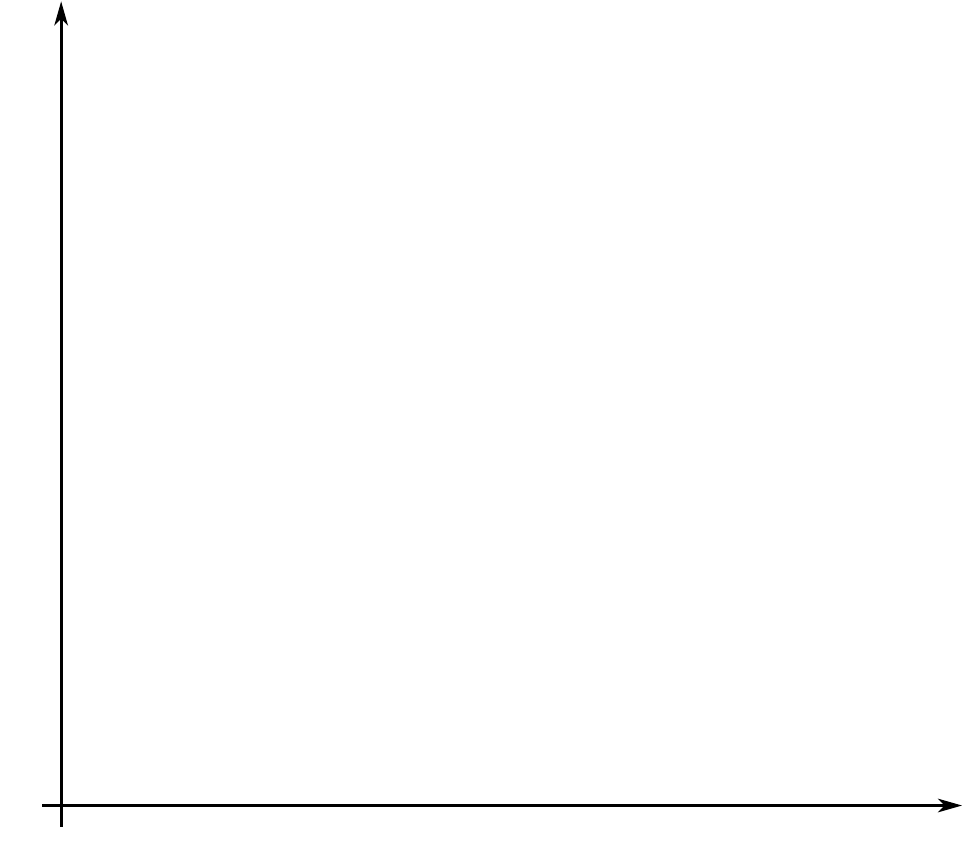}}%
    \put(-0.00148041,0.49164306){\color[rgb]{0,0,0}\makebox(0,0)[lb]{\smash{\tAx}}}%
    \put(0.45089941,0.00642915){\color[rgb]{0,0,0}\makebox(0,0)[lb]{\smash{\xAx}}}%
    \put(0.15539323,0.71600884){\color[rgb]{0,0,0}\makebox(0,0)[lb]{\smash{}}}%
    \put(0.51291918,0.13411713){\color[rgb]{0,0,0}\makebox(0,0)[lb]{\smash{\pT}}}%
    \put(0.27943281,0.54819058){\color[rgb]{0,0,0}\makebox(0,0)[lb]{\smash{\zero}}}%
    \put(0,0){\includegraphics[width=\unitlength,page=2]{F_n.pdf}}%
    \put(0.74458143,0.31287989){\color[rgb]{0,0,0}\makebox(0,0)[lb]{\smash{\nT}}}%
  \end{picture}%
\endgroup%
	\end{center}
	\caption{The two only regions for the initial state $\numLineages = \sampleSize$. The function has a discontinuity at $\treeDimA = \sampleSize \time$.}
\end{subfigure}
\hfill
\begin{subfigure}[t]{0.32\textwidth}
	\def\svgwidth{\textwidth}
	\begin{center}
	\newcommand{\pT}{$\P \big\{ \ancestralProcess(\time) = \numLineages \big\}$}
\begingroup%
  \makeatletter%
  \providecommand\color[2][]{%
    \errmessage{(Inkscape) Color is used for the text in Inkscape, but the package 'color.sty' is not loaded}%
    \renewcommand\color[2][]{}%
  }%
  \providecommand\transparent[1]{%
    \errmessage{(Inkscape) Transparency is used (non-zero) for the text in Inkscape, but the package 'transparent.sty' is not loaded}%
    \renewcommand\transparent[1]{}%
  }%
  \providecommand\rotatebox[2]{#2}%
  \ifx\svgwidth\undefined%
    \setlength{\unitlength}{277.05754798bp}%
    \ifx\svgscale\undefined%
      \relax%
    \else%
      \setlength{\unitlength}{\unitlength * \real{\svgscale}}%
    \fi%
  \else%
    \setlength{\unitlength}{\svgwidth}%
  \fi%
  \global\let\svgwidth\undefined%
  \global\let\svgscale\undefined%
  \makeatother%
  \begin{picture}(1,0.89302092)%
    \put(0,0){\includegraphics[width=\unitlength,page=1]{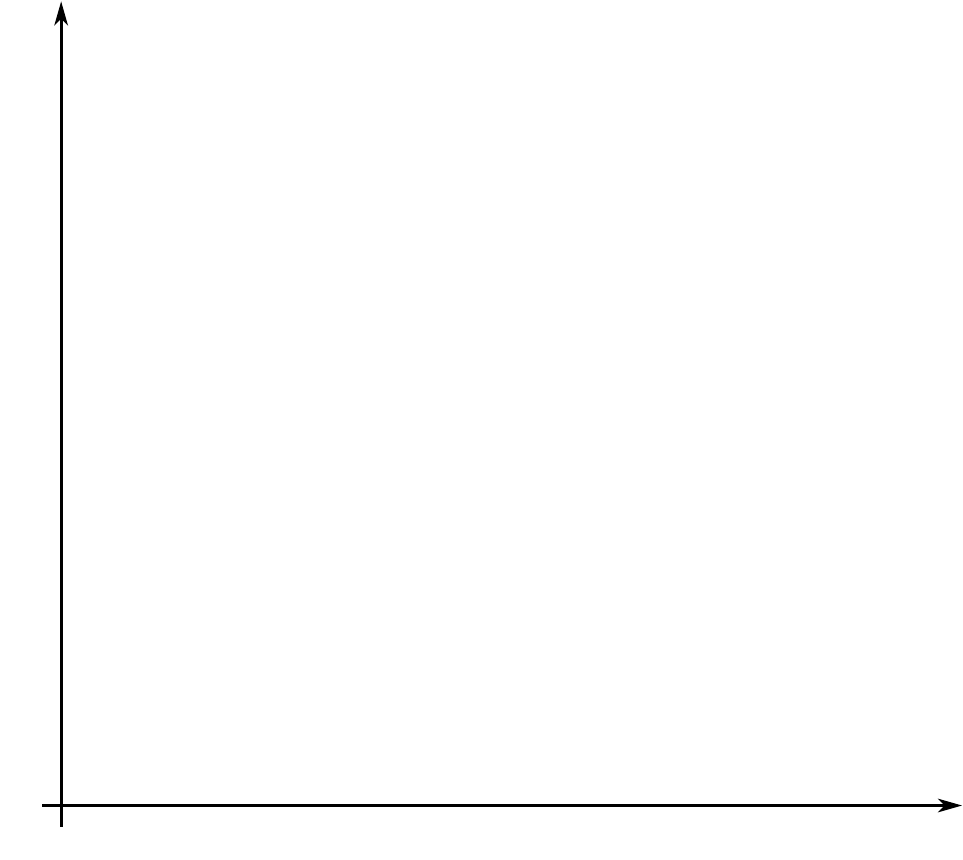}}%
    \put(-0.00148041,0.49164306){\color[rgb]{0,0,0}\makebox(0,0)[lb]{\smash{\tAx}}}%
    \put(0.45089941,0.00642915){\color[rgb]{0,0,0}\makebox(0,0)[lb]{\smash{\xAx}}}%
    \put(0.15539323,0.71600884){\color[rgb]{0,0,0}\makebox(0,0)[lb]{\smash{}}}%
    \put(0.51291918,0.13411713){\color[rgb]{0,0,0}\makebox(0,0)[lb]{\smash{\pT}}}%
    \put(0.27943281,0.54819058){\color[rgb]{0,0,0}\makebox(0,0)[lb]{\smash{\zero}}}%
    \put(0,0){\includegraphics[width=\unitlength,page=2]{F_k.pdf}}%
    \put(0.74458143,0.31287989){\color[rgb]{0,0,0}\makebox(0,0)[lb]{\smash{\nT}}}%
    \put(0,0){\includegraphics[width=\unitlength,page=3]{F_k.pdf}}%
    \put(0.737285,0.86011372){\color[rgb]{0,0,0}\makebox(0,0)[lb]{\smash{\kT}}}%
    \put(0,0){\includegraphics[width=\unitlength,page=4]{F_k.pdf}}%
  \end{picture}%
\endgroup%
	\end{center}
	\caption{The three regions and the characteristics in the interior for an intermediate state with $1 < \numLineages < \sampleSize$.}
\end{subfigure}
\hfill
\begin{subfigure}[t]{0.32\textwidth}
	\def\svgwidth{\textwidth}
	\begin{center}
	\newcommand{\pT}{$\P \big\{ \ancestralProcess(\time) = 1 \big\}$}
\begingroup%
  \makeatletter%
  \providecommand\color[2][]{%
    \errmessage{(Inkscape) Color is used for the text in Inkscape, but the package 'color.sty' is not loaded}%
    \renewcommand\color[2][]{}%
  }%
  \providecommand\transparent[1]{%
    \errmessage{(Inkscape) Transparency is used (non-zero) for the text in Inkscape, but the package 'transparent.sty' is not loaded}%
    \renewcommand\transparent[1]{}%
  }%
  \providecommand\rotatebox[2]{#2}%
  \ifx\svgwidth\undefined%
    \setlength{\unitlength}{277.05754798bp}%
    \ifx\svgscale\undefined%
      \relax%
    \else%
      \setlength{\unitlength}{\unitlength * \real{\svgscale}}%
    \fi%
  \else%
    \setlength{\unitlength}{\svgwidth}%
  \fi%
  \global\let\svgwidth\undefined%
  \global\let\svgscale\undefined%
  \makeatother%
  \begin{picture}(1,0.89302092)%
    \put(0,0){\includegraphics[width=\unitlength,page=1]{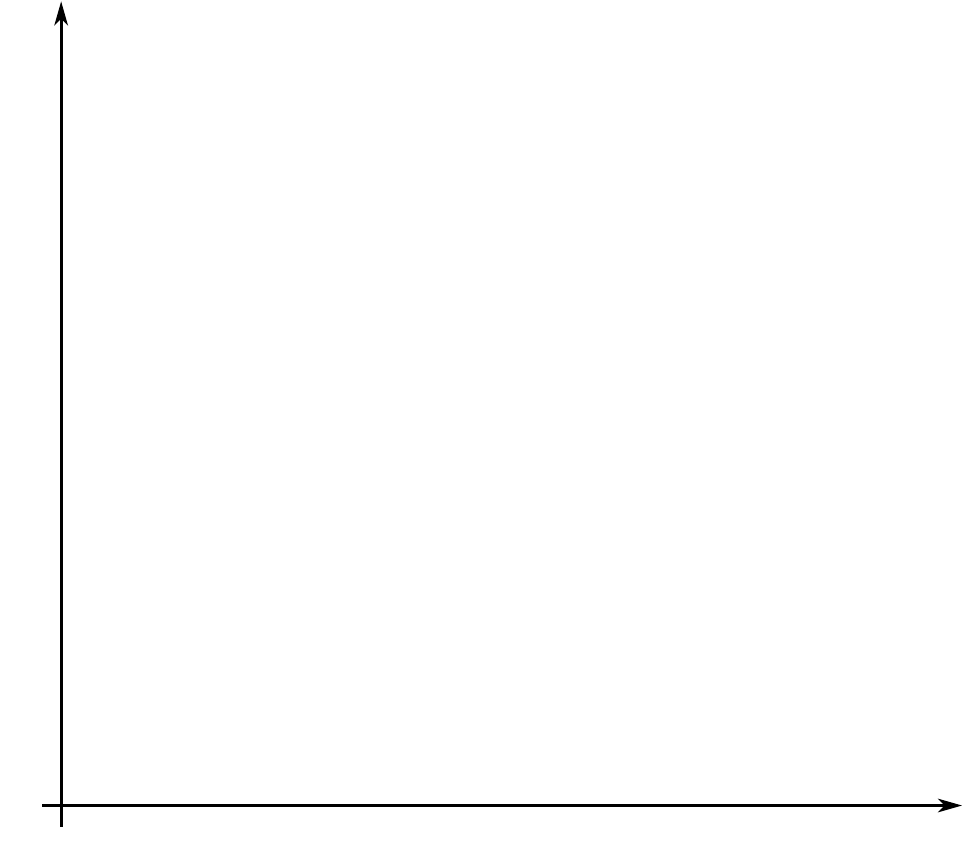}}%
    \put(-0.00148041,0.49164306){\color[rgb]{0,0,0}\makebox(0,0)[lb]{\smash{\tAx}}}%
    \put(0.45089941,0.00642915){\color[rgb]{0,0,0}\makebox(0,0)[lb]{\smash{\xAx}}}%
    \put(0.15539323,0.71600884){\color[rgb]{0,0,0}\makebox(0,0)[lb]{\smash{}}}%
    \put(0.51291918,0.13411713){\color[rgb]{0,0,0}\makebox(0,0)[lb]{\smash{\pT}}}%
    \put(0,0){\includegraphics[width=\unitlength,page=2]{F_1.pdf}}%
    \put(0.74458143,0.31287989){\color[rgb]{0,0,0}\makebox(0,0)[lb]{\smash{\nT}}}%
    \put(0,0){\includegraphics[width=\unitlength,page=3]{F_1.pdf}}%
    \put(0.53298443,0.84916905){\color[rgb]{0,0,0}\makebox(0,0)[lb]{\smash{\twoT}}}%
    \put(0,0){\includegraphics[width=\unitlength,page=4]{F_1.pdf}}%
  \end{picture}%
\endgroup%
	\end{center}
	\caption{Regions and characteristics for the absorbing state $\numLineages = 1$. The characteristics are parallel to the $\time$-axis.}
\end{subfigure}
\caption{\label{fig_1locus_regions} The different regions and characteristics of $\eff_\numLineages (\time, \treeDimA)$ \edit{(defined in equation~\eqref{eq_temp_cdf})} for different values of $\numLineages$. In (c), according to Lemma~\ref{lem_cdf}, $\eff_\numLineages (\time, \treeDimA)$ does not depend on $\time$ beyond the dashed line $\treeDimA = 2 \time$.}
\end{figure}

Since for $k=n$, the interior region is empty, we consider $k\neq n$ and introduce the family of characteristics
\begin{equation}\label{def_char}
\argChar \to \Big( \time_0 + \argChar, \treeDimA_0 + \stateFunction (\numLineages) \argChar \Big)^{\T}\quad \text{with} \quad \time_0 = \frac{\treeDimA_0}{\sampleSize}
\end{equation}
Taking the derivative of $\eff_\numLineages (\time,\treeDimA)$ along such a characteristic yields
\begin{equation}\label{eq_chain_rule}
	\begin{split}
		\frac{d}{d\argChar} \eff_\numLineages & \Big( \time_0 + \argChar, \treeDimA_0 + \stateFunction (\numLineages) \argChar \Big)\\
			& =  \bigg( \frac{d}{d\argChar} \Big[ \time_0 + \argChar \Big] \cdot \partial_\time \eff_\numLineages (\time,\treeDimA) + \frac{d}{d\argChar} \Big[ \treeDimA_0 + \stateFunction (\numLineages) \argChar \Big] \cdot \partial_\treeDimA \eff_\numLineages (\time,\treeDimA) \bigg)\Bigg|_{(\time,\treeDimA) = ( \frac{\treeDimA_0}{\sampleSize} + \argChar, \treeDimA_0 + \stateFunction (\numLineages) \argChar )}\\
			& = \bigg(  \partial_\time \eff_\numLineages (\time,\treeDimA) + \stateFunction (\numLineages) \partial_\treeDimA \eff_\numLineages (\time, \treeDimA)\bigg)\Bigg|_{(\time,\treeDimA) = ( \frac{\treeDimA_0}{\sampleSize} + \argChar, \treeDimA_0 + \stateFunction (\numLineages) \argChar )}\\
			& = \eff_\numLineages \Big( \time_0 + \argChar, \treeDimA_0 + \stateFunction (\numLineages) \argChar \Big) \ancestralRateMatrix_{\numLineages,\numLineages} (\time_0 + \argChar) + \eff_{\numLineages+1} \Big( \time_0 + \argChar, \treeDimA_0 + \stateFunction (\numLineages) \argChar \Big) \ancestralRateMatrix_{\numLineages+1,\numLineages} (\time_0 + \argChar).
	\end{split}
\end{equation}
Here we used the chain rule and the fact that the third line is equal to the left-hand side of equation~\eqref{eq_marginal_pde_comp}. Formally, the derivations~\eqref{eq_chain_rule} do not hold for all $\argChar$. It can be shown, however, that the equality holds for almost all $\tau$; we omit the technical details here for readability. Thus, for given $\treeDimA_0$, as a function of $\argChar$, the function $\tau \to \eff_\numLineages ( \time_0 + \argChar, \treeDimA_0 + \stateFunction (\numLineages) \argChar )$ solves the equation
\begin{equation}
	\frac{d}{d\argChar} \eff_\numLineages \Big( \time_0 + \argChar, \treeDimA_0 + \stateFunction (\numLineages) \argChar \Big) = - \rateODE^{(1)}_\numLineages (\argChar) \eff_\numLineages \Big( \time_0 + \argChar, \treeDimA_0 + \stateFunction (\numLineages) \argChar \Big) + \inhODE^{(1)}_\numLineages (\argChar),
\end{equation}
with
\begin{equation}
	\rateODE^{(1)}_\numLineages (\argChar) :=  - \ancestralRateMatrix_{\numLineages,\numLineages} (\time_0 + \argChar) = \frac{\numLineages (\numLineages - 1)}{2} \invPopSize (\time_0 + \argChar)
\end{equation}
and
\begin{equation}
	\begin{split}
		\inhODE^{(1)}_\numLineages (\argChar) := & \eff_{\numLineages + 1} \Big( \time_0 + \argChar, \treeDimA_0 + \stateFunction (\numLineages) \argChar \Big) \ancestralRateMatrix_{\numLineages + 1,\numLineages} (\time_0 + \argChar)\\
			= & \eff_{\numLineages + 1} \Big( \time_0 + \argChar, \treeDimA_0 + \stateFunction (\numLineages) \argChar \Big) \frac{(\numLineages+1) \numLineages}{2} \invPopSize (\time_0 + \argChar),
	\end{split}
\end{equation}
Since this is a non-homogeneous linear first-order ODE, the solution can be readily obtained as 
\begin{equation}\label{eq_sol_marginal_ode}
	\eff_\numLineages \Big( \time_0 + \argChar, \treeDimA_0 + \stateFunction (\numLineages) \argChar \Big) = e^{-\rateInt^{(1)}_\numLineages(\argChar)} \Bigg( \int_0^\argChar \inhODE^{(1)}_\numLineages (\alpha) e^{\rateInt^{(1)}_\numLineages(\alpha)}  d\alpha  + \eff_\numLineages \big( \time_0, \treeDimA_0  \big)  \Bigg),
\end{equation}
with
\begin{equation}\label{eq_marginal_rate_int}
	\rateInt^{(1)}_\numLineages(\argChar) := \int_0^{\argChar} \rateODE^{(1)}_\numLineages (\alpha) d\alpha = \frac{\numLineages (\numLineages - 1)}{2} \big( \cumInvPopSize(u) - \cumInvPopSize(\time_0)\big).
\end{equation}
The initial conditions for $\argChar = 0$ are given by the boundary values of the associated PDE as
\begin{equation}
	\eff_\numLineages ( \time_0, \treeDimA_0) = \P \big\{ \ancestralProcess(\time_0) = \numLineages \big\}.
\end{equation}

Now, to obtain the value of the function $\eff_\numLineages ( \time, \treeDimA)$, for given $\time$ and $\treeDimA$, one just needs to identify the right characteristic and the parameters $\treeDimA_0$ and $\argChar$ such that $(\time_0 + \argChar, \treeDimA_0 + \stateFunction (\numLineages) \argChar )^{\T} = ( \time, \treeDimA)^{\T}$. Since the characteristics are parallel, it can be uniquely identified. Using these values of $\treeDimA_0$ and $\argChar$ in the solution~\eqref{eq_sol_marginal_ode} yields $\eff_\numLineages ( \time, \treeDimA)$. However, we will not pursue this strategy to compute the requisite values of $\eff_\numLineages ( \time, \treeDimA)$. Instead, we present a numerical upstream scheme in Appendix~\ref{sec_num_alg_marginal} that can be used to compute $\eff_\numLineages ( \time, \treeDimA)$ efficiently on a suitable grid to ultimately obtain values for the CDF $\P \{ \totalTreeLength \leq \treeDimA\}$.



\subsection{Joint Distribution of the Total Tree Length}

In this section we present a method to compute the joint CDF of the total tree length
\begin{equation}
	\P \{ \totalTreeLength^\aLocus \leq \treeDimA , \totalTreeLength^\bLocus \leq \treeDimB \}
\end{equation}
at two loci $\aLocus$ and $\bLocus$ separated by a given recombination distance $\recoRate$. Again, we approach this problem by first computing the time-dependent joint CDF
\begin{equation}
	\eff_\state (\time, \treeDimA, \treeDimB) = \P \big\{ \ancestralRecoLimitedProcess(\time) = \state, \treeLength^\aLocus(\time) \leq \treeDimA, \treeLength^\bLocus(\time) \leq \treeDimB \big\}.
\end{equation}
We will follow closely along the lines of the method presented in Section~\ref{sec_cdf_marginal}, where we replace the ancestral process $\ancestralProcess$ by the ancestral process with limited recombination $\ancestralRecoLimitedProcess$, and compute the integrals~\eqref{def_total_length_a} and~\eqref{def_total_length_b}, to ultimately compute the joint CDF.

The analog to Lemma~\ref{lem_cdf} is as follows.
\begin{lemma}\label{lem_joint_cdf}
With definition~\eqref{eq_temp_joint_cdf}, the relation
\begin{equation}
	\begin{split}
		\P \big\{ \totalTreeLength^\aLocus \leq \treeDimA, \totalTreeLength^\bLocus \leq \treeDimB \big\} & = \P \big\{ \ancestralRecoLimitedProcess(\bar\time) \in \absStates, \treeLength^\aLocus(\bar\time) \leq \treeDimA, \treeLength^\bLocus(\bar\time) \leq \treeDimB \big\}\\
			& = \eff_{(1,0,0,0)} (\bar\time, \treeDimA, \treeDimB ) + \eff_{(1,0,0,1)} (\bar\time, \treeDimA, \treeDimB )
	\end{split}
\end{equation}
holds for $\treeDimA,\treeDimB \in \R_+$, $\bar\time \geq \max\{\treeDimA, \treeDimB\}/2$, and $\absStates = \big\{ (1,0,0,0), (1,0,0,1)\big\}$, the absorbing states of $\ancestralRecoLimitedProcess$.
\end{lemma}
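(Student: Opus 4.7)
The proof will parallel that of Lemma~\ref{lem_cdf}, extended to handle two loci and the fact that $\ancestralRecoLimitedProcess$ has two absorbing states rather than one. The plan is to establish two set-equalities up to null events: first, that on the event $\{\totalTreeLength^\aLocus \leq \treeDimA, \totalTreeLength^\bLocus \leq \treeDimB\}$ the process $\ancestralRecoLimitedProcess$ has reached $\absStates$ by time $\bar\time$; and second, that on the event $\{\ancestralRecoLimitedProcess(\bar\time) \in \absStates\}$ the accumulated tree lengths at time $\bar\time$ coincide with the total tree lengths.

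First, I would observe that the integrand in the definition of $\treeLength^\aLocus(\time)$ satisfies $\numLineagesRVLimited_{\aLocus\bLocus}(s) + \numLineagesRVLimited_{\aLocus}(s) \geq 2$ for every $s < \tMRCA^\aLocus$, so that $2 \tMRCA^\aLocus \leq \totalTreeLength^\aLocus$, and analogously $2\tMRCA^\bLocus \leq \totalTreeLength^\bLocus$. Consequently, on $\{\totalTreeLength^\aLocus \leq \treeDimA, \totalTreeLength^\bLocus \leq \treeDimB\}$ both $\tMRCA^\aLocus$ and $\tMRCA^\bLocus$ are bounded by $\max\{\treeDimA,\treeDimB\}/2 \leq \bar\time$. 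By inspection of $\recoLimitedStates$, the condition $\numLineagesRVLimited_{\aLocus\bLocus}(\bar\time) + \numLineagesRVLimited_{\aLocus}(\bar\time) \leq 1$ together with $\numLineagesRVLimited_{\aLocus\bLocus}(\bar\time) + \numLineagesRVLimited_{\bLocus}(\bar\time) \leq 1$ forces $\ancestralRecoLimitedProcess(\bar\time) \in \{(1,0,0,0),(1,0,0,1)\} = \absStates$, giving the first set-equality.

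Next, I would argue the reverse inclusion of traces on the second event: if $\ancestralRecoLimitedProcess(\bar\time) \in \absStates$, then $\bar\time \geq \tMRCA^\aLocus$ and $\bar\time \geq \tMRCA^\bLocus$, which means the indicator in each of $\treeLength^\aLocus$ and $\treeLength^\bLocus$ vanishes for all $s \in (\tMRCA^\aLocus, \bar\time]$ and $s \in (\tMRCA^\bLocus, \bar\time]$, respectively. Hence $\treeLength^\aLocus(\bar\time) = \totalTreeLength^\aLocus$ and $\treeLength^\bLocus(\bar\time) = \totalTreeLength^\bLocus$ almost surely, so the two events
\begin{equation}
\big\{\totalTreeLength^\aLocus \leq \treeDimA, \totalTreeLength^\bLocus \leq \treeDimB\big\} = \big\{ \ancestralRecoLimitedProcess(\bar\time) \in \absStates,\; \treeLength^\aLocus(\bar\time) \leq \treeDimA,\; \treeLength^\bLocus(\bar\time) \leq \treeDimB \big\}
\end{equation}
coincide. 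Finally, decomposing $\{\ancestralRecoLimitedProcess(\bar\time) \in \absStates\} = \{\ancestralRecoLimitedProcess(\bar\time) = (1,0,0,0)\} \sqcup \{\ancestralRecoLimitedProcess(\bar\time) = (1,0,0,1)\}$ and applying definition~\eqref{eq_temp_joint_cdf} to each piece yields the stated decomposition $\eff_{(1,0,0,0)}(\bar\time,\treeDimA,\treeDimB) + \eff_{(1,0,0,1)}(\bar\time,\treeDimA,\treeDimB)$.

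No step is particularly hard, since the structure is identical to Lemma~\ref{lem_cdf}; the only new wrinkle is that the absorbing set $\absStates$ contains two states (reflecting whether a single recombination did or did not occur before the two loci independently reached their MRCAs), which is handled cleanly by the final additive decomposition. The only point requiring modest care is verifying from the state space~\eqref{def_reco_limited_states} that \emph{both} marginal MRCA conditions must be satisfied simultaneously to land in $\absStates$, but this is immediate from the explicit exclusions listed in the definition of $\recoLimitedStates$.
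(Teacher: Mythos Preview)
Your proposal is correct and follows essentially the same approach as the paper's own proof: bound each $\tMRCA$ by half the corresponding total tree length, deduce absorption by $\bar\time$, then use that absorption freezes the accumulated lengths at their total values, and finish with the disjoint decomposition of $\absStates$. The structure, the key inequalities, and the order of the set-equalities are identical to the paper's argument.
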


\begin{proof}
The proof is similar to the proof of lemma~\ref{lem_cdf}. With
\begin{equation}
	\ancestralRecoLimitedProcess(\time) = \big( \numLineagesRVLimited_{\aLocus\bLocus} (\time), \numLineagesRVLimited_{\aLocus} (\time), \numLineagesRVLimited_{\bLocus} (\time), \numRecoRV(\time) \big),
\end{equation}
note that
\begin{equation}
	2\tMRCA^\aLocus \leq \int_0^{\tMRCA^\aLocus} \1_{\{\numLineagesRVLimited_{\aLocus\bLocus}(s) + \numLineagesRVLimited_{\aLocus}(s) > 1\}} \big( \numLineagesRVLimited_{\aLocus\bLocus}(s) + \numLineagesRVLimited_{\aLocus}(s) \big) ds = \totalTreeLength^\aLocus,
\end{equation}
and similarly $	2\tMRCA^\bLocus \leq \totalTreeLength^\bLocus$. Thus, on the event $\big\{ \totalTreeLength^\aLocus \leq \treeDimA, \totalTreeLength^\bLocus \leq \treeDimB \big\}$, the relations $\tMRCA^\aLocus \leq \max\{\treeDimA, \treeDimB\}/2 \leq \bar\time$ and $\tMRCA^\bLocus \leq \bar\time$ hold. This implies $\numLineagesRVLimited_{\aLocus\bLocus}(\bar\time) + \numLineagesRVLimited_{\aLocus}(\bar\time) = 1$ and $\numLineagesRVLimited_{\aLocus\bLocus}(\bar\time) + \numLineagesRVLimited_{\aLocus}(\bar\time) = 1$, which in turn implies $\ancestralRecoLimitedProcess(\bar\time) \in \absStates = \big\{ (1,0,0,0), (1,0,0,1)\big\}$, since these two states are the only admissible states that can satisfy these conditions. Incidentally, these are also the absorbing states of $\ancestralRecoLimitedProcess$. Thus,
\begin{equation}
	\big\{ \totalTreeLength^\aLocus \leq \treeDimA, \totalTreeLength^\bLocus \leq \treeDimB \big\} = \big\{ \ancestralRecoLimitedProcess(\bar\time) \in \absStates, \totalTreeLength^\aLocus \leq \treeDimA, \totalTreeLength^\bLocus \leq \treeDimB \big\}
\end{equation}
holds. Furthermore, on the event $\big\{ \ancestralRecoLimitedProcess(\bar\time) \in \absStates \big\}$, $\tMRCA^\aLocus \leq \bar\time$ and $\tMRCA^\bLocus \leq \bar\time$ hold, which imply $\treeLength^\aLocus(\bar\time) = \totalTreeLength^\aLocus$ and $\treeLength^\bLocus(\bar\time) = \totalTreeLength^\bLocus$. This in turn implies
\begin{equation}
	\big\{ \ancestralRecoLimitedProcess(\bar\time) \in \absStates, \totalTreeLength^\aLocus \leq \treeDimA, \totalTreeLength^\bLocus \leq \treeDimB \big\} = \big\{ \ancestralRecoLimitedProcess(\bar\time) \in \absStates, \treeLength^\aLocus(\bar\time) \leq \treeDimA, \treeLength^\bLocus(\bar\time) \leq \treeDimB \big\}.
\end{equation}
Finally, note that
\begin{equation}
	\begin{split}
		\big\{ \ancestralRecoLimitedProcess(\bar\time) = (1,0,0,1) \big\} \cap \big\{ \ancestralRecoLimitedProcess(\bar\time) = (1,0,0,0) \big\} = \emptyset,
	\end{split}
\end{equation}
which proves the statement of the lemma.
\end{proof}

Again, Lemma~\ref{lem_joint_cdf} shows that the joint CDF of $\totalTreeLength^\aLocus$ and $\totalTreeLength^\bLocus$ can be computed from the time-dependent CDFs $\eff_{(1,0,0,0)} (\time, \treeDimA, \treeDimB)$, and $\eff_{(1,0,0,1)} (\time, \treeDimA, \treeDimB)$.
\edit{In order to derive a system of PDEs like~\eqref{eq_marginal_pde} for the time-dependent joint CDF of the tree length at two loci, we again apply Proposition~\ref{prop_app}, for dimension $d=2$.}
To this end, define the functions
\begin{equation}
	\stateFunction^\aLocus (\numLineages_{\aLocus\bLocus}, \numLineages_{\aLocus}, \numLineages_{\bLocus}, \numReco) := \1_{\{\numLineages_{\aLocus\bLocus} + \numLineages_{\aLocus} > 1\}} (\numLineages_{\aLocus\bLocus} + \numLineages_{\aLocus})
\end{equation}
and
\begin{equation}
	\stateFunction^\bLocus (\numLineages_{\aLocus\bLocus}, \numLineages_{\aLocus}, \numLineages_{\bLocus}, \numReco) := \1_{\{\numLineages_{\aLocus\bLocus} + \numLineages_{\bLocus} > 1\}} (\numLineages_{\aLocus\bLocus} + \numLineages_{\bLocus})
\end{equation}
that yield for $(\numLineages_{\aLocus\bLocus}, \numLineages_{\aLocus}, \numLineages_{\bLocus}, \numReco) \in \recoLimitedStates$ the number of lineages ancestral to locus $\aLocus$ and $\bLocus$, respectively, and define
\begin{equation}
	\stateFunctionMatrix^\aLocus := \diag \Big( \big( \stateFunction^\aLocus (\state) \big)_{\state \in \recoLimitedStates} \Big) \qquad \text{and} \qquad \stateFunctionMatrix^\bLocus := \diag \Big( \big( \stateFunction^\bLocus (\state) \big)_{\state \in \recoLimitedStates} \Big)\,.
\end{equation}
We then have the following \firstRevision{corollary}.

\begin{corollary}
The time-dependent joint CDF of the tree lengths
\begin{equation}
	\vecEff (\time,\treeDimA, \treeDimB) = \big( \vecEff_\state (\time,\treeDimA, \treeDimB) \big)_{\state \in \recoLimitedStates}
\end{equation}
can be obtained for all points in $U = \Big\{ (t,x,y): \, 0 < \treeDimA < \sampleSize \time, 0 < \treeDimB < \sampleSize \time, \, t > 0 \Big\} $ as the strong solution of
\begin{equation}\label{eq_joint_pde}
		\partial_\time \vecEff (\time,\treeDimA, \treeDimB) + \edit{\partial_\treeDimA \vecEff (\time,\treeDimA, \treeDimB) \stateFunctionMatrix^\aLocus} + \edit{\partial_\treeDimB \vecEff (\time,\treeDimA, \treeDimB) \stateFunctionMatrix^\bLocus} = \vecEff (\time,\treeDimA, \treeDimB) \ancestralRecoLimitedMatrix(\time),
\end{equation}
with boundary conditions
\begin{equation}
	\vecEff (\time,\treeDimA, \treeDimB) = \begin{cases}
			\bigg( \P \big\{ \ancestralRecoLimitedProcess(\time) = \state, \treeLength^\bLocus(\time) \leq \treeDimB \big\} \bigg)_{\state \in \recoLimitedStates} \cdot \1_{\{\stateFunction^\aLocus (\state) \neq \sampleSize\}},		& \text{if $\treeDimA = \sampleSize \time$},\\
			\bigg( \P \big\{ \ancestralRecoLimitedProcess(\time) = \state, \treeLength^\aLocus(\time) \leq \treeDimA \big\} \bigg)_{\state \in \recoLimitedStates} \cdot \1_{\{\stateFunction^\bLocus (\state) \neq \sampleSize\}},		& \text{if $\treeDimB = \sampleSize \time$},\\
			0,			&\text{if $\treeDimA = 0$ or $\treeDimB = 0$},
		\end{cases}
\end{equation}
for $(x,y,t) \in \del U$ and $\ancestralRecoLimitedMatrix(\time)$ as defined in~\eqref{def_limited_reco_matrix}.
\end{corollary}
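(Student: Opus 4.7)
The plan is to invoke Proposition~\ref{prop_app} directly, with dimension $d=2$, applied to the piecewise-deterministic Markov process $\bigl(\ancestralRecoLimitedProcess(\time), \treeLength^\aLocus(\time), \treeLength^\bLocus(\time)\bigr)$. The underlying jump component is the time-inhomogeneous chain $\ancestralRecoLimitedProcess$ with generator $\ancestralRecoLimitedMatrix(\time)$, and the two deterministic drifts are exactly $\stateFunction^\aLocus$ and $\stateFunction^\bLocus$, since by the definitions of $\treeLength^\aLocus$ and $\treeLength^\bLocus$ the integrands coincide with $\stateFunction^\aLocus(\ancestralRecoLimitedProcess(s))$ and $\stateFunction^\bLocus(\ancestralRecoLimitedProcess(s))$ respectively. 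This matches the hypotheses of Proposition~\ref{prop_app} verbatim, the only change from the single-locus case being that the state space $\recoLimitedStates$ replaces $\{1,\ldots,\sampleSize\}$ and a second spatial coordinate is added.

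Second, I would read off the PDE from the proposition: the convective terms $\partial_\treeDimA \vecEff \,\stateFunctionMatrix^\aLocus$ and $\partial_\treeDimB \vecEff \,\stateFunctionMatrix^\bLocus$ encode the deterministic accumulation of tree length at loci $\aLocus$ and $\bLocus$, while the jump contribution $\vecEff \,\ancestralRecoLimitedMatrix(\time)$ is the usual Kolmogorov-forward right-hand side for the discrete component. This yields equation~\eqref{eq_joint_pde} as a strong solution on the interior $U=\{(\time,\treeDimA,\treeDimB)\colon 0<\treeDimA<\sampleSize\time,\; 0<\treeDimB<\sampleSize\time,\; \time>0\}$, in complete analogy with Corollary~\ref{cor_marginal_pde}.

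Third, I would verify the boundary conditions face by face. On $\{\treeDimA = 0\}$ or $\{\treeDimB = 0\}$ with $\time > 0$, since $\sampleSize \geq 2$ and the initial state $(\sampleSize,0,0,0)$ has $\stateFunction^\aLocus = \stateFunction^\bLocus = \sampleSize$, the accumulated tree lengths at both loci are almost surely strictly positive for any $\time>0$, so the joint CDFs vanish. On the slanted face $\{\treeDimA = \sampleSize \time\}$, the bound $\treeLength^\aLocus(\time) \leq \sampleSize \time$ holds with probability one, so the joint CDF collapses to $\P\{\ancestralRecoLimitedProcess(\time) = \state,\,\treeLength^\bLocus(\time) \leq \treeDimB\}$ for every state $\state$ with $\stateFunction^\aLocus(\state) < \sampleSize$; however, if $\stateFunction^\aLocus(\state) = \sampleSize$, then conditional on being in state $\state$ at time $\time$ the process has accumulated exactly $\sampleSize \time$ at locus $\aLocus$, producing the same kind of discontinuity as discussed after Corollary~\ref{cor_marginal_pde}. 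Taking the one-sided limit from within $U$ removes these components, which is exactly what the indicator factor $\1_{\{\stateFunction^\aLocus(\state) \neq \sampleSize\}}$ encodes; the analogous argument handles $\{\treeDimB = \sampleSize \time\}$.

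The main obstacle will be the careful treatment of these boundary indicators. The interior PDE is a direct two-dimensional transcription of the single-locus derivation and is supplied by Proposition~\ref{prop_app}, but justifying that the indicator factors correctly capture the one-sided limits at the slanted faces (and noting that these faces meet on the edge $\treeDimA = \treeDimB = \sampleSize \time$, where both discontinuities coincide) requires tracing which states in $\recoLimitedStates$ can maintain maximal drift in each coordinate without having experienced a coalescence or recombination event at the corresponding locus.
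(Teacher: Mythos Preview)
Your proposal is correct and follows essentially the same approach as the paper: define the vector-valued state function ${\bf v}(\state) = (\stateFunction^\aLocus(\state),\stateFunction^\bLocus(\state))$ and invoke Proposition~\ref{prop_app} with $d=2$ for the chain $\ancestralRecoLimitedProcess$ with generator $\ancestralRecoLimitedMatrix(\time)$. The paper's proof is in fact terser than yours---it simply asserts that ${\bf v}$ and $\ancestralRecoLimitedMatrix(\time)$ satisfy the hypotheses of Proposition~\ref{prop_app} and cites Remark~\ref{rem_zero} for the vanishing boundary pieces---so your face-by-face discussion of the boundary indicators, while not wrong, is already subsumed by part~\ref{prop_app_sub_strong} of that proposition and need not be redone by hand.
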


\firstRevision{
\begin{proof}
Define the function ${\bf \stateFunction} : \recoLimitedStates \to \R^2$ as
\begin{equation}
	{\bf \stateFunction} (\numLineages_{\aLocus\bLocus}, \numLineages_{\aLocus}, \numLineages_{\bLocus}, \numReco) := \big( \1_{\{\numLineages_{\aLocus\bLocus} + \numLineages_{\aLocus} > 1\}} (\numLineages_{\aLocus\bLocus} + \numLineages_{\aLocus}), \1_{\{\numLineages_{\aLocus\bLocus} + \numLineages_{\bLocus} > 1\}} (\numLineages_{\aLocus\bLocus} + \numLineages_{\bLocus}) \big).
\end{equation}
This function and the generator $\ancestralRecoLimitedMatrix(\time)$ satisfy the requirements of Proposition~\ref{prop_app}, and thus, the statement of the corollary follows from Proposition~\ref{prop_app} and Remark~\ref{rem_zero}.
\end{proof}
}

\begin{remark}
Note that due to symmetry of $\ancestralRecoLimitedProcess$,
\begin{equation}
	\P \big\{ \ancestralRecoLimitedProcess(\time) = \state, \treeLength^\aLocus(\time) \leq \treeDimA \big\} = \P \big\{ \ancestralRecoLimitedProcess(\time) = \state, \treeLength^\bLocus(\time) \leq \treeDimA \big\} 
\end{equation}
holds.
\end{remark}

\firstRevision{\edit{The process} $\big(\ancestralProcess(\time), \treeLength^\aLocus(\time), \treeLength^\bLocus(\time) \big)_{\time \in \R_+}$ is a piecewise-deterministic Markov process \edit{as well} (see Remark~\ref{rem_generator_d}), where $\ancestralRecoLimitedMatrix(\time)$ captures the stochastic dynamics, and $\partial_\treeDimA$ and $\partial_\treeDimB$ the deterministic dynamics.}
%
The numerical scheme to compute the time-dependent joint CDF is again an upstream scheme based on the method of characteristics and follows essentially along the lines of the scheme presented for the marginal case. The relation $\prec$ defined in~\eqref{def_rel} implies a partial ordering on the state space $\recoLimitedStates$, and the matrix $\ancestralRecoLimitedMatrix(t)$ is triangular with respect to this ordering. Thus, again, the values of $\eff_\state$ only depend on $\eff_{\state'}$ with $\state \prec \state'$, and they can be computed for each $\state$ separately. For given $\state \in \recoLimitedStates$,
\begin{equation}\label{eq_joint_boundaries}
	\begin{split}
	\eff_\state (\time, \treeDimA, \treeDimB) & = \P \big\{ \ancestralRecoLimitedProcess(\time) = \state, \treeLength^\aLocus(\time) \leq \treeDimA, \treeLength^\bLocus(\time) \leq \treeDimB \big\}\\
		&  = \begin{cases}
							0,															& \text{if $\treeDimA < \stateFunction^{\aLocus} (\state) \cdot \time$ or $\treeDimB < \stateFunction^{\bLocus} (\state) \cdot \time$},\\
							\text{solution to~\eqref{eq_joint_pde}},								& \text{if $\stateFunction^{\aLocus} (\state) \cdot \time \leq \treeDimA  < \sampleSize \cdot \time$ and $\stateFunction^{\bLocus} (\state) \cdot \time \leq \treeDimB  < \sampleSize \cdot \time$},\\
							\P \big\{ \ancestralRecoLimitedProcess(\time) = \state, \treeLength^\aLocus(\time) \leq \treeDimA \big\},								& \text{if $\stateFunction^{\aLocus} (\state) \cdot \time \leq \treeDimA  < \sampleSize \cdot \time$ and $\sampleSize \cdot \time \leq \treeDimB$},\\
							\P \big\{ \ancestralRecoLimitedProcess(\time) = \state,  \treeLength^\bLocus(\time) \leq \treeDimB \big\},								& \text{if $\sampleSize \cdot \time \leq \treeDimA $ and $\stateFunction^{\bLocus} (\state) \cdot \time \leq \treeDimB  < \sampleSize \cdot \time$},\\
							\P \big\{ \ancestralRecoLimitedProcess(\time) = \state \},	& \text{if $\sampleSize \cdot \time \leq \treeDimA$ and $\sampleSize \cdot \time \leq \treeDimB$}
						\end{cases}
	\end{split}
\end{equation}
holds. Figure~\ref{fig_region3d} shows the different regions of $\eff_\state (\time, \treeDimA, \treeDimB)$ for a fixed $\time$. Moreover, for each $\state \in \recoLimitedStates$, the PDE that has to be satisfied in the region $\stateFunction^{\aLocus} (\state) \cdot \time \leq \treeDimA  < \sampleSize \cdot \time$ and $\stateFunction^{\bLocus} (\state) \cdot \time \leq \treeDimB  < \sampleSize \cdot \time$ can be re-written as
\begin{equation}\label{eq_joint_pde_comp}
	\partial_\time \eff_\state (\time,\treeDimA, \treeDimB) + \big(\stateFunction^\aLocus (\state), \stateFunction^\bLocus (\state) \big) \nabla \eff_\state (\time,\treeDimA, \treeDimB)  = \eff_\state (\time,\treeDimA, \treeDimB) \ancestralRecoLimitedMatrix_{\state,\state}(\time) + \sum_{\state \prec \state'} \eff_{\state'} (\time,\treeDimA, \treeDimB) \ancestralRecoLimitedMatrix_{\state',\state}(\time),
\end{equation}
where $\nabla f = (\partial_x f, \partial_y f)^{\T}$.
Again, taking the derivative of $\eff_\state (\time,\treeDimA, \treeDimB)$ along the characteristics
\begin{equation}
	\tau \to \Big( \time_0 + \argChar, \vecEks_0  + \argChar \vecVee (\state) \Big)^{\T},
\end{equation}
with $\time_0 := \frac{1}{\sampleSize} \max\{\treeDimA_0, \treeDimB_0\}$, $\vecEks_0 := (\treeDimA_0, \treeDimB_0)$, and $\vecVee (\state) :=  \big(\stateFunction^\aLocus (\state), \stateFunction^\bLocus (\state) \big)$, yields the right-hand side of equation~\eqref{eq_joint_pde_comp}. Thus, $\eff_\state(\cdot,\cdot,\cdot)$ satisfies the ODE
\begin{equation}
	\frac{d}{d\argChar} \eff_\state \Big( \time_0 + \argChar, \vecEks_0 + \argChar \vecVee (\state) \Big) = - \rateODE^{(2)}_\state (\argChar) \eff_\state \Big( \time_0 + \argChar, \vecEks_0 + \argChar \vecVee (\state) \Big) + \inhODE^{(2)}_\state (\argChar),
\end{equation}
with
\begin{equation}
	\rateODE^{(2)}_\state (\argChar) =  - \ancestralRecoLimitedMatrix_{\state,\state}(\time_0 + \argChar)
\end{equation}
and
\begin{equation}
	\inhODE^{(2)}_\state (\argChar) = \sum_{\state \prec \state'} \eff_{\state'} \Big( \time_0 + \argChar, \vecEks_0 + \argChar \vecVee (\state) \Big) \ancestralRecoLimitedMatrix_{\state',\state} (\time_0 + \argChar).
\end{equation}

The characteristics for $\eff_\state (\time,\treeDimA, \treeDimB)$ are depicted in Figure~\ref{fig_region3d}. Like in the marginal case, this is a non-homogeneous linear first-order ODE and can be readily solved. The solution involves integrating $\rateODE^{(2)}_\state (\argChar)$, which leads to
\begin{equation}\label{eq_sol_joint_ode}
	\eff_\state \Big( \time_0 + \argChar, \treeDimA_0 + \vecVee (\state) \argChar \Big) = e^{-\rateInt^{(2)}_\numLineages(\argChar)} \Bigg( \int_0^\argChar \inhODE^{(2)}_\state (\alpha) e^{\rateInt^{(2)}_\numLineages(\alpha)} d\alpha+ \eff_\state \big( \time_0, \treeDimA_0 \big) \Bigg),
\end{equation}
with
\begin{equation}\label{eq_rate_int_joint}
	\rateInt^{(2)}_\state (\argChar) = \int_0^{\argChar} \rateODE^{(2)}_\state (\alpha) d\alpha = - \ancestralRecoOnlyLimitedMatrix_{\state,\state} (u - \time_0) - \ancestralRecoCoalLimitedMatrix_{\state,\state}  \big( \cumInvPopSize(u) - \cumInvPopSize(\time_0)\big).
\end{equation}
We provide the details of our numerical upstream scheme to efficiently and accurately compute solutions to equation~\eqref{eq_sol_joint_ode} in Appendix~\ref{sec_algo_joint}.

\begin{figure}
\newcommand{\xA}{$\treeDimA$}
\newcommand{\xB}{$\treeDimB$}
\newcommand{\nt}{$\sampleSize \cdot \time$}
\newcommand{\KA}{$\stateFunction^{\aLocus}(\state) \cdot \time$}
\newcommand{\KB}{$\stateFunction^{\bLocus}(\state) \cdot \time$}
\newcommand{\zero}{$0$}
\newcommand{\PALa}{$\P \big\{ \ancestralRecoLimitedProcess(\time) = \state, \treeLength^\aLocus(\time) \leq \treeDimA\big\}$}
\newcommand{\PALb}{$\P \big\{ \ancestralRecoLimitedProcess(\time) = \state, \treeLength^\bLocus(\time) \leq \treeDimB \big\}$}
\newcommand{\PA}{$\P \big\{ \ancestralRecoLimitedProcess(\time) = \state \big\}$}
\def\svgwidth{.7\textwidth}
\begin{center}
\begingroup%
  \makeatletter%
  \providecommand\color[2][]{%
    \errmessage{(Inkscape) Color is used for the text in Inkscape, but the package 'color.sty' is not loaded}%
    \renewcommand\color[2][]{}%
  }%
  \providecommand\transparent[1]{%
    \errmessage{(Inkscape) Transparency is used (non-zero) for the text in Inkscape, but the package 'transparent.sty' is not loaded}%
    \renewcommand\transparent[1]{}%
  }%
  \providecommand\rotatebox[2]{#2}%
  \ifx\svgwidth\undefined%
    \setlength{\unitlength}{251.18775135bp}%
    \ifx\svgscale\undefined%
      \relax%
    \else%
      \setlength{\unitlength}{\unitlength * \real{\svgscale}}%
    \fi%
  \else%
    \setlength{\unitlength}{\svgwidth}%
  \fi%
  \global\let\svgwidth\undefined%
  \global\let\svgscale\undefined%
  \makeatother%
  \begin{picture}(1,0.62860379)%
    \put(0,0){\includegraphics[width=\unitlength,page=1]{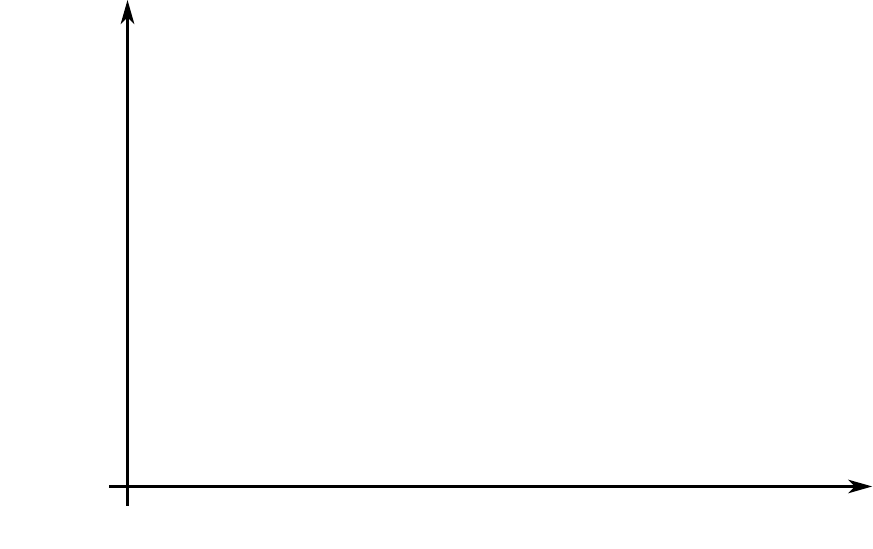}}%
    \put(0.10261086,0.35049225){\color[rgb]{0,0,0}\makebox(0,0)[lb]{\smash{\xB}}}%
    \put(0.78869512,0.00443209){\color[rgb]{0,0,0}\makebox(0,0)[lb]{\smash{\xA}}}%
    \put(0.2474732,0.59796543){\color[rgb]{0,0,0}\makebox(0,0)[lb]{\smash{}}}%
    \put(0.60962906,0.00443209){\color[rgb]{0,0,0}\makebox(0,0)[lb]{\smash{\nt}}}%
    \put(0.05633538,0.51547443){\color[rgb]{0,0,0}\makebox(0,0)[lb]{\smash{\nt}}}%
    \put(0.01408384,0.23379764){\color[rgb]{0,0,0}\makebox(0,0)[lb]{\smash{\KB}}}%
    \put(0.20320972,0.00443209){\color[rgb]{0,0,0}\makebox(0,0)[lb]{\smash{\KA}}}%
    \put(0.18711388,0.55168997){\color[rgb]{0,0,0}\makebox(0,0)[lb]{\smash{\zero}}}%
    \put(0.18711388,0.35652822){\color[rgb]{0,0,0}\makebox(0,0)[lb]{\smash{\zero}}}%
    \put(0.18711388,0.12917459){\color[rgb]{0,0,0}\makebox(0,0)[lb]{\smash{\zero}}}%
    \put(0.41245539,0.12917459){\color[rgb]{0,0,0}\makebox(0,0)[lb]{\smash{\zero}}}%
    \put(0.82491075,0.12917459){\color[rgb]{0,0,0}\makebox(0,0)[lb]{\smash{\zero}}}%
    \put(0.58711834,0.6472095){\color[rgb]{0,0,0}\makebox(0,0)[lb]{\smash{}}}%
    \put(0.26155712,0.55168997){\color[rgb]{0,0,0}\makebox(0,0)[lb]{\smash{\PALa}}}%
    \put(0.72431184,0.55168997){\color[rgb]{0,0,0}\makebox(0,0)[lb]{\smash{\PA}}}%
    \put(0.6538926,0.35652822){\color[rgb]{0,0,0}\makebox(0,0)[lb]{\smash{\PALb}}}%
    \put(0,0){\includegraphics[width=\unitlength,page=2]{regions3D.pdf}}%
  \end{picture}%
\endgroup%
\end{center}
\caption{The different regions and (projected) characteristics of $\eff_\state (\time, \treeDimA, \treeDimB)$ \edit{(defined in equation~\eqref{eq_temp_joint_cdf})} for an intermediate state $\state \in \recoLimitedStates$ at a given time $\time$. The characteristics also extend in the $\time$-direction at unit speed. Note that for the states $\state$ with $\stateFunction^\aLocus (\state) = \sampleSize$ or $\stateFunction^\bLocus (\state) = \sampleSize$ the interior region is empty.}
\label{fig_region3d}
\end{figure}

\section{Empirical evaluation}
\label{sec_empirical}

In this section, we demonstrate that the numerical algorithms presented in Section~\ref{sec_num_alg_marginal} and~\ref{sec_algo_joint} can be used to accurately and efficiently compute the time-dependent marginal CDF~\eqref{eq_temp_cdf} and joint CDF~\eqref{eq_temp_joint_cdf}, as well as the regular marginal CDF~\eqref{eq_marginal_cdf} and joint CDF~\eqref{eq_joint_cdf}, for different population size histories and different recombination rates. Furthermore, we show how our method can be used to study properties of the marginal and joint distributions, and compute their moments. We implemented the numerical algorithms in \textsc{Matlab}, and the code is available upon request.

For ease of exposition, we use a sample size of $\sampleSize = 10$ in the remainder of this paper, \firstRevision{unless mentioned otherwise}.
We mainly focus on three population size histories, depicted in Figure~\ref{fig_pop_sizes}. \firstRevision{The first is a history} of constant size $1$, \firstRevision{and we refer to the corresponding rate function as $\invPopSize_c$.}
\firstRevision{Second, we consider} a history with an ancient bottleneck, followed by exponential growth up to the present. Specifically, for $\time > 0.15$, the relative population size is set to $2$, and for $0.025 < \time < 0.15$, it is set to $0.25$. Then, the population grows exponentially from size $0.25$ at $\time = 0.025$ up to $\time = 0$ (the present), at an exponential rate of $g$. We refer to this population size history by \firstRevision{$\invPopSize_e$}, and if not mentioned otherwise, the growth rate is set to $g = 200$. This size history is a rough sketch of the human population size history, with an out-of-Africa bottleneck, followed by recent exponential growth at a rate of $1\%$ per generation. In addition, we consider a pure bottleneck, where the relative ancestral size is 2 until time $\time = 0.05$, and $N_\text{B}$ from $\time = 0.05$ until the present. We refer to this size history by \firstRevision{$\invPopSize_b$}, and if not otherwise mentioned, we set $N_\text{B} = 0.2$.

\begin{figure}
\begin{center}
\includegraphics[width=.5\textwidth]{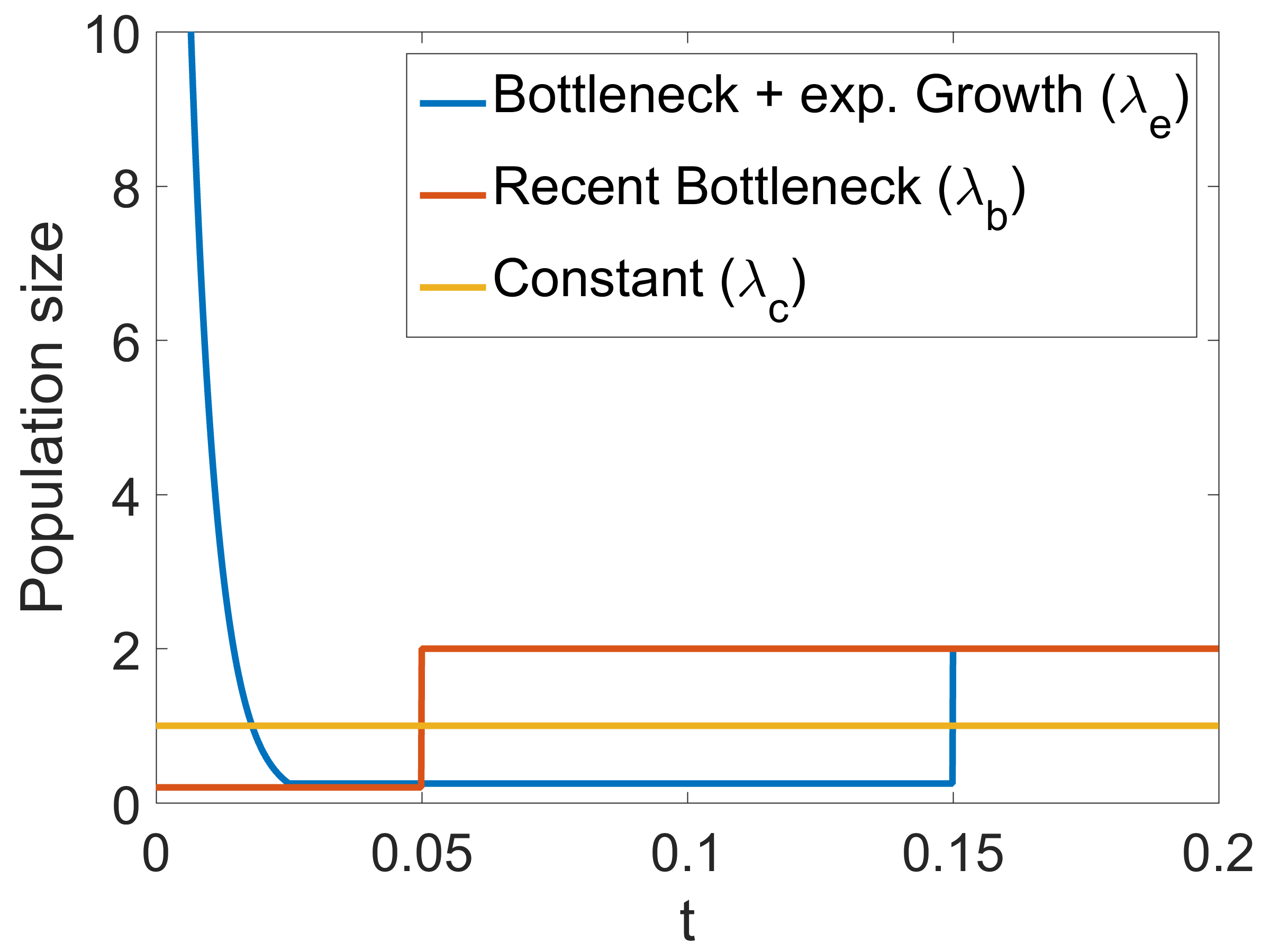}
\caption{The three population size histories we will mainly consider in this paper: \firstRevision{A constant population size ($\invPopSize_c$), an ancient bottleneck followed by exponential growth ($\invPopSize_e$), and a recent bottleneck ($\invPopSize_b$).}}
\label{fig_pop_sizes}
\end{center}
\end{figure}

\subsection{Accuracy}

In this section we demonstrate that the numerical algorithms presented in this paper can be used to compute the requisite CDFs accurately. Naturally, the accuracy will depend on the exact choice of the grid for the numerical algorithm. We will present the results for a particular grid here, and discuss the issues for choosing an adequate grid in Section~\ref{sec_discussion}. We set $\sampleSize = 5$ and compute the time-dependent marginal CDF
\begin{equation}
	\P \big\{ \ancestralProcess(\time) = \numLineages, \treeLength(\time) \leq \treeDimA \big\}
\end{equation}
for $\numLineages = 5$, $3$, and the absorbing state $1$, and show the respective surfaces as functions of $\time$ and $\treeDimA$ in Figure~\ref{fig_heatmaps_cdf_marginal}. Here we used the population size history with exponential growth \firstRevision{$\invPopSize_e$}. These surfaces exhibit the properties sketched in Figure~\ref{fig_1locus_regions}, and the different regions can be observed. Below the line $\treeDimA = \sampleSize \time$, the functions are independent of $\treeDimA$. Furthermore, the functions are zero above the line $\time = \numLineages \time$, except for $\numLineages = 1$, where the function is independent of $\time$ above the line $\treeDimA = 2 \time$.
 
\begin{figure}
\begin{subfigure}[b]{0.33\textwidth}
	\begin{center}
	\includegraphics[width=\textwidth]{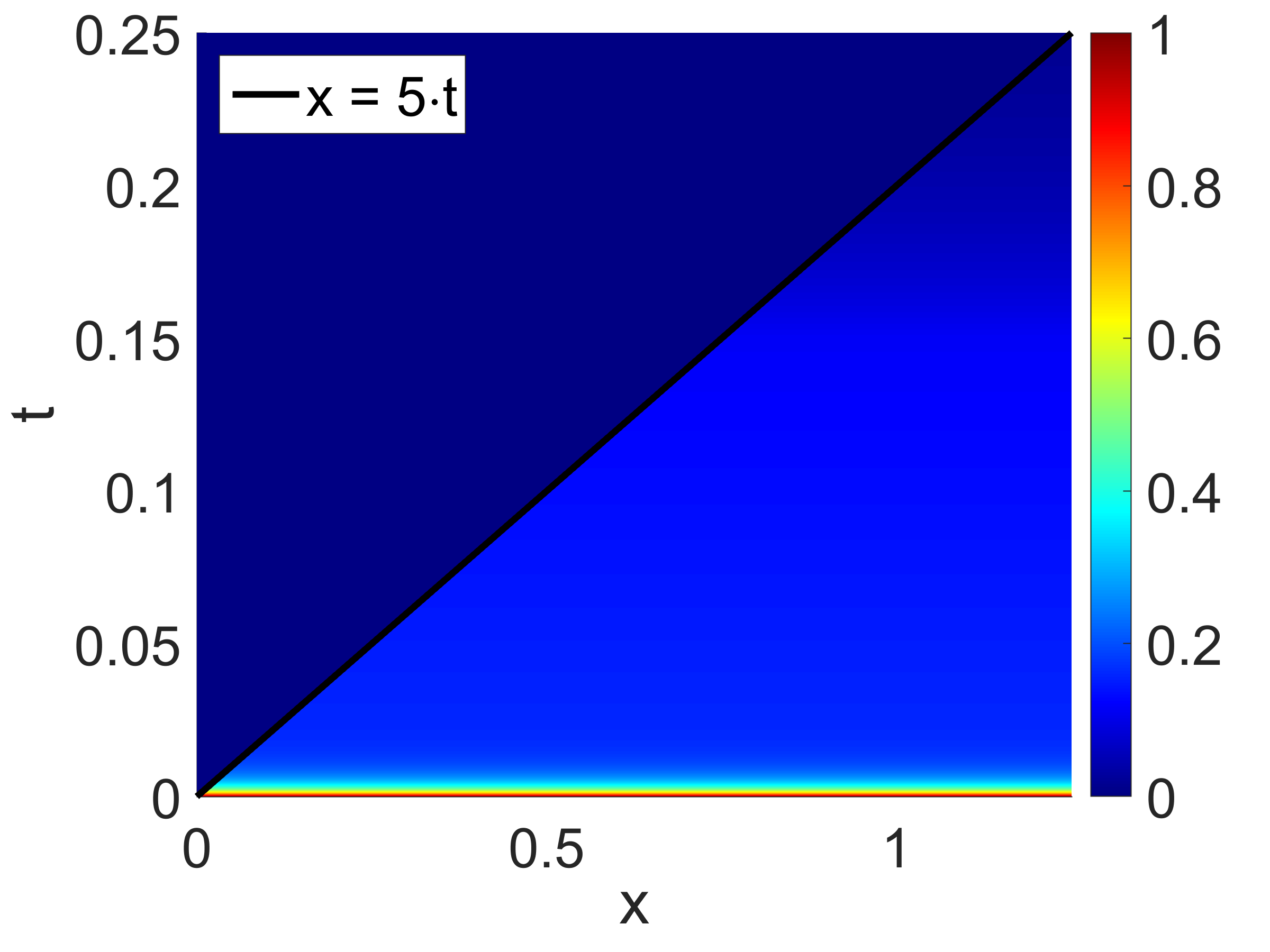}
	\end{center}
	\caption{$\P \big\{ \ancestralProcess(\time) = 5, \treeLength(\time) \leq \treeDimA \big\}$}
\end{subfigure}
\begin{subfigure}[b]{0.33\textwidth}
	\begin{center}
	\includegraphics[width=\textwidth]{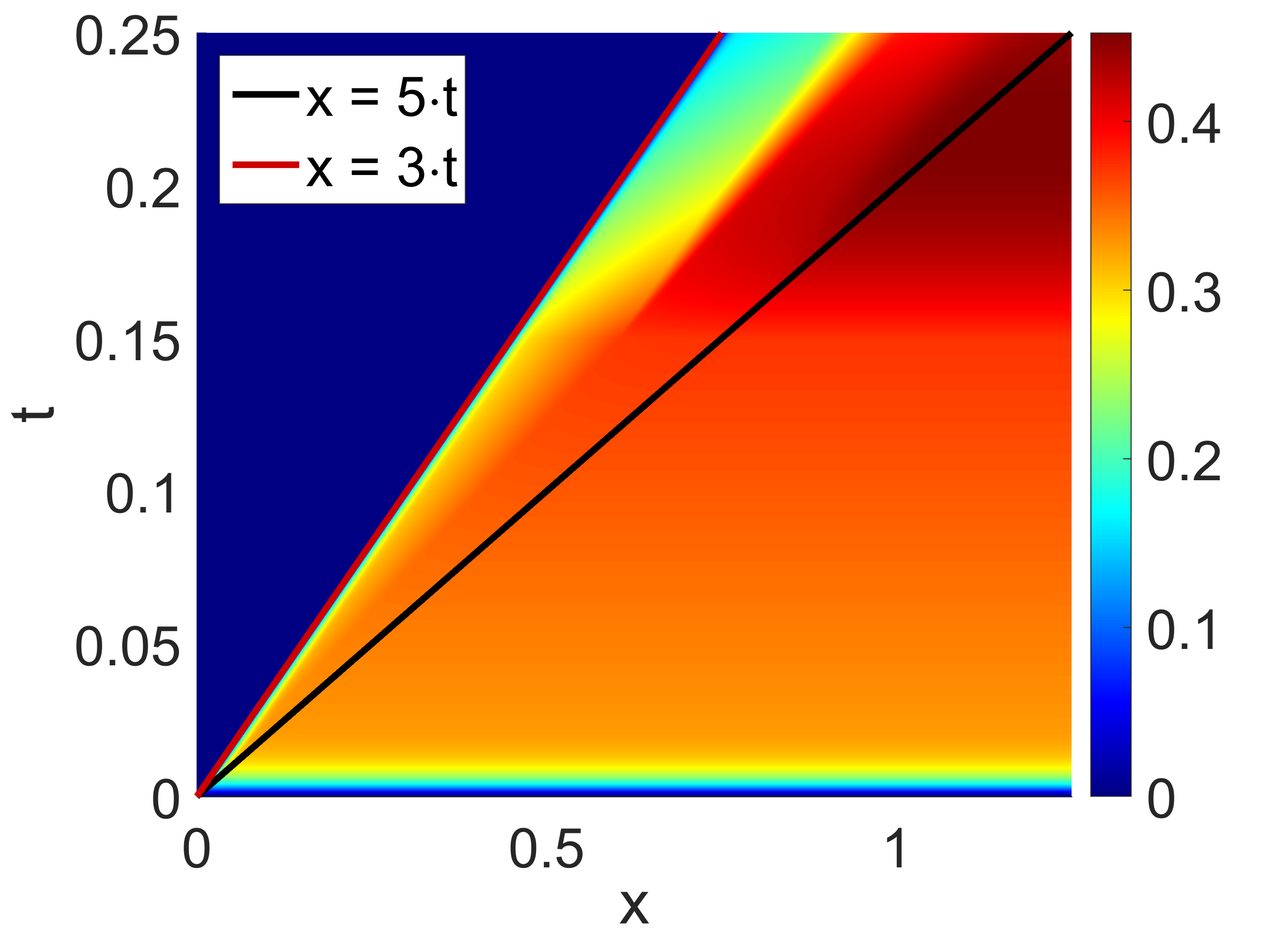}
	\end{center}
	\caption{$\P \big\{ \ancestralProcess(\time) = 3, \treeLength(\time) \leq \treeDimA \big\}$}
\end{subfigure}
\begin{subfigure}[b]{0.33\textwidth}
	\begin{center}
	\includegraphics[width=\textwidth]{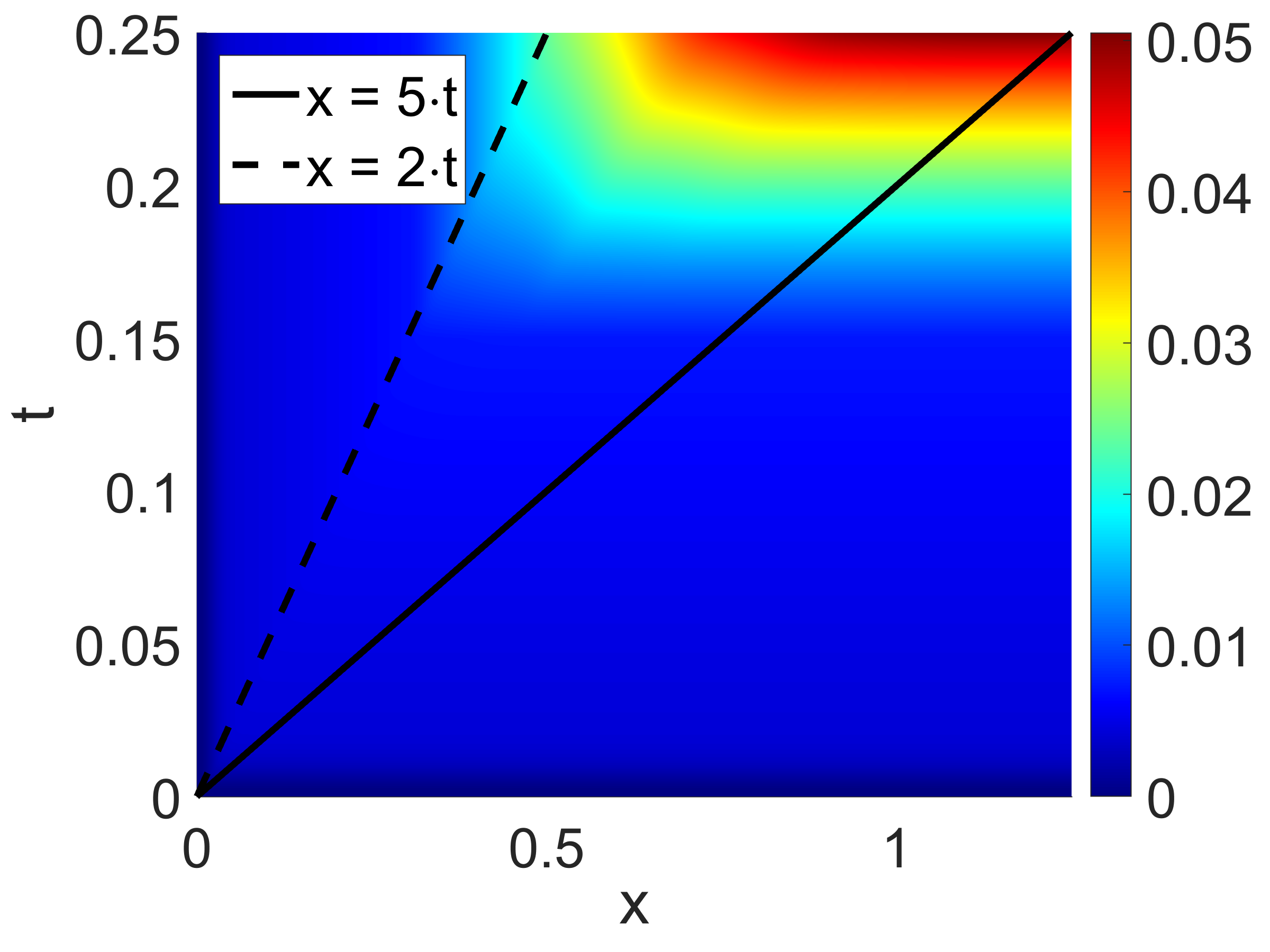}
	\end{center}
	\caption{$\P \big\{ \ancestralProcess(\time) = 1, \treeLength(\time) \leq \treeDimA \big\}$}
\end{subfigure}
\caption{Heatmaps of $\P \big\{ \ancestralProcess(\time) = \numLineages, \treeLength(\time) \leq \treeDimA \big\}$ \edit{(defined in equation~\eqref{eq_temp_cdf})} as a function of $\time$ and $\treeDimA$, for different $\numLineages$, computed using our numerical algorithm.}
\label{fig_heatmaps_cdf_marginal}
\end{figure}

\firstRevision{As shown in Section~\ref{sec_cdf}, the} marginal \firstRevision{CDF} of the total tree length
\begin{equation}
	\P \{ \totalTreeLength \leq \treeDimA\}
\end{equation}
and \firstRevision{the joint CDF}
\begin{equation}
	\P \{ \totalTreeLength^\aLocus \leq \treeDimA , \totalTreeLength^\bLocus \leq \treeDimB \}
\end{equation}
can be computed from the respective time-dependent CDFs.
To demonstrate the accuracy of our numerical algorithm, we compared the numerical values from the algorithm to simulations \firstRevision{under the respective ancestral processes $\ancestralProcess$ and $\ancestralRecoLimitedProcess$.}
\firstRevision{To this end, we simulated a certain number $N$ of trajectories from these processes,} and estimated the respective probabilities.
Figure~\ref{fig_marginal_cdf} shows the marginal CDFs for $\sampleSize = 10$ under exponential growth \firstRevision{($\invPopSize_e$)} and the bottleneck scenario \firstRevision{($\invPopSize_b$)}. The simulations can \firstRevision{also} be used to bound the difference $d(P_\text{pde}, P_\text{T})$ between the values computed using the numerical scheme $P_\text{pde}$ and the true value $P_\text{T}$. \firstRevision{These bounds are} indicated in Figure~\ref{fig_marginal_cdf} for different values of $N$ and decrease as $N$ gets larger, as expected.
\firstRevision{For the joint CDF, we present the numerical values for different $\treeDimA$ and $\treeDimB$, and compare them to the respective estimates from the simulations, including the confidence bounds for these estimates.}
We set $\sampleSize = 10$, and used $\recoRate = 0.001$. The values for the model with exponential growth \firstRevision{($\invPopSize_e$)} are shown in Table~\ref{tab_PLLlambda1}, and for the bottleneck scenario \firstRevision{($\invPopSize_b$)} in Table~\ref{tab_PLLlambda2}.
\firstRevision{The} values computed using the numeric algorithm always fall into the confidence bounds, \firstRevision{demonstrating that our algorithm computes the respective values accurately}. In these tables, it becomes particularly apparent that in order to guarantee a high accuracy using simulations, a very large number of trajectories should be simulated, which is time-consuming. Our numerical scheme yields a high accuracy, and does not suffer from these issues. 

\begin{figure}
\begin{subfigure}[b]{0.5\textwidth}
	\begin{center}
	\includegraphics[width=\textwidth]{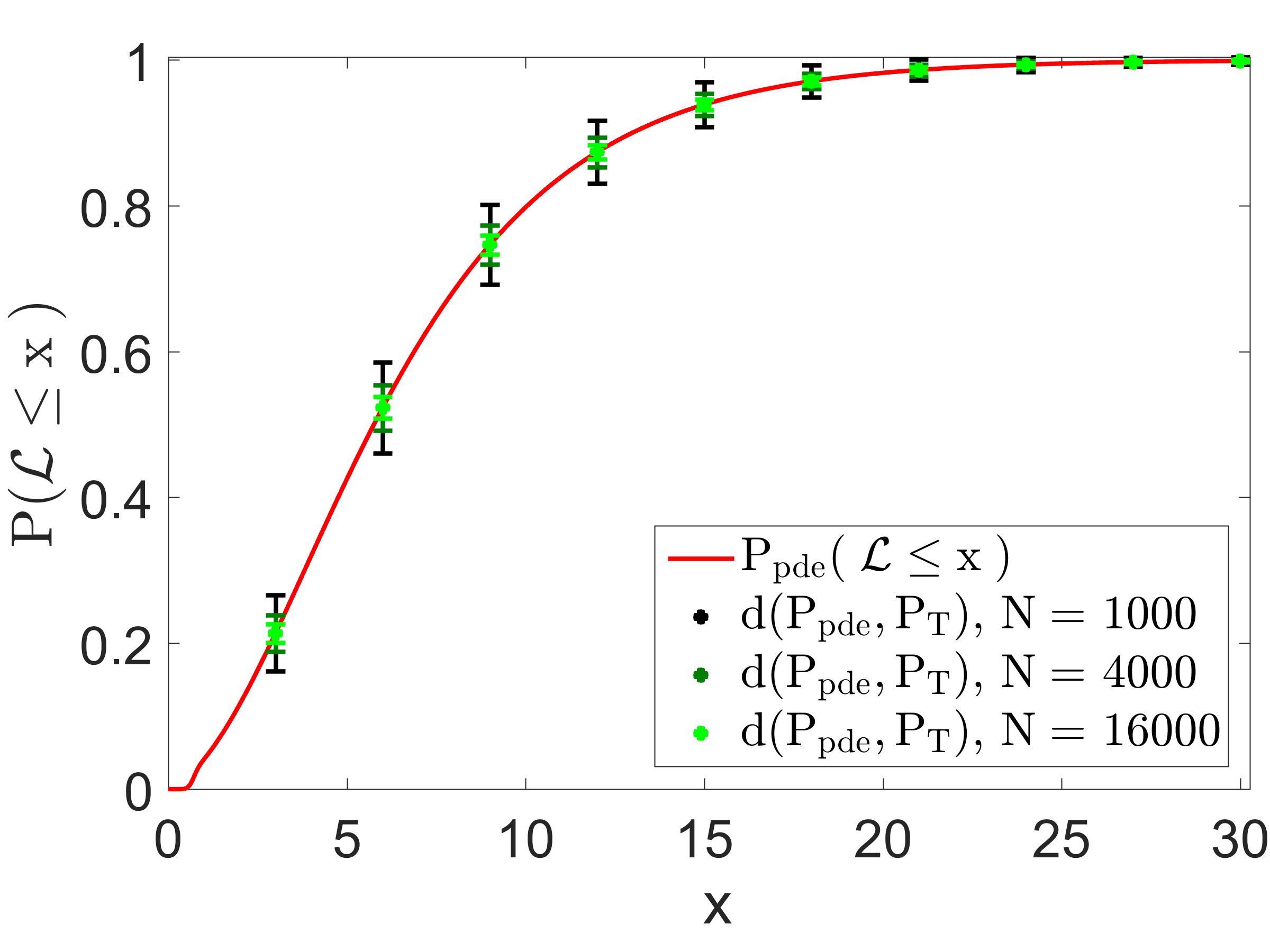}
	\end{center}
	\caption{$\P \{ \totalTreeLength \leq \treeDimA\}$ under exponential population growth \firstRevision{($\invPopSize_e$)}.}
\end{subfigure}
\begin{subfigure}[b]{0.5\textwidth}
	\begin{center} 
	\includegraphics[width=\textwidth]{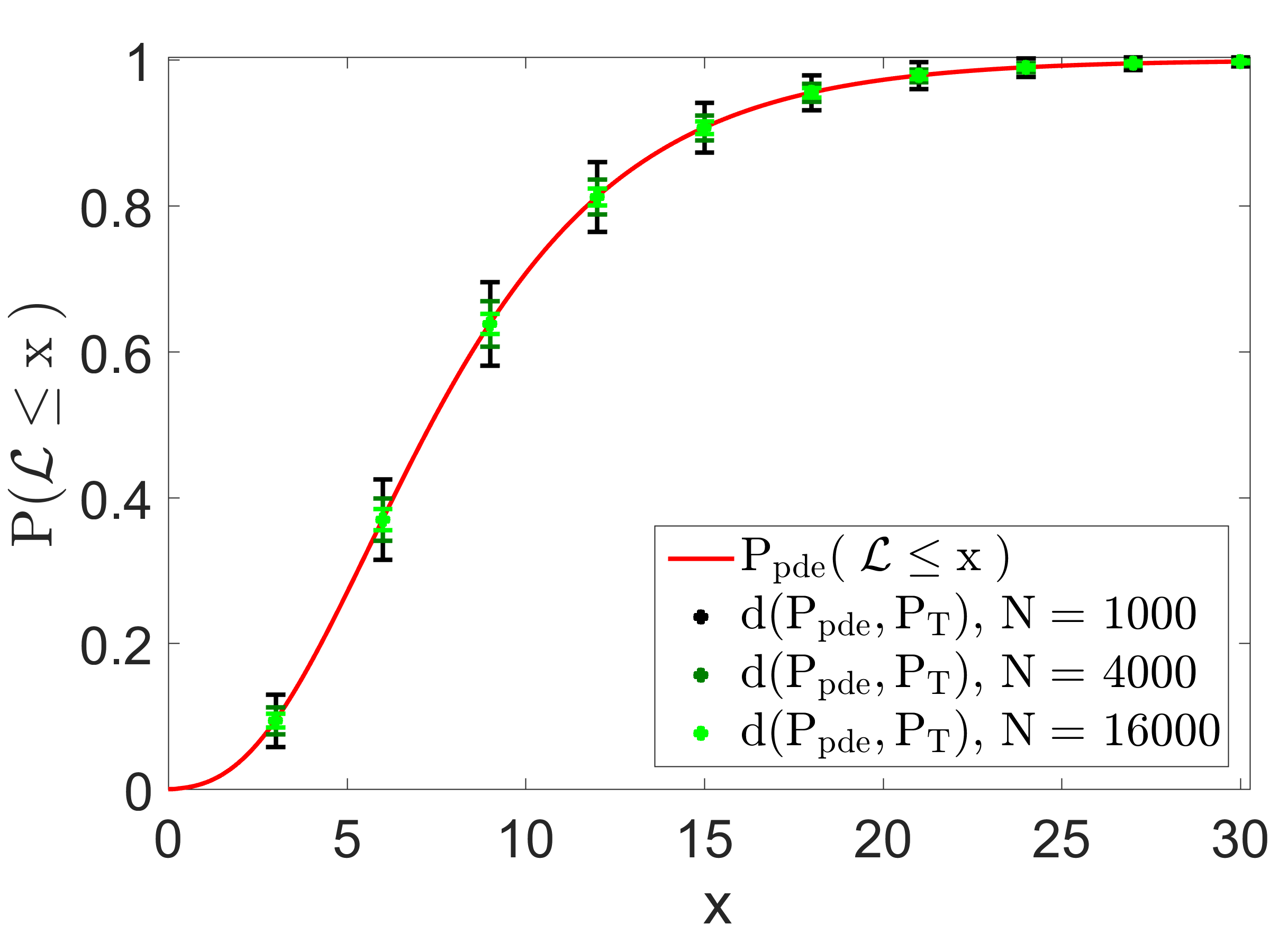}
	\end{center}
	\caption{$\P \{ \totalTreeLength \leq \treeDimA\}$ in the bottleneck scenario \firstRevision{($\invPopSize_b$)}.}
\end{subfigure}
\caption{The CDF $\P \{ \totalTreeLength \leq \treeDimA\}$ \edit{(defined in equation~\eqref{eq_marginal_cdf})} as a function of $\treeDimA$ is depicted by the red line. Additionally, the green bars indicate the bound on the distance between the numerical value $P_\text{pde}$ and the true value $P_\text{T}$ for different $N$, thus the true value is guaranteed to fall within these bounds.}
\label{fig_marginal_cdf}
\end{figure}

\begin{table}
\begin{center}
\begin{tabular}{|c|c|c|c|c|}
\hline $\treeDimA$	& $\treeDimB$	& $p$	& $\hat{p}$ ($N=256,000$)	& $\hat{p}$ ($N=16,384,000$)\\\hline\hline
1.5	& 3.0	& 0.075326	& 0.074914 ($\pm$ 0.002)	& 0.075030 ($\pm$ 0.0002) \\\hline
3.0	& 6.0	& 0.213703	& 0.213324 ($\pm$ 0.002)	& 0.213565 ($\pm$ 0.0002) \\\hline
6.0	& 6.0	& 0.522821	& 0.521578 ($\pm$ 0.002)	& 0.522707 ($\pm$ 0.0003) \\\hline
12.0	& 18.0	& 0.873357	& 0.872840 ($\pm$ 0.002)	& 0.873319 ($\pm$ 0.0002) \\\hline
30.0	& 30.0	& 0.998499	& 0.998504 ($\pm$ 0.0002)	& 0.998516 ($\pm$ 0.00002) \\\hline
\end{tabular}
\end{center}
\caption{The CDF $\P \{ \totalTreeLength^\aLocus \leq \treeDimA , \totalTreeLength^\bLocus \leq \treeDimB \}$ \edit{(defined in equation~\eqref{eq_joint_cdf})} for different values of $\treeDimA$ and $\treeDimB$ under $\invPopSize_1$, with $\sampleSize = 10$ and $\recoRate = 0.001$. $p$ is computed using the numeric algorithm, and $\hat{p}$  is estimated from simulations for different $N$. The confidence bounds are indicated in parentheses.}
\label{tab_PLLlambda1}
\end{table}

\begin{table}
\begin{center}
\begin{tabular}{|c|c|c|c|c|}
\hline $\treeDimA$	& $\treeDimB$	& $p$	& $\hat{p}$ ($N=256,000$)	& $\hat{p}$ ($N=16,384,000$)\\\hline\hline
1.5	& 3.0	& 0.019794	& 0.019238 ($\pm$ 0.0006)	& 0.019579 ($\pm$ 0.00007) \\\hline
3.0	& 6.0	& 0.094393	& 0.094414 ($\pm$ 0.002)	& 0.094172 ($\pm$ 0.0002) \\\hline
6.0	& 6.0	& 0.369581	& 0.369059 ($\pm$ 0.002)	& 0.369544 ($\pm$ 0.0003) \\\hline
12.0	& 18.0	& 0.812236	& 0.812328 ($\pm$ 0.002)	& 0.812109 ($\pm$ 0.0002) \\\hline
30.0	& 30.0	& 0.997696	& 0.997922 ($\pm$ 0.0002)	& 0.997721 ($\pm$ 0.00003) \\\hline
\end{tabular}
\end{center}
\caption{The CDF $\P \{ \totalTreeLength^\aLocus \leq \treeDimA , \totalTreeLength^\bLocus \leq \treeDimB \}$ \edit{(defined in equation~\eqref{eq_joint_cdf})} for different values of $\treeDimA$ and $\treeDimB$ under $\invPopSize_2$, with $\sampleSize = 10$ and $\recoRate = 0.001$. $p$ is computed using the numeric algorithm, and $\hat{p}$  is estimated from simulations for different $N$. The confidence bounds are indicated in parentheses.}
\label{tab_PLLlambda2}
\end{table}

\subsection{Properties of the Distributions}

The results provided in the previous section show that our numerical algorithm can be used to accurately and efficiently compute the marginal and joint CDF of the total tree length in populations with variable size. We will now demonstrate the utility of our numerical method to study the properties of the respective distributions.

The numerical values of the marginal CDF $\P \{ \totalTreeLength \leq \treeDimA\}$ can be readily applied to compute the approximations of the expected value and the variance of the total tree length $\totalTreeLength$. Figure~\ref{fig_expc_var_lambda1} shows the different values of the expectation and the variance under exponential growth \firstRevision{($\invPopSize_e$)}, with varying growth-rates $g$. Recall that a rate of $g=0$ corresponds to no growth.
\firstRevision{Figure~\ref{fig_expc_var_lambda2} shows the expected value and the variance under the bottleneck model ($\invPopSize_b$) for different values of the bottleneck size $N_B$.}
\firstRevision{In both scenarios, the expected value and the variance are smallest in the models with the smallest contemporary population size, corresponding to the largest recent coalescent rate.}
\firstRevision{They increase as $g$, respectively $N_B$, increases, but level off, indicating that increasing the population size has diminishing effects for large values.}
\firstRevision{The absolute value of the expectation is higher in the bottleneck scenario, because, independent of the growth parameter, there is a substantial bottleneck in the growth-scenario.}

\begin{figure}
\begin{subfigure}[b]{0.5\textwidth}
	\begin{center}
	\includegraphics[width=\textwidth]{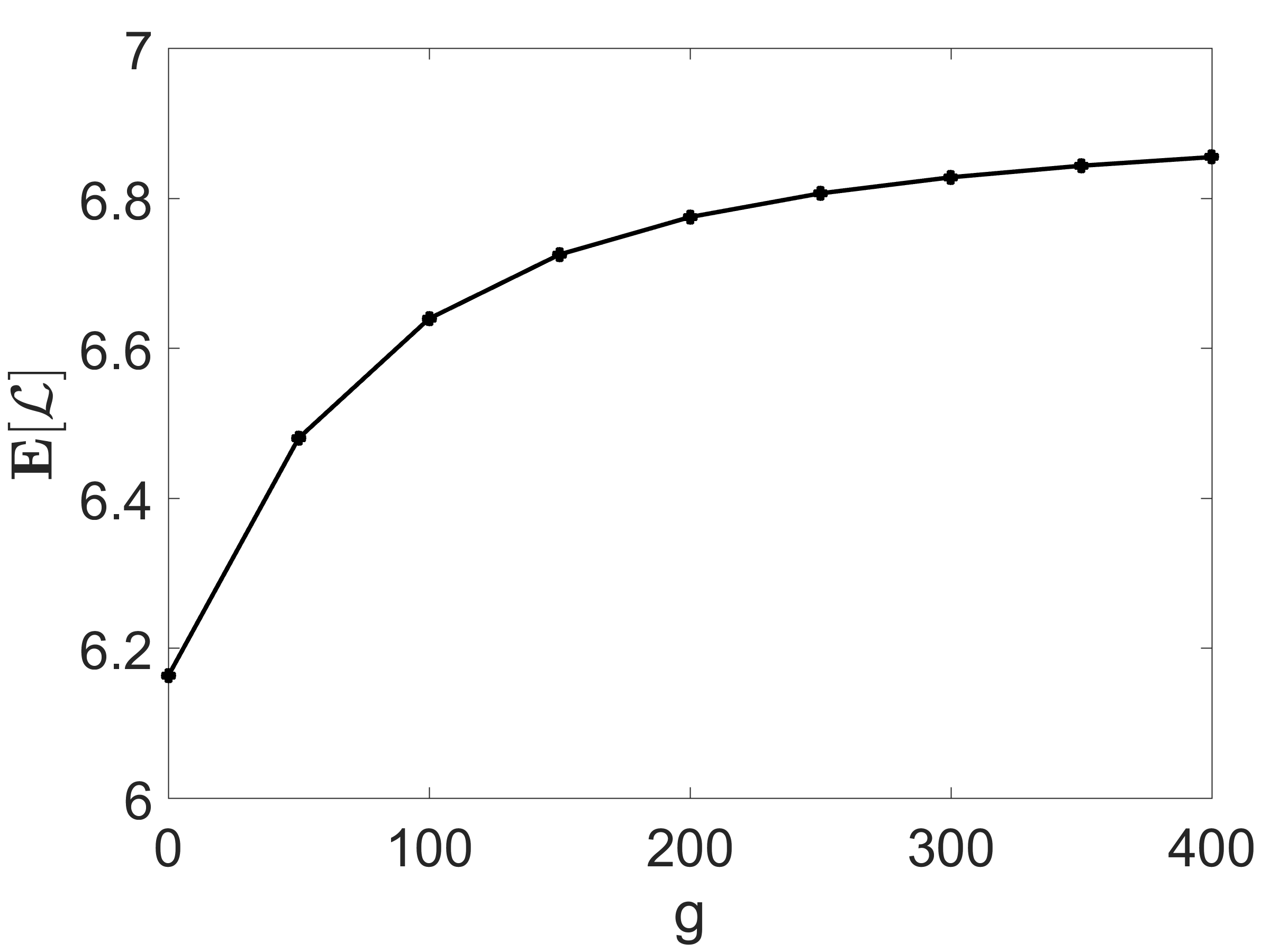}
	\end{center}
	\caption{The expected value of $\totalTreeLength$.}
\end{subfigure}
\begin{subfigure}[b]{0.5\textwidth}
	\begin{center}
	\includegraphics[width=\textwidth]{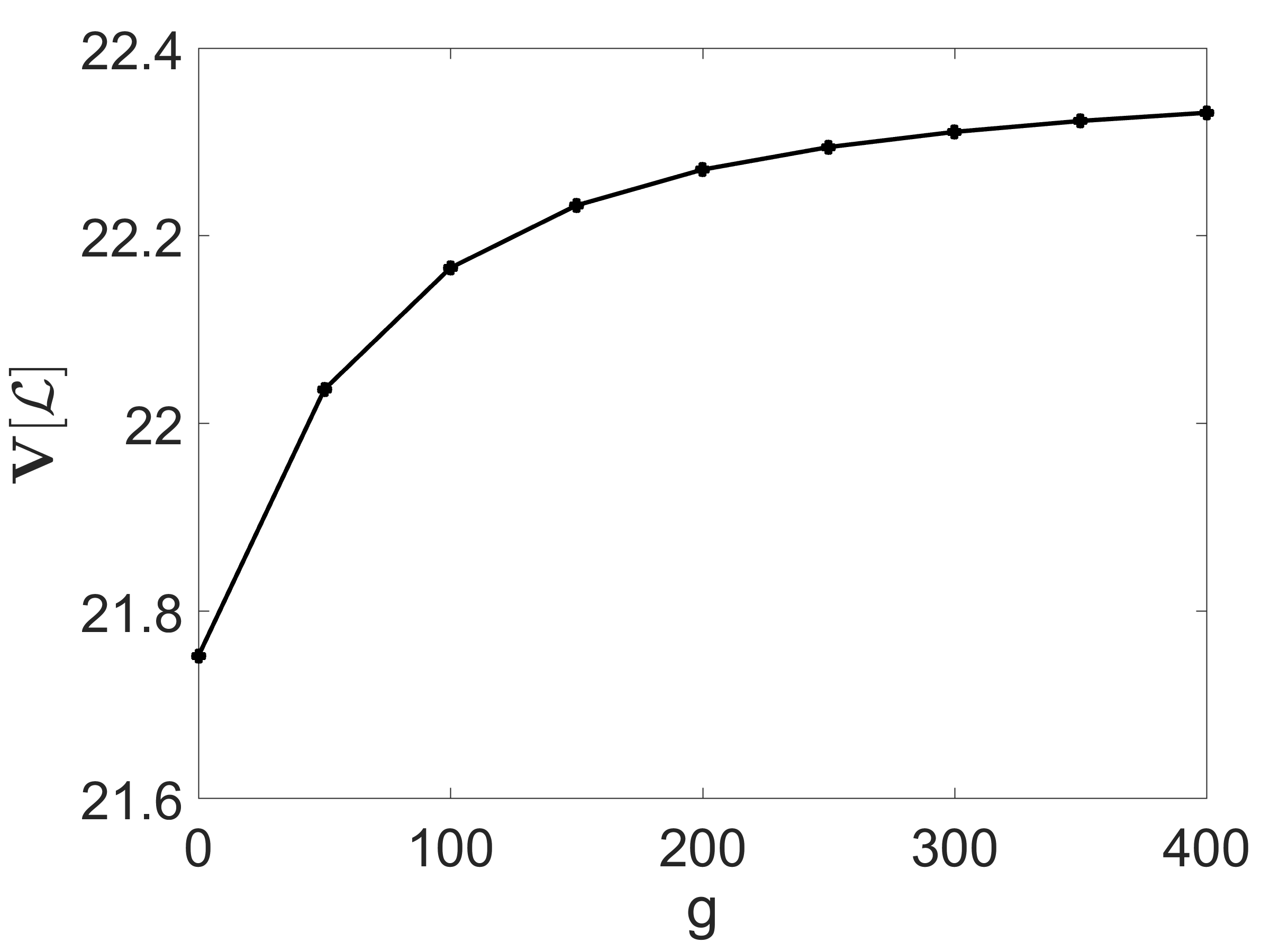}
	\end{center}
	\caption{The variance of $\totalTreeLength$.}
\end{subfigure}
\caption{Approximations to the expected value and the variance of \edit{the total tree length} $\totalTreeLength$ \edit{(defined in equation~\eqref{eq_def_total_tree_length})} computed using our numerical procedure, \firstRevision{under the model for} exponential growth \firstRevision{($\invPopSize_e$)}, with different values for the growth-rate $g$.}
\label{fig_expc_var_lambda1}
\end{figure}

\begin{figure}
\begin{subfigure}[b]{0.5\textwidth}
	\begin{center}
	\includegraphics[width=\textwidth]{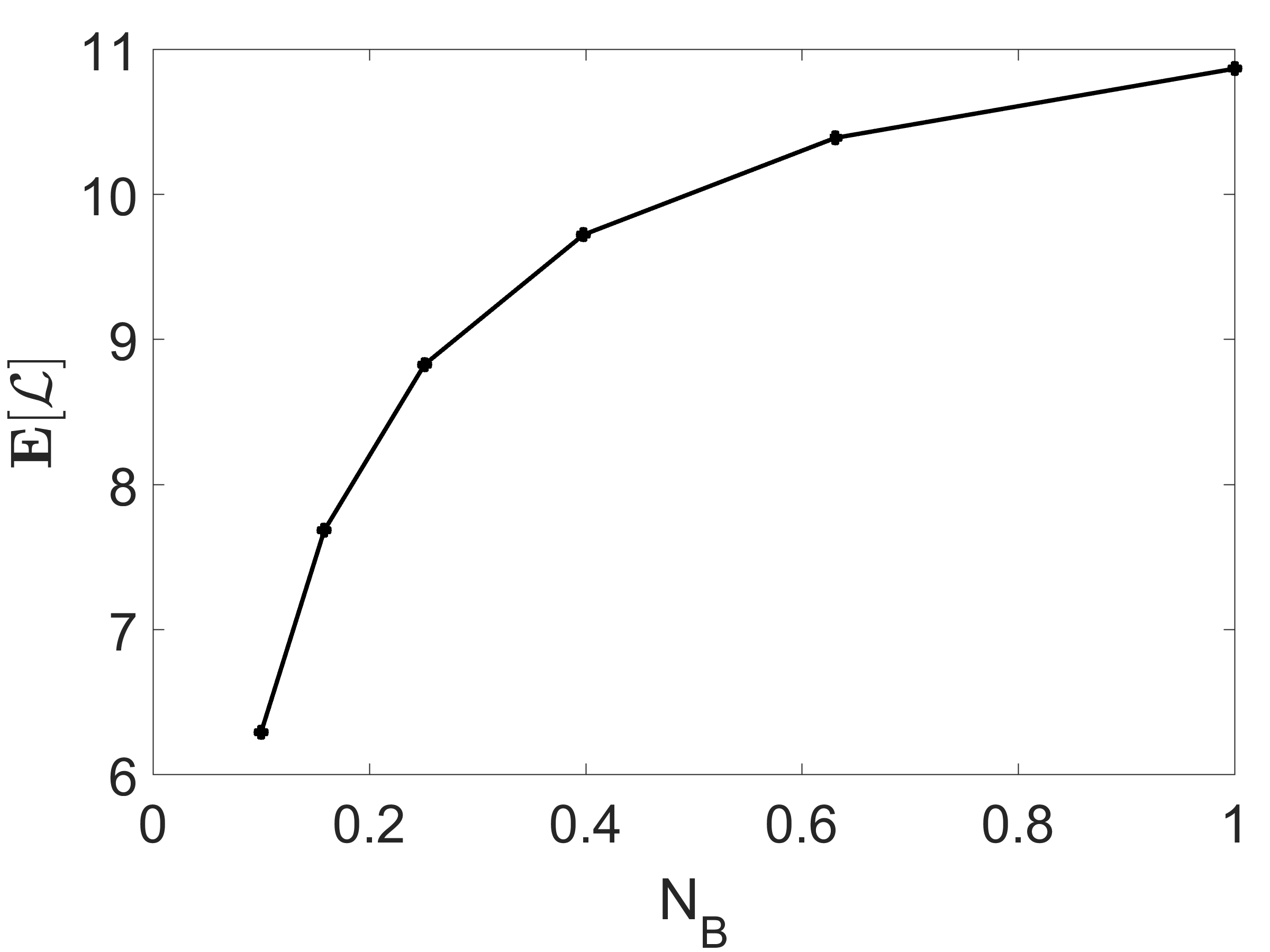}
	\end{center}
	\caption{The expected value of $\totalTreeLength$.}
\end{subfigure}
\begin{subfigure}[b]{0.5\textwidth}
	\begin{center}
	\includegraphics[width=\textwidth]{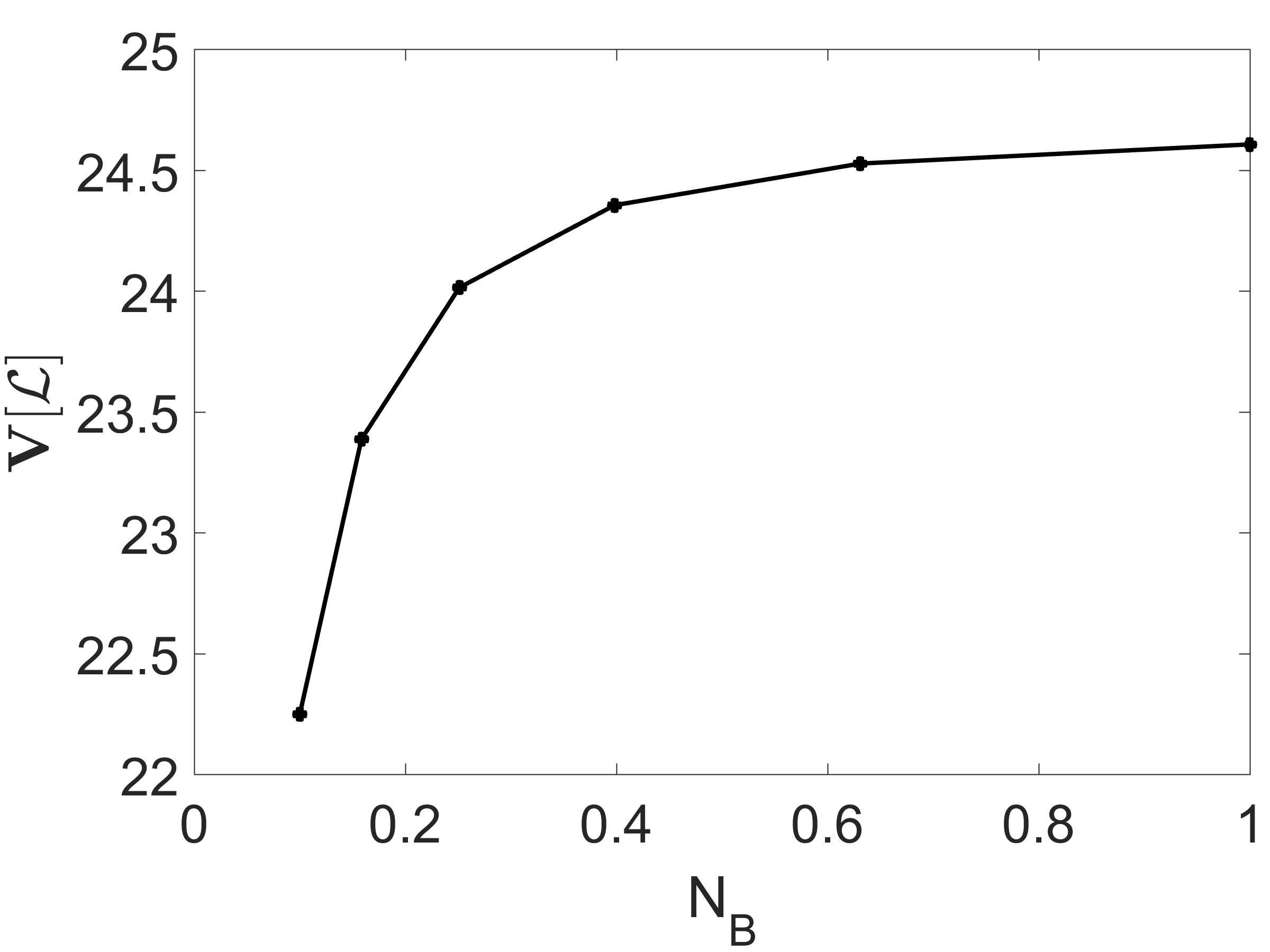}
	\end{center}
	\caption{The variance of $\totalTreeLength$.}
\end{subfigure}
\caption{Approximations to the expected value and the variance of \edit{the total tree length} $\totalTreeLength$ \edit{(defined in equation~\eqref{eq_def_total_tree_length})}, under the bottleneck model \firstRevision{($\invPopSize_b$)}, with different values for the bottleneck size $N_B$.}
\label{fig_expc_var_lambda2}
\end{figure}

Figure~\ref{fig_joint_cdf_surf} shows the joint CDF $\P \{ \totalTreeLength^\aLocus \leq \treeDimA , \totalTreeLength^\bLocus \leq \treeDimB \}$ as a function of $\treeDimA$ and $\treeDimB$ for different population size scenarios and different recombination rates $\recoRate$, computed on a suitable grid using our numerical algorithm. Naturally, the CDF converges towards $1$ as $\treeDimA$ and $\treeDimB$ increase, and due to the symmetry of the ancestral process $\ancestralRecoLimitedProcess$ the CDF is symmetric when interchanging $\treeDimA$ and $\treeDimB$. Furthermore, note that the isolines in the plots for $\recoRate=0.0001$ show pronounced right angles along the line $\treeDimA = \treeDimB$,
\firstRevision{because for small $\recoRate$ the trees at the two loci are highly correlated.}
As the recombination rate increases, the two tree lengths become increasingly uncorrelated, and \firstRevision{these angles soften.}
In all plots, the isoline for $0.2$ is around $\treeDimA = \treeDimB = 5$, for the case \firstRevision{$\invPopSize_e$} even lower. Thus, under \firstRevision{$\invPopSize_e$}, there is an elevated probability for very short trees,
\firstRevision{likely due to the strong bottleneck, which favors short trees.}
Under the constant population size model \firstRevision{$\invPopSize_c$}, the CDF increases rapidly as $\treeDimA$ and $\treeDimB$ increase, whereas the function is less steep for \firstRevision{$\invPopSize_e$ and $\invPopSize_b$}. This behavior seems to be dominated by the ancient population sizes.

\begin{figure}
\begin{center}
\captionsetup[subfigure]{aboveskip=-1.9ex,belowskip=1.5ex}
\begin{subfigure}[b]{0.49\textwidth}
	\begin{center}
	\includegraphics[width=\textwidth]{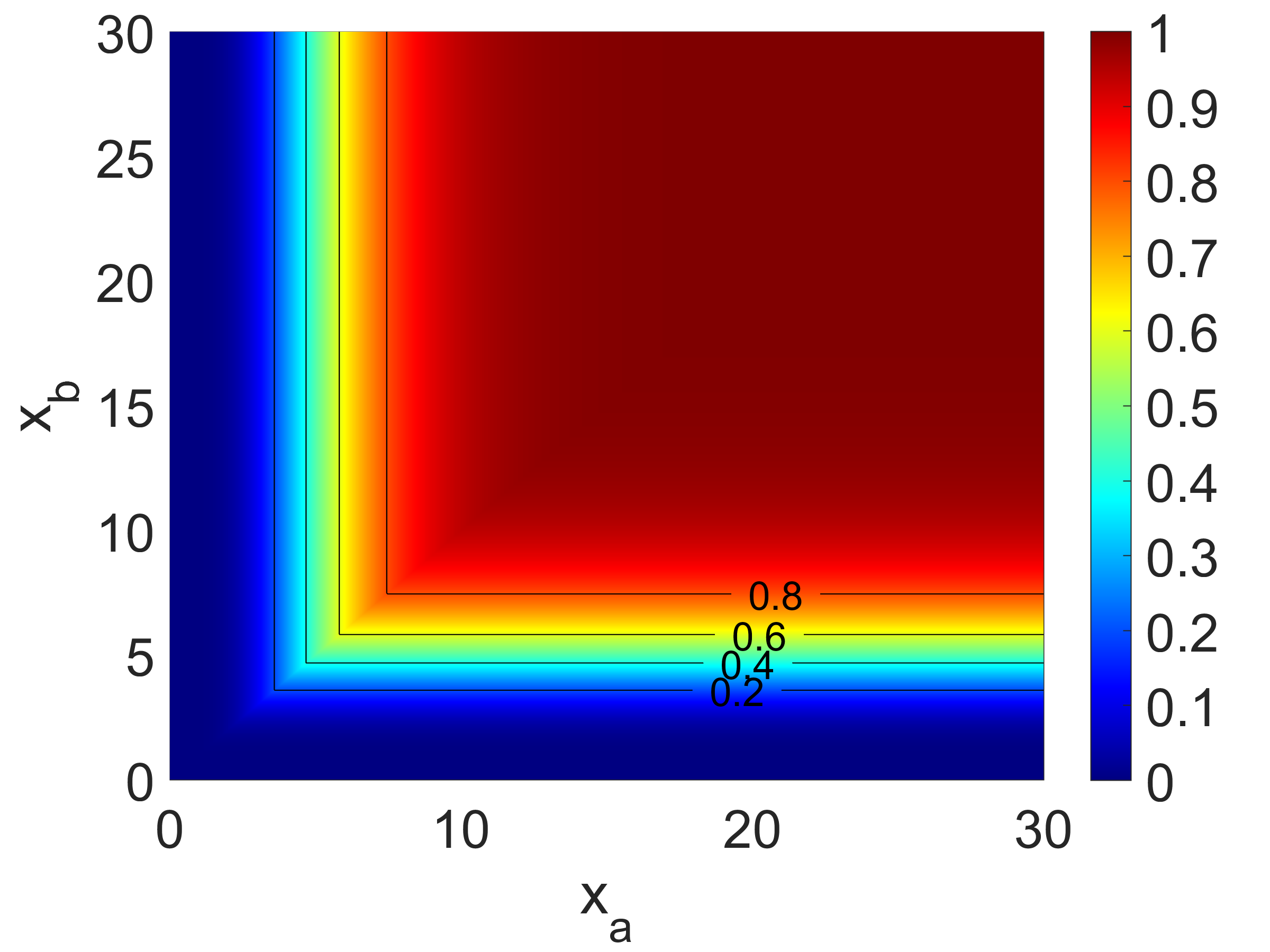}
	\end{center}
	\caption{\firstRevision{Constant population size ($\lambda_c$), $\rho=0.0001$.}}
\end{subfigure}
\begin{subfigure}[b]{0.49\textwidth}
	\begin{center}
	\includegraphics[width=\textwidth]{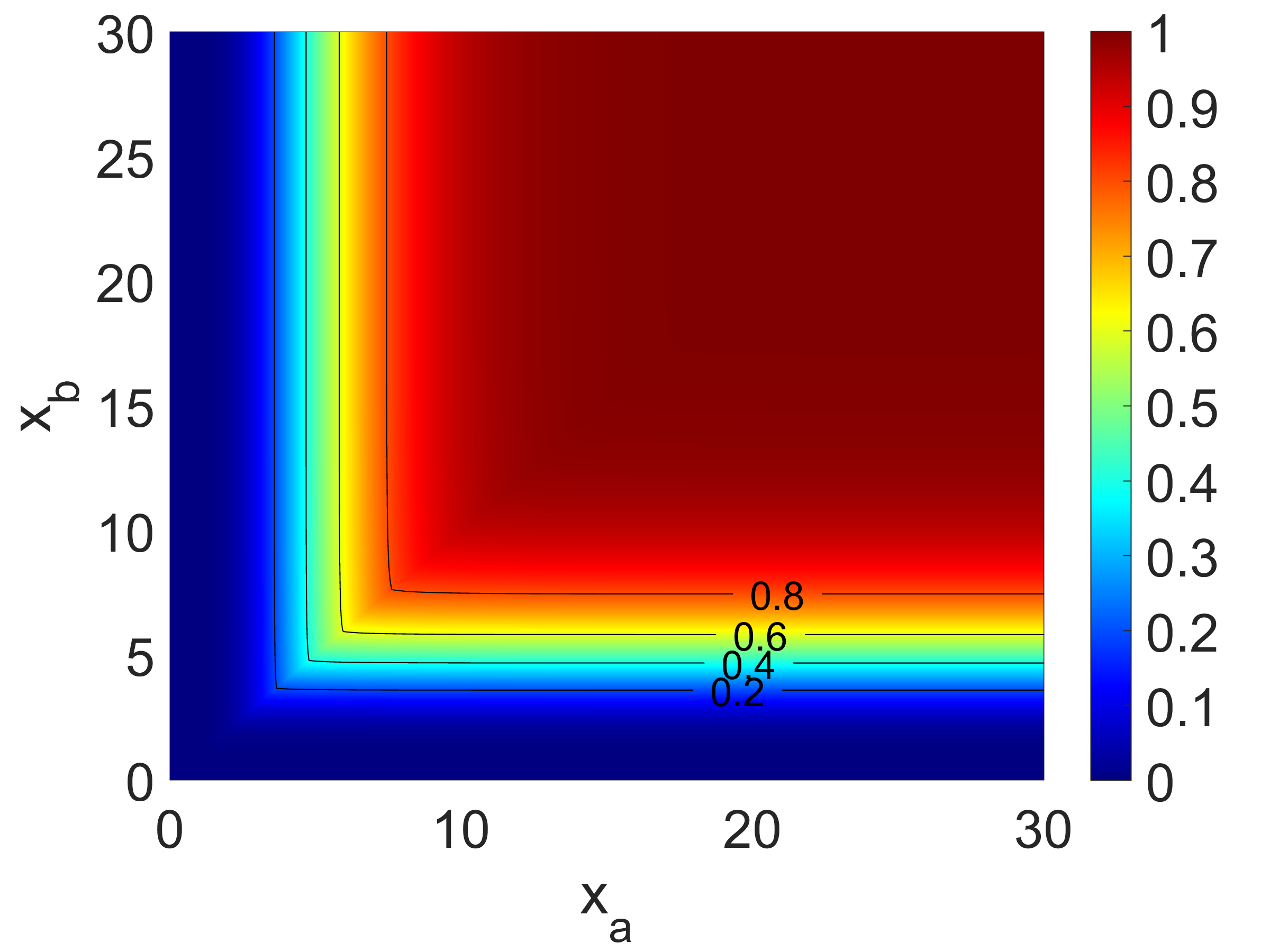}
	\end{center}
	\caption{\firstRevision{Constant population size ($\lambda_c$), $\rho=0.1$.}}
\end{subfigure}
\begin{subfigure}[b]{0.49\textwidth}
	\begin{center}
	\includegraphics[width=\textwidth]{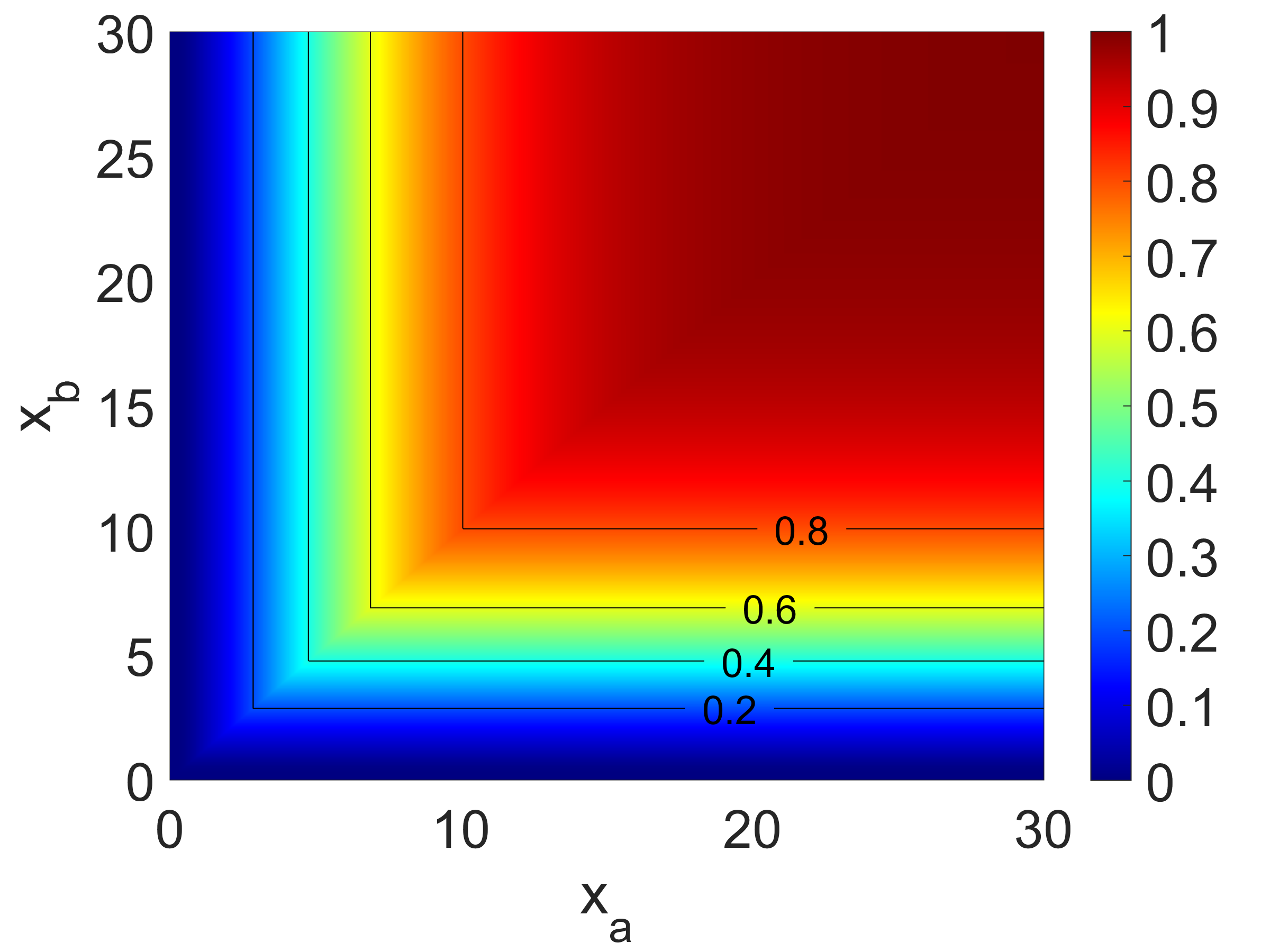}
	\end{center}
	\caption{\firstRevision{Exponential growth ($\lambda_e$), $\rho=0.0001$.}}
\end{subfigure}
\begin{subfigure}[b]{0.49\textwidth}
	\begin{center}
	\includegraphics[width=\textwidth]{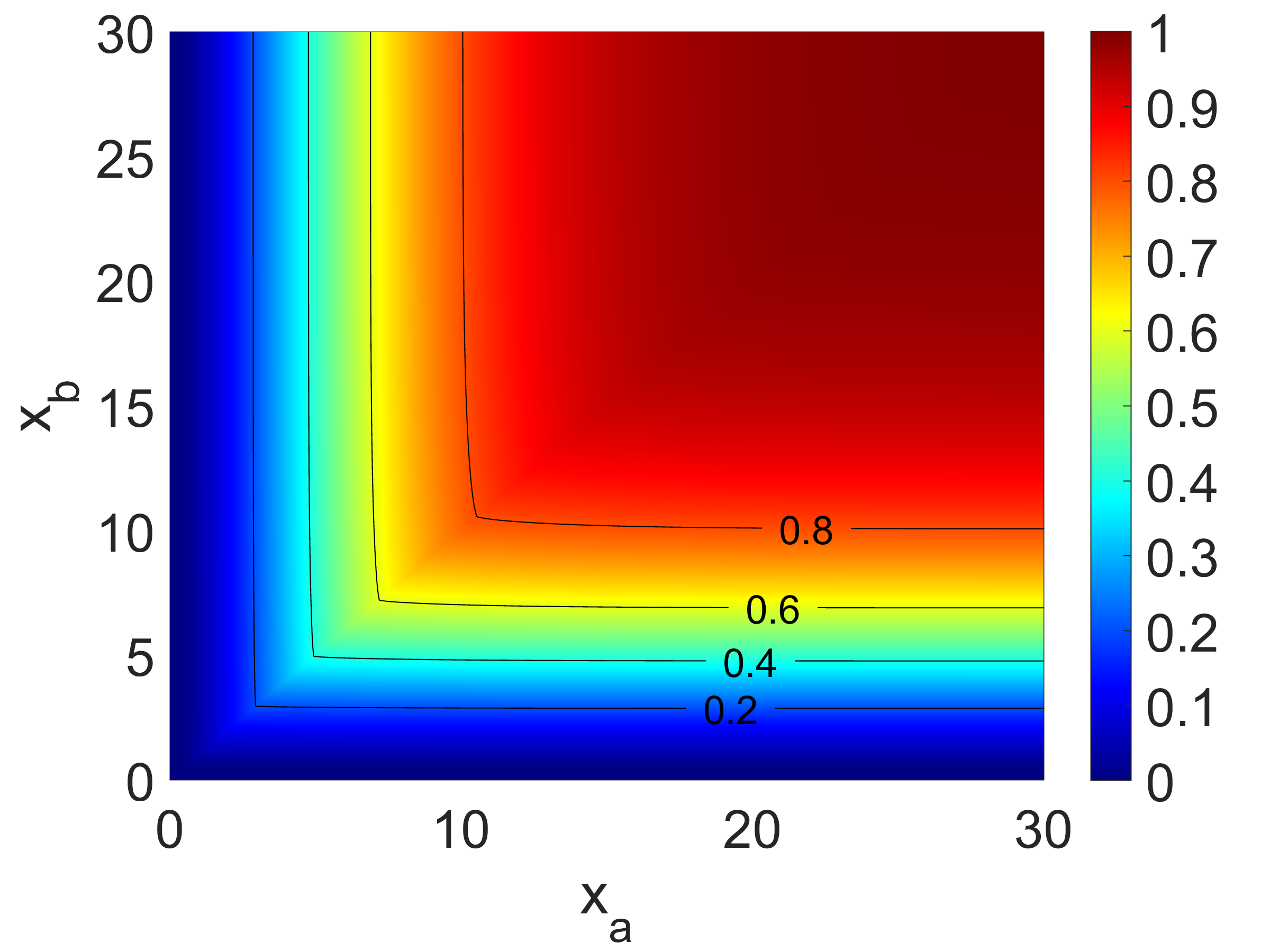}
	\end{center}
	\caption{\firstRevision{Exponential growth ($\lambda_e$), $\rho=0.1$.}}
\end{subfigure}
\begin{subfigure}[b]{0.49\textwidth}
	\begin{center}
	\includegraphics[width=\textwidth]{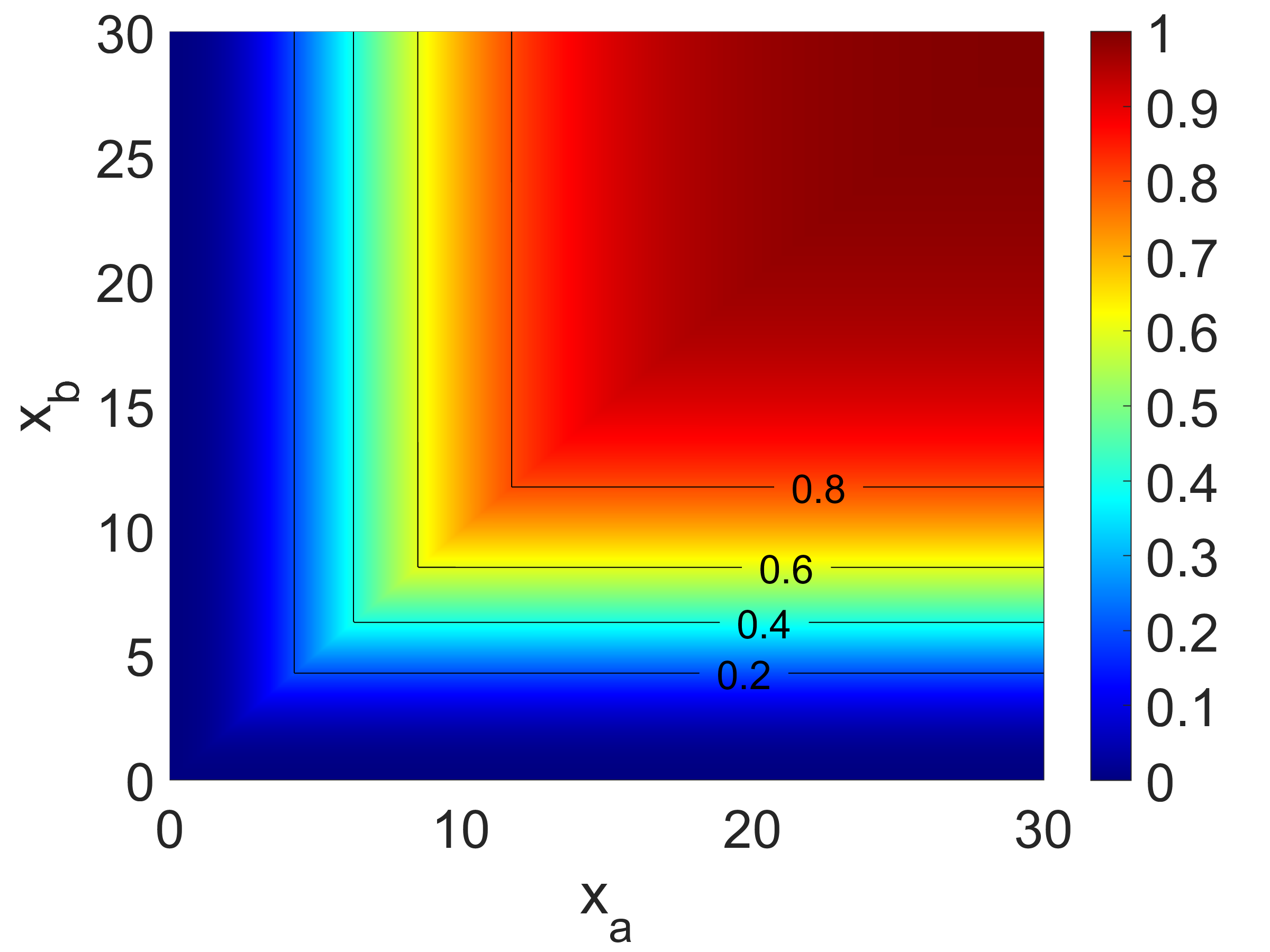}
	\end{center}
	\caption{\firstRevision{Bottleneck model ($\lambda_b$), $\rho=0.0001$.}}
\end{subfigure}
\begin{subfigure}[b]{0.49\textwidth}
	\begin{center}
	\includegraphics[width=\textwidth]{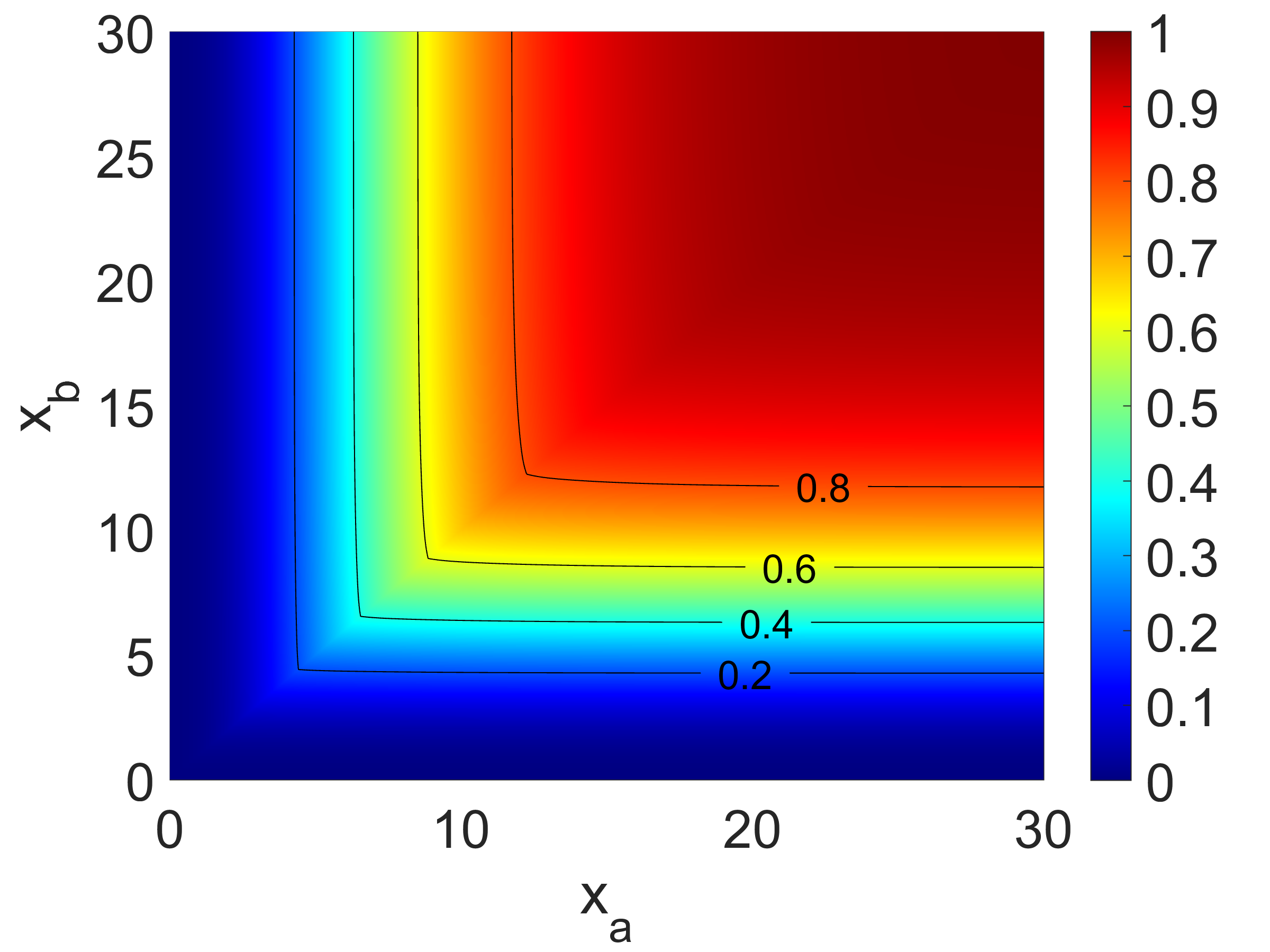}
	\end{center}
	\caption{\firstRevision{Bottleneck model ($\lambda_b$), $\rho=0.1$.}}
\end{subfigure}
\caption{\firstRevision{The joint CDF $\P \{ \totalTreeLength^\aLocus \leq \treeDimA , \totalTreeLength^\bLocus \leq \treeDimB \}$ \edit{(defined in equation~\eqref{eq_joint_cdf})} for the three population models (rows), with different recombination rates $\rho$ (columns). Again, we use $\sampleSize = 10$.}}
\label{fig_joint_cdf_surf}
\end{center}
\end{figure}

Finally, we employ our numerical values of the joint CDF to compute approximations to the correlation coefficient between the tree lengths
\begin{equation}
	\text{corr}(\totalTreeLength^\aLocus, \totalTreeLength^\bLocus ) := \frac{\text{cov}(\totalTreeLength^\aLocus, \totalTreeLength^\bLocus )}{\sqrt{\V \totalTreeLength^\aLocus}\sqrt{\V \totalTreeLength^\bLocus}},
\end{equation}
where $\text{cov}(\cdot,\cdot)$ denotes the covariance.
Figure~\ref{fig_corr} shows this correlation coefficient under the population size \firstRevision{history $\invPopSize_e$} for different values of $\recoRate$, and sample sizes \firstRevision{$\sampleSize = 5$ and $\sampleSize = 20$, respectively.} Recall that our numerical procedure was \firstRevision{derived using} the approximate ancestral process $\ancestralRecoLimitedProcess$ for computational efficiency, where we limited the number of recombination events to $1$. To compare the correlation under the process $\ancestralRecoLimitedProcess$ with the correlation under the regular ancestral process with recombination $\ancestralRecoProcess$,
\firstRevision{we estimated the latter} from repeated simulations using the widely applied coalescent-simulation tool \texttt{ms}~\citep{Hudson2002}, which is based on the regular coalescent with recombination (using \firstRevision{$N=10^7$} repetitions). Naturally, the correlation is close to $1$ for small recombination rates, \firstRevision{and it decreases with increasing recombination rate.}
The values are basically indistinguishable until they start separating around $\recoRate = 0.05$. This is to be expected, since the approximation we introduced limits the number of recombination events to $1$, and thus, as the recombination rate increases, the approximation error also increases.

\firstRevision{To further investigate how restrictive the assumption of at most one recombination event is, we also used the simulated trajectories to estimate the probability that two or more recombination events occur under the regular ancestral process $\ancestralRecoProcess$. The results are shown in Figure~\ref{fig_reco_two_plu}. These probabilities increase with increasing $\recoRate$ and $\sampleSize$. However, they remain small for $\recoRate \leq 0.05$, which is in good agreement with the observation that the correlation is well approximated for $\recoRate$ up to 0.05.}
\firstRevision{In conclusion, Figure~\ref{fig_corr} and Figure~\ref{fig_reco_two_plu} show} that the approximate process can be used without loss \firstRevision{of} accuracy for a large range of recombination rates relevant for human genetics, where recombination rates between neighboring sites are on the order of $10^{-3}$.

\begin{figure}
\begin{subfigure}[b]{0.5\textwidth}
	\begin{center}
	\includegraphics[width=\textwidth]{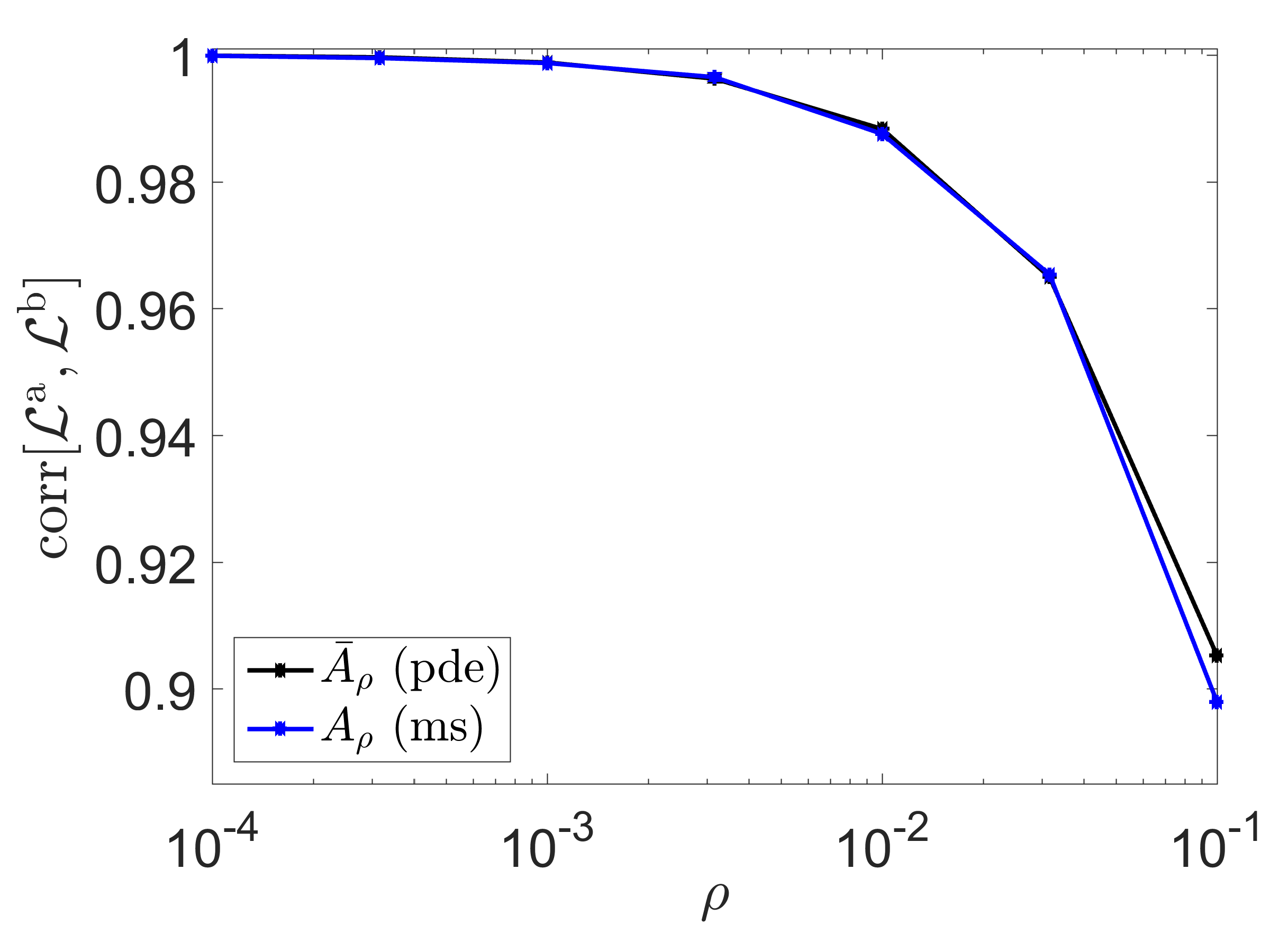}
	\end{center}
	\caption{\firstRevision{Sample size $\sampleSize=5$.}}
\end{subfigure}
\begin{subfigure}[b]{0.5\textwidth}
	\begin{center}
	\includegraphics[width=\textwidth]{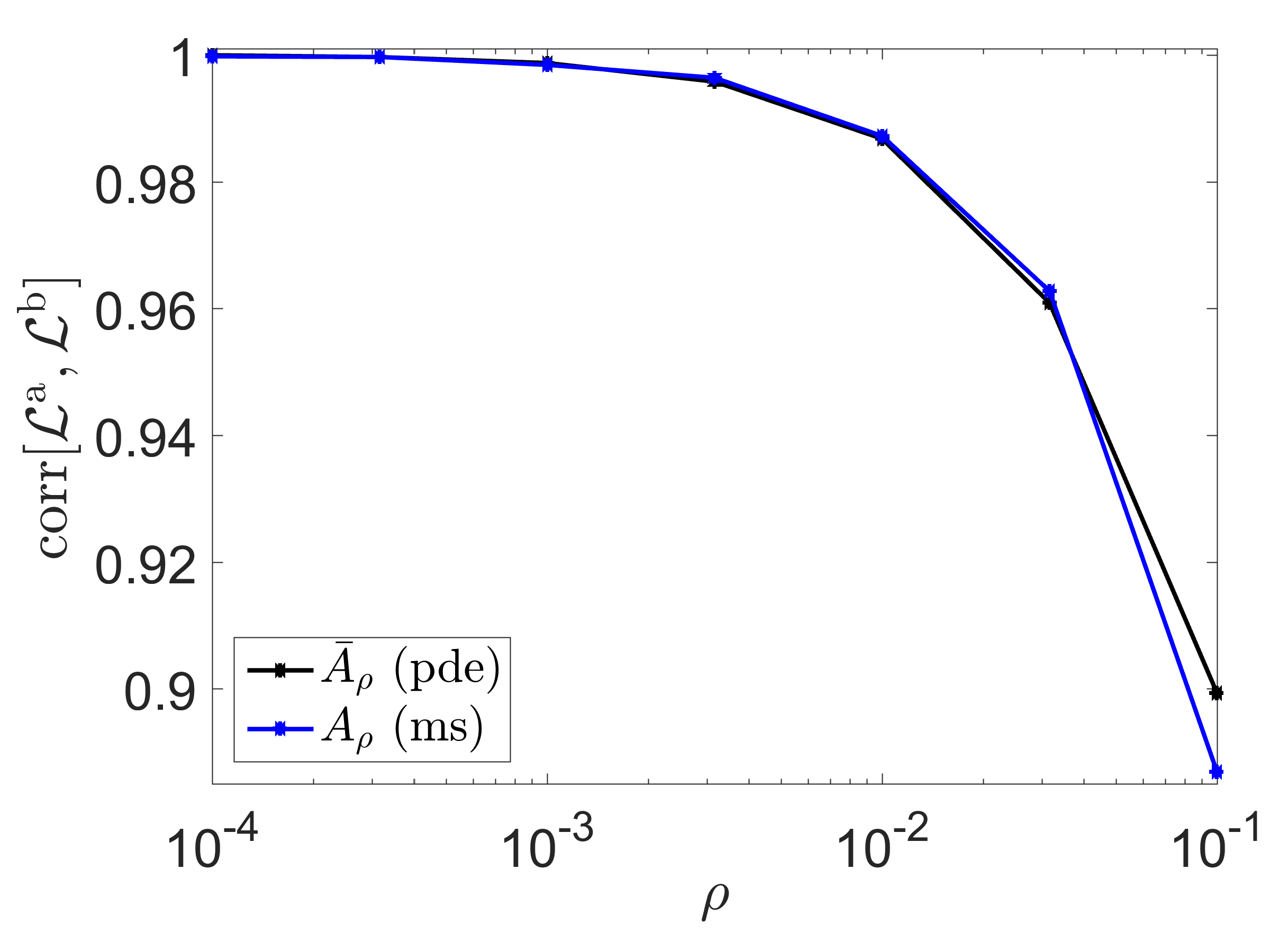}
	\end{center}
	\caption{\firstRevision{Sample size $\sampleSize=20$.}}
\end{subfigure}
\caption{\firstRevision{Correlation between $\totalTreeLength^\aLocus$ and $\totalTreeLength^\bLocus$ \edit{(defined in equations~\eqref{def_total_length_a} and~\eqref{def_total_length_b})} under the exponential growth model ($\invPopSize_e$) for different sample sizes $\sampleSize$ and different recombination rates $\recoRate$. The black lines show the values computed using our method under $\ancestralRecoLimitedProcess$, and the blue lines show values estimated from coalescent simulations under $\ancestralRecoProcess$ using the popular tool \texttt{ms} (using $N=10^7$ repetitions)}.}
\label{fig_corr}
\end{figure}

\begin{figure}
\begin{center}
\includegraphics[width=.5\textwidth]{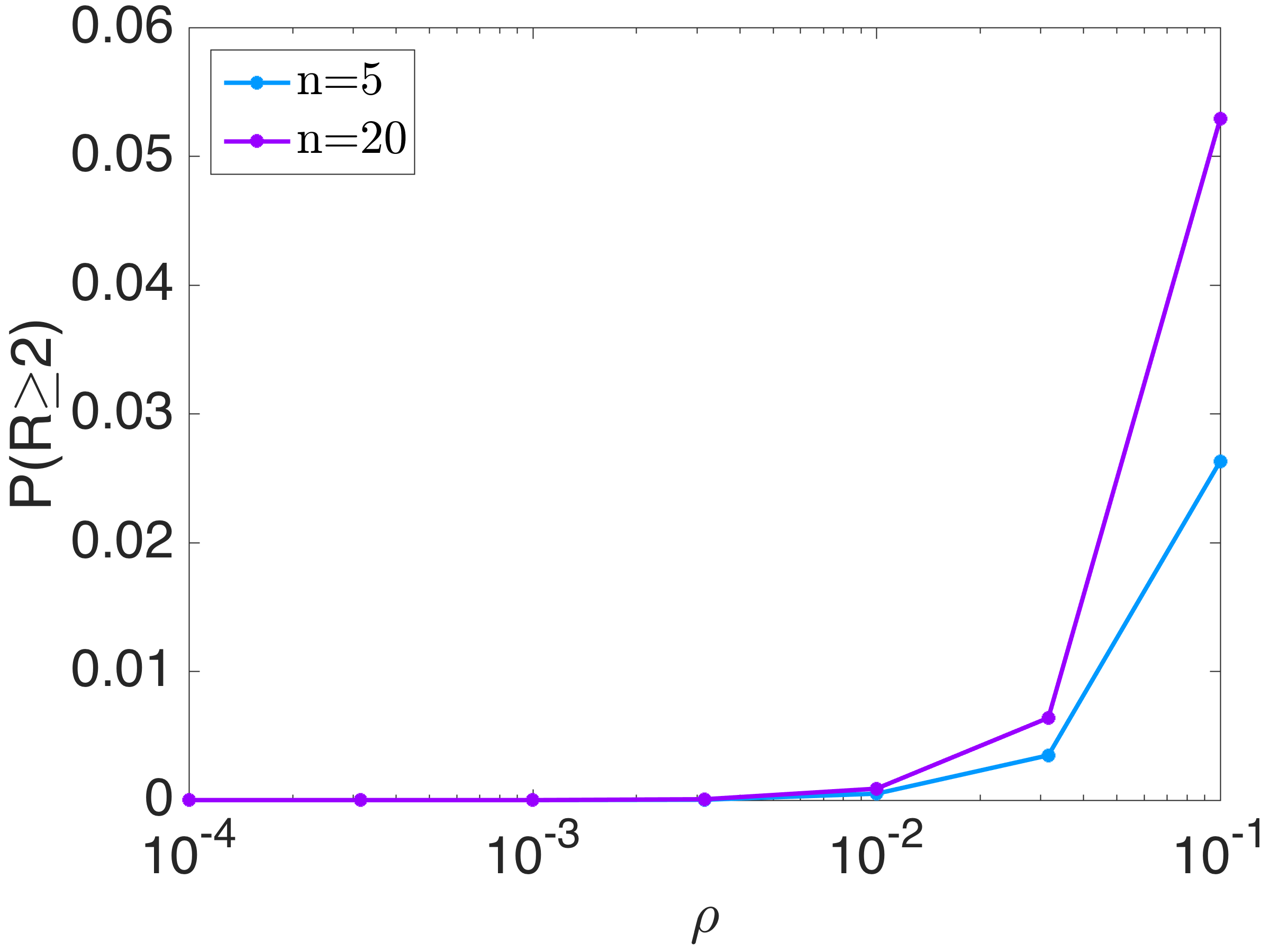}
\end{center}
\caption{\firstRevision{Probability of two or more recombination events $R$ in the regular ancestral process $\ancestralRecoProcess$ \edit{(Definition~\ref{def_anc_proc_reco})}, under exponential growth ($\invPopSize_e$), for different different sample sizes $\sampleSize$ and different recombination rates $\recoRate$. These values were estimated using the coalescent simulation tool \texttt{ms} (using $N=10^7$ repetitions).}}
\label{fig_reco_two_plu}
\end{figure}

\section{Discussion}
\label{sec_discussion}

In this paper, we presented a novel computational framework to compute the marginal and joint CDF of the total tree length in populations with variable size. To our knowledge, these distributions have not been addressed in the literature before, especially in populations of variable size. We introduced a system of linear hyperbolic PDEs and showed that the requisite CDFs can be obtained from the solution of this system. We introduced a numerical algorithm to compute the solution of this system based on the method of characteristics and demonstrated its accuracy in a wide range of biologically relevant scenarios.

The numerical algorithm that we introduced is an upstream-method that computes the requisite solutions step-wise on a grid. We presented the algorithm for a regular, equidistantly spaced grid. We used the trapezoidal rule for the integration steps in the method, and also used linear interpolation to interpolate values that do not fall onto the specified grid.
\firstRevision{We used these basic approaches for ease of exposition.}
Using higher order interpolation and integration schemes, combined with adaptive grids that have more points in regions where the coalescent-rate function is large will most certainly increase the accuracy. However, such higher order schemes come with additional computational cost. This opens numerous avenues for future research to optimize the balance between accuracy and efficiency that is required in the respective applications. 

Moreover, for reasons of computational efficiency, we introduced the \firstRevision{first-order} approximation $\ancestralRecoLimitedProcess$ to the regular ancestral process with recombination $\ancestralRecoProcess$, and computed the joint CDF under this approximate process. We demonstrated that this approximation is accurate for a large range of relevant recombination rates. It is straightforward \firstRevision{to use higher order} approximations, \firstRevision{including more recombination events}, to gain additional accuracy, but computing the joint CDF under the regular ancestral process is desirable.
\firstRevision{Proposition~\ref{prop_app} guarantees that we can use our numerical procedure to compute the requisite CDF under $\ancestralRecoLimitedProcess$, but it is conceivable that it can be extended to more general processes like $\ancestralRecoProcess$ in future work.}

Another research direction is to use our novel framework to study higher order correlations between trees at multiple loci. On the one hand, this could again be correlations between the total tree lengths, \firstRevision{but the distribution of other summary statistics of the genealogical trees could be included}, \firstRevision{for example, the length of the external branches or the length of all branches subtending $k$ leaves. Statistics that have been successfully used in the literature, like the coalescence time between two lineages~\citep{Li2011,Terhorst2017} or the time of the first coalescent event~\citep{Schiffels2014} could be used as well.} Our framework is flexible enough to compute the distribution of multiple path integrals along the trajectories of a given Markov chain. Thus, to implement these additions, one needs to define and implement an appropriate ancestral process and compute suitable integrals along the trajectories.

In this paper, we studied the ancestral process in a single panmictic population. However, in recent years, researchers have gathered an increasing amount of genomic datasets that contain individuals from multiple sub-populations, and studied historical events like migration or population subdivision using these datasets. In light of these studies, it is important to augment our framework to compute joint CDFs of the total tree length in structured populations with complex migration histories. Again, this can be done by introducing suitable ancestral processes and suitable integrals along their trajectories.

\section*{Acknowledgements} 

We thank Yun S. Song for numerous helpful discussions that sparked many of the ideas presented in this paper. This research is supported in part by a National Institutes of Health grant R01-GM094402 (M.S.). We also thank Robin Young for helpful suggestions and fruitful discussions relevant to the design of the numerical scheme.

\appendix

\section*{Appendix}

\section{Path-integrals of Markov chains}
\label{app_proof}

Since the marginal and joint distributions of the tree length can be obtained by integrating a certain function of the ancestral processes, we now consider distributions of path integrals for Markov chains. \edit{We will introduce these distributions assuming Lipschitz continuity in Section~\ref{app_regularity}, and then show in Section~\ref{monotvsec} that these assumptions can be relaxed if the state space is monotone.

\subsection{Path-integrals under regularity assumptions}\label{app_regularity}

Let the Markov chain $X$ be} defined on a probability space $(\Omega,\mathcal{F},\PP)$. We will be using the following assumptions throughout the section.

\begin{itemize}
  \item[(A1)] $\{ X(t,\omega), t \in \RR_+, \omega \in \Omega \}$ is a regular jump Markov process with values in a finite state space ${\mc S}_n$, for convenience labeled ${\mc S}_n = \{1,2,\dots,n\}$, satisfying
  \[
  \PP\{X(t+h)=j|X(t)=i\} = q_{ij}(t) h + o(h)\,, \quad i,j \in {\mc S}_n \quad \text{as} \quad  h \to 0^+.
  \]
We assume that the trajectories of $t \to X(\cdot,\omega)$ are right-continuous. 
  \item[(A2)] The infinitesimal generator $Q(t) = \{ q_{ij}(t)\}_{i,j=1,\ldots,n}$ is conservative, that is
  \[
    q_i(t) := -q_{ii}(t)  = \sum_{i \neq j} q_{ij}(t)\,, 
  \]
  and satisfies $Q \in C(\RR_+; M^{n \times n}) \bigcap L^{\infty}(\RR_+; M^{n \times n})$. In addition, for each $i \in {\mc S}_n$ either $q_i(t)=0$ for all $t \geq 0$ or $q_i(t) >0$ for all $t \geq 0$. In the latter case, we require $\int_0^{\infty} q_i(s) \,ds = \infty$. 
 \end{itemize}

\firstRevision{
 \begin{definition}
Let $X(t)$ satisfy (A1)-(A2). Given  a function 
\[
{\bf v}=(v_1,v_2,\dots,v_d):{\mc S}_n \to \RR^d
\] we define a vector-valued path-integral over the interval $[0,t]$ by
\begin{equation}\label{pint}
  {\bf L}^{\bf v}(t,\omega) := \int_0^t {\bf v}\big(X(s,\omega)\big) \, ds \, \in \, \RR^d \,, \quad t \in \RR_+.
\end{equation}
\end{definition}

\begin{definition}
Let $X(t)$ satisfy (A1)-(A2) and ${\bf v}:{\mc S}_n \to \RR^d$ be some real-valued function defined on the state space. We define a distribution vector-function associated with $\big(X(t), {\bf L}^{\bf v}(t)\big)$  by
\begin{equation}\label{cdfvec}
         {\bf F}^{\bf v}= \big(F^{\bf v}_1,\dots,F^{\bf v}_n \big):\RR_+ \times \RR^d \to \RR_+^n 
\end{equation}
with
\begin{equation}
  \begin{aligned}
F^{\bf v}_k(t,{\bf x}) := \PP\Big\{ X(t) = k\,, {\bf L}^{\bf v}(t) \leq {\bf x} \Big\},
 \end{aligned}
\end{equation}
$k\in\{1,\dots,n\}$, ${\bf x} \in \RR^d$, and the comparison is understood componentwise.
\end{definition}

\begin{definition}\label{def_min_max_v}
Let ${\bf v}:{\mc S}_n \to \RR^d$ be some real-valued function defined on the state space. Define
\begin{equation}
{\bf m}^{\bf v} := \big(\min_{j \in {\mc S}_n} {v_1(j)}, \ldots, \min_{j \in {\mc S}_n} {v_d(j)}\big)
\end{equation}
and
\begin{equation}
{\bf M}^{\bf v} := \big(\max_{j \in {\mc S}_n} {v_1(j)}, \ldots, \max_{j \in {\mc S}_n} {v_d(j)}\big).
\end{equation}
as the componentwise minima and maxima.
\end{definition}

\begin{remark}\rm
Since $X(t)$ is a regular jump process, it is separable. Thus, for each $t_0 \in \RR_+$ the random variable ${\bf L}^{\bf v}(t_0,\cdot)$ is well-defined and $\FF$-measurable, and for each $\omega_0 \in \Omega$ the map $t \to {\bf L}^{\bf v}(t,\omega_0)$ is Lipschitz continuous. This in turn implies that the process ${\bf L}^{\bf v}(t)$ is measurable and separable (see Chapter 12 of \cite{Koralov2007} and Chapter 2 of \cite{Doob1953}).
\end{remark}

\begin{proposition}[{Locally Lipschitz}]\label{pde1dprop}
Let (A1)-(A2) hold. Let ${\bf v}:{\mc S}_n \to \RR^d$ and ${\bf F}^{\bf v}$  be defined by \eqref{cdfvec}. Suppose that ${\bf F}^{\bf v}$ is Lipschitz continuous in an open set  ${\mc U} \subset \RR^+ \times \RR^d$. Then 
\begin{equation}\label{pde1d}
 \del_t {\bf F}^{\bf v}(t,{\bf x}) +  \sum_{j=1}^d \edit{\del_{x_j}{\bf F}^{\bf v}(t,{\bf x}) V_j} = {\bf F}^{\bf v}(t,{\bf x})Q(t) \quad \text{a.e. in \, ${\mc U}$}\,,
 \end{equation}
where $V_j=\diag(v_j(1)\,,\dots,v_j(n))$.
\end{proposition}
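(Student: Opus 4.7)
The plan is to derive the PDE from a short-time expansion of the transition law of the joint process $\big(X(t), {\bf L}^{\bf v}(t)\big)$. This is a piecewise-deterministic Markov process: between jumps of $X$ the path-integral component ${\bf L}^{\bf v}$ evolves deterministically at velocity ${\bf v}(X(t))$, while at jump times only $X$ changes. This mixed dynamics suggests a forward Kolmogorov identity with a transport term in ${\bf x}$ together with the source term ${\bf F}^{\bf v} Q(t)$.

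Concretely, I fix $(t,{\bf x}) \in {\mc U}$ and condition on the behavior of $X$ on a small interval $[t,t+h]$. Decomposing the event $\{X(t+h)=k,\, {\bf L}^{\bf v}(t+h) \leq {\bf x}\}$ according to the number of jumps in $[t,t+h]$ yields three contributions. In the ``no jump'' case, $X(s)\equiv k$ on $[t,t+h]$ forces ${\bf L}^{\bf v}(t+h) = {\bf L}^{\bf v}(t) + h\,{\bf v}(k)$, so this term contributes $\big(1 - q_k(t) h\big) F^{\bf v}_k\big(t,{\bf x} - h\,{\bf v}(k)\big) + o(h)$. In the ``exactly one jump, from some $\ell \neq k$ to $k$'' case, the jump rate is $q_{\ell k}(t)$, and because the process spends only time $O(h)$ in either state, ${\bf L}^{\bf v}$ can be shifted by at most $h\,\|{\bf M}^{\bf v}\|_{\infty}$ over the interval. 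The Lipschitz continuity of ${\bf F}^{\bf v}$ in ${\bf x}$ then lets me replace the shifted argument by ${\bf x}$ up to $O(h^2)$ error, producing $\sum_{\ell \neq k} q_{\ell k}(t) h \, F^{\bf v}_\ell(t,{\bf x}) + o(h)$. The ``two or more jumps'' case has probability $o(h)$ by standard estimates for regular jump processes with bounded generator (A2).

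Summing the three contributions, subtracting $F^{\bf v}_k(t,{\bf x})$, dividing by $h$, and sending $h \to 0^+$ at any point where $F^{\bf v}_k$ is differentiable gives
\begin{equation*}
\del_t F^{\bf v}_k(t,{\bf x}) + \sum_{j=1}^{d} v_j(k)\, \del_{x_j} F^{\bf v}_k(t,{\bf x}) = \sum_{\ell} q_{\ell k}(t)\, F^{\bf v}_\ell(t,{\bf x}),
\end{equation*}
which is exactly the $k$-th component of the claimed matrix equation, since multiplying on the right by $V_j=\diag(v_j(1),\dots,v_j(n))$ picks out the factor $v_j(k)$ in the $k$-th column. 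Because ${\bf F}^{\bf v}$ is assumed Lipschitz on ${\mc U}$, Rademacher's theorem guarantees that such points of differentiability fill ${\mc U}$ up to a null set, yielding the PDE almost everywhere in ${\mc U}$.

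The main obstacle is making the single-jump contribution rigorous: one must carefully control the position of the jump time within $[t,t+h]$ and the value of ${\bf L}^{\bf v}$ at that moment, using the uniform bound on $Q(t)$ from (A2) together with the Lipschitz modulus of ${\bf F}^{\bf v}$ in ${\bf x}$ to absorb every remainder into $o(h)$. A secondary technical point is that the one-sided limit from $h\to 0^+$ coincides with the full partial derivative at Lebesgue-almost-every point in ${\mc U}$, which again follows from the Lipschitz assumption combined with Rademacher's theorem; measurability of ${\bf L}^{\bf v}$ itself is guaranteed by separability and right-continuity of the sample paths.
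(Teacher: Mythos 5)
Your proposal is correct and follows essentially the same route as the paper: a short-time expansion of $F^{\bf v}_k$ over $[t,t+h]$ in which the no-jump term produces the transport along the characteristic direction $(1,{\bf v}(k))$, single transitions produce the ${\bf F}^{\bf v}Q(t)$ source term, multi-jump events are absorbed into $o(h)$, and Rademacher's theorem plus the chain rule convert the directional difference quotients into the partial derivatives almost everywhere. The paper merely packages the jump control into two squeeze lemmas (bounding ${\bf L}^{\bf v}(t+\eps)-{\bf L}^{\bf v}(t)$ between $\eps\,{\bf m}^{\bf v}$ and $\eps\,{\bf M}^{\bf v}$ and applying the Markov property to the past-measurable event $\{{\bf L}^{\bf v}(t)\leq{\bf y}\}$) and expands along $\eps\mapsto(t+\eps,{\bf x}+\eps\,{\bf v}(k))$ rather than at fixed ${\bf x}$, which is the same calculation in a different parametrization.
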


To prove this statement, we first need to introduce the following lemmas.

\begin{lemma} \label{est1lmm} Let $X(t)$ satisfy (A1)-(A2),  ${\bf v}:{\mc S}_n \to \RR^d$. Then, for any  $i,k \in {\mc S}_n$, any $\eps>0$, and each ${\bf x} \in \RR^d$, we have
\begin{equation}
\begin{aligned}
   &  \PP \big\{ X(t)=i, \, X(t+\eps) = k  \big\} F_i^{\bf v}(t, {\bf x}- \eps \, {\bf M}^{\bf v})\\[2pt]
   & \quad \leq \PP \big\{ X(t)=i , X(t+\eps) = k\,, {\bf L}^{\bf v}(t+\eps) \leq {\bf x} \big\} \PP\{X(t)=i\} \\[2pt]
   &  \quad \leq \PP \big\{ X(t)=i\,, X(t+\eps) = k   \big\}  F_i^{\bf v}(t, {\bf x}- \eps \, {\bf m}^{\bf v}) \,,
\end{aligned}
\end{equation}
and the comparison is understood componentwise.
\end{lemma}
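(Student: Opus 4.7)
The plan is to split the path integral at time $t$ and sandwich the new increment. Writing
\begin{equation}
  {\bf L}^{\bf v}(t+\eps) \;=\; {\bf L}^{\bf v}(t) \,+\, \int_t^{t+\eps} {\bf v}\bigl(X(s)\bigr)\,ds,
\end{equation}
the pointwise bounds $\mathbf{m}^{\bf v} \le {\bf v}(X(s)) \le \mathbf{M}^{\bf v}$ (componentwise, from Definition~\ref{def_min_max_v}) give, for every $\omega$,
\begin{equation}
  \eps\,\mathbf{m}^{\bf v} \;\le\; \int_t^{t+\eps} {\bf v}(X(s))\,ds \;\le\; \eps\,\mathbf{M}^{\bf v}.
\end{equation}
Consequently, the event $\{{\bf L}^{\bf v}(t+\eps) \le {\bf x}\}$ is sandwiched pointwise between $\{{\bf L}^{\bf v}(t) \le {\bf x} - \eps\,\mathbf{M}^{\bf v}\}$ (which implies it) and $\{{\bf L}^{\bf v}(t) \le {\bf x} - \eps\,\mathbf{m}^{\bf v}\}$ (which it implies). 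Intersecting all three with the common event $\{X(t)=i,\,X(t+\eps)=k\}$ yields the analogous chain of set inclusions, and monotonicity of $\PP$ gives a corresponding two-sided bound on the middle probability.

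The second step is to factor the inner and outer probabilities using the Markov property. For any ${\bf y} \in \RR^d$, the event $\{{\bf L}^{\bf v}(t) \le {\bf y},\,X(t)=i\}$ is measurable with respect to $\sigma(X(s):s\le t)$, while $\{X(t+\eps)=k\}$ is a cylinder event on the future. Conditioning on $X(t)=i$ and invoking the Markov property (assuming $\PP\{X(t)=i\}>0$; otherwise both sides of the desired inequalities vanish and there is nothing to prove), we obtain
\begin{equation}
  \PP\{X(t)=i,\,X(t+\eps)=k,\,{\bf L}^{\bf v}(t)\le {\bf y}\}
  \;=\; \frac{\PP\{X(t)=i,\,X(t+\eps)=k\}}{\PP\{X(t)=i\}}\,F_i^{\bf v}(t,{\bf y}).
\end{equation}
Applying this identity with ${\bf y} = {\bf x} - \eps\,\mathbf{M}^{\bf v}$ on the left and ${\bf y} = {\bf x} - \eps\,\mathbf{m}^{\bf v}$ on the right, and multiplying through by $\PP\{X(t)=i\}$, produces exactly the two inequalities in the statement.

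The only subtlety worth flagging is the measurability used in invoking the Markov property: one has to verify that $\{{\bf L}^{\bf v}(t) \le {\bf y}\}$ is $\sigma(X(s):s\le t)$\nobreakdash-measurable. This follows from the remark preceding the proposition, since the map $t \mapsto {\bf L}^{\bf v}(t,\omega)$ is Lipschitz continuous for each $\omega$ and the process is separable; in particular ${\bf L}^{\bf v}(t)$ is obtained as an a.s.\ limit of Riemann sums built from $\{X(s):s\le t\}$. Once this measurability is in place, the factorization above is immediate and the inequalities follow.
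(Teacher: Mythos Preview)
Your proof is correct and follows essentially the same route as the paper: sandwich the increment $\int_t^{t+\eps}{\bf v}(X(s))\,ds$ between $\eps\,{\bf m}^{\bf v}$ and $\eps\,{\bf M}^{\bf v}$, pass to the corresponding inclusion of events, and then factor the outer probabilities via the Markov property using that $\{{\bf L}^{\bf v}(t)\le{\bf y}\}$ is determined by the past. The paper's only cosmetic difference is that it assumes $F_i^{\bf v}(t,{\bf y})>0$ rather than $\PP\{X(t)=i\}>0$ before invoking the Markov property, but this is immaterial since the identity holds trivially when either quantity vanishes.
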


\begin{proof}
We suppress the superscript ${\bf v}$ in our calculations. Take any $\eps >0$. Observe that
\begin{equation}
{\bf L} (t)+\eps \, {\bf m} \leq {\bf L}(t+\eps)={\bf L}(t)+\int_t^{t+\eps} {\bf v}\big(X(s)\big)\, ds \leq {\bf L} (t)+\eps \, {\bf M}
\end{equation}
and therefore
\begin{equation}
 \begin{aligned}
 &\PP \big\{ X(t)=i,\, X(t+\eps) = k,\, {\bf L}(t) \leq {\bf x} - \eps \, {\bf M} \big\} \\
 &\qquad \leq \PP \big\{ X(t)=i,\, X(t+\eps) = k,\, {\bf L}(t+\eps) \leq {\bf x} \big\}\\
  &\qquad\qquad \leq \PP \big\{ X(t)=i,\, X(t+\eps) = k,\, {\bf L}(t) \leq {\bf x} - \eps \, {\bf m} \big\}\,.
 \end{aligned}
 \end{equation}

Let ${\bf y} \in \RR^d$. Suppose that $0<F_i(t, {\bf y})$. Observe that for any $t_0>0$ the path-integral ${\bf L}(t_0)$ is fully determined by $\big(X(t),t \in [0,t_0)\big)$. This fact (and the separability of the process) enables us to use the Markov property and we obtain
\begin{equation}\label{contprop}
\begin{aligned}
& \PP \big\{ X(t)=i,\, X(t+\eps) = k,\, {\bf L}(t) \leq {\bf y} \big\} \\
&   = \PP \big\{ X(t+\eps) = k| X(t)=i, {\bf L}(t) \leq {\bf y} \big\}\PP\big\{X(t)=i, {\bf L}(t) \leq {\bf y} \big\}\\
&   =  \PP \big\{ X(t+\eps) = k| X(t)=i  \big\}   F_i(t,{\bf y})\,.
\end{aligned}
\end{equation}
This together with the previous inequality finishes the proof.
\end{proof}

\begin{lemma}\label{est2lmm}
Let $X(t)$ satisfy (A1)-(A2) and ${\bf v}:{\mc S}_n \to \RR^d$. For every $(t, {\bf x}) \in \RR_+ \times \RR^d$ and  each $k \in {\mc S}_n$, we have
\begin{equation}\label{charident2}
\begin{aligned}
& \lim_{\eps \to 0^+} \frac{1}{\eps}\bigg| \PP \Big\{ X(t)=k, X(t+\eps)=k, {\bf L}^{\bf v}(t+\eps) \leq {\bf x}+ \eps \, {\bf v}(k) \Big\}  -  (1-q_k(t)\eps) F_k(t, {\bf x})  \bigg| = 0.
\end{aligned}
\end{equation}
\end{lemma}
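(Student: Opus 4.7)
The plan is to split the event $\{X(t)=k, X(t+\eps)=k\}$ according to whether the process remains in state $k$ throughout the interval $[t,t+\eps]$ or leaves and returns. On the ``stay'' event, the path integral increment is exactly $\eps\,\mathbf{v}(k)$, which precisely cancels the shift on the right-hand side of \eqref{charident2}, while the ``leave-and-return'' event is $o(\eps)$ by the regular jump structure of $X$. This dichotomy, together with the Markov property at time $t$, should produce the claimed expansion.

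Concretely, I would introduce
\begin{equation*}
  A := \{X(s)=k \text{ for all } s \in [t,t+\eps]\}, \qquad B := \{X(t)=k,X(t+\eps)=k\}\setminus A,
\end{equation*}
and decompose the probability in \eqref{charident2} as $I_A + I_B$. On $A$, definition \eqref{pint} gives
$\mathbf{L}^{\mathbf{v}}(t+\eps) = \mathbf{L}^{\mathbf{v}}(t) + \int_t^{t+\eps}\mathbf{v}(X(s))\,ds = \mathbf{L}^{\mathbf{v}}(t) + \eps\,\mathbf{v}(k)$,
so the constraint $\mathbf{L}^{\mathbf{v}}(t+\eps) \leq \mathbf{x}+\eps\,\mathbf{v}(k)$ reduces to $\mathbf{L}^{\mathbf{v}}(t)\leq \mathbf{x}$. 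Since $A$ is measurable with respect to $\{X(s):s\in[t,t+\eps]\}$ and $\{\mathbf{L}^{\mathbf{v}}(t)\leq \mathbf{x}\}$ is measurable with respect to $\{X(s):s\in[0,t]\}$, the Markov property at time $t$ factors
\begin{equation*}
  I_A = \PP\{A,\mathbf{L}^{\mathbf{v}}(t)\leq \mathbf{x}, X(t)=k\} = \PP\{A\mid X(t)=k\}\cdot F_k^{\mathbf{v}}(t,\mathbf{x}).
\end{equation*}
By (A1) and continuity of $q_k$ from (A2), the conditional probability equals $\exp\bigl(-\int_t^{t+\eps}q_k(s)\,ds\bigr) = 1 - q_k(t)\eps + o(\eps)$.

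For $I_B$, note that $B$ requires at least two jumps in $(t,t+\eps)$. Since (A1) gives $\PP\{X(t+\eps)=k\mid X(t)=k\} = 1 - q_k(t)\eps + o(\eps)$ and the stay-probability $\PP\{A\mid X(t)=k\}$ has the identical expansion, their difference is $o(\eps)$, so $I_B \leq \PP(B) = o(\eps)$. Combining,
\begin{equation*}
  \PP\{X(t)=k,X(t+\eps)=k,\mathbf{L}^{\mathbf{v}}(t+\eps)\leq \mathbf{x}+\eps\,\mathbf{v}(k)\} = (1-q_k(t)\eps)\,F_k^{\mathbf{v}}(t,\mathbf{x}) + o(\eps),
\end{equation*}
and dividing by $\eps$ yields \eqref{charident2}.

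The main obstacle is the justification of the Markov factorization for $I_A$: the event $A$ is uncountable (stay-in-$k$ over an entire interval) rather than an event at a fixed time, so one needs to invoke the right-continuity of trajectories and the separability of $X$ noted in (A1) to replace $A$ by an event in the post-$t$ $\sigma$-algebra. Observing that Lemma~\ref{est1lmm} alone is insufficient here (its sandwich width is $O(\eps)$ rather than $o(\eps)$, since the shift $\mathbf{x}\pm\eps\mathbf{M}^{\mathbf{v}}$ cannot be matched exactly to the path-integral increment) underscores why the $\eps\,\mathbf{v}(k)$ shift in the statement is essential and why the direct deterministic-on-$A$ argument is the right tool.
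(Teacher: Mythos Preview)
Your proposal is correct and follows essentially the same route as the paper: both arguments isolate the stay event $A=\{X(s)=k,\ s\in[t,t+\eps]\}$, use the Markov property together with the exact identity $\mathbf{L}^{\mathbf{v}}(t+\eps)=\mathbf{L}^{\mathbf{v}}(t)+\eps\,\mathbf{v}(k)$ on $A$ to produce $\exp\bigl(-\int_t^{t+\eps}q_k\bigr)F_k(t,\mathbf{x})$, and then show that the remainder $B=\{X(t)=k,X(t+\eps)=k\}\setminus A$ has probability $o(\eps)$ by comparing the one-step transition probability with the holding probability. The paper packages the last step as a single chain of inequalities bounding $\frac{1}{\eps}\bigl|\PP(\ldots)-\exp(\ldots)F_k\bigr|$ directly, whereas you write it as $I_A+I_B$, but the content is identical; your observation about separability/right-continuity being needed for the factorization on $A$ is exactly the point the paper addresses by citing the holding-time formula from Karlin.
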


\begin{proof}
We suppress the superscript ${\bf v}$ in what follows. Due to (A1) the process $X(t)$ is separable and therefore (see \cite[p.~146]{Karlin1981b})
\begin{equation}\label{limdens}
 \begin{aligned}
     &\PP \Big\{ X(s)=k, s \in [t,t+\eps] \Big\}
    = \exp \Big\{ - \int_t^{t+\eps} q_{i}(s) \, ds \Big\} \PP\big\{X(t) = k \big\}.
 \end{aligned}
 \end{equation}

 Suppose now that $F_k(t, {\bf x})>0$. Then, using the Markov property, we obtain
\begin{equation*}
\begin{aligned}
    &\PP \big\{ X(s) = k\,,  s \in [t,t+\eps] \,,   {\bf L}(t+\eps) \leq {\bf x}+\eps \, {\bf v}(k) \big\}\\
&   = \PP \big\{ X(s) = k\,,  s \in [t,t+\eps] \; | X(t)=k, {\bf L}(t) \leq {\bf x} \big\}\PP\big\{X(t)=k, {\bf L}(t) \leq {\bf x}\Big\}\\
&   = \PP \big\{ X(s) = k\,, s \in [t,t+\eps]| X(t)=k  \big\} F_k(t, {\bf x}) = \exp \Big( - \int_t^{t+\eps} q_k(s) \, ds\Big) F_k(t, {\bf x})\,.
\end{aligned}
\end{equation*}

If $F_k(t, {\bf x})=0$, then the first term and the last term in the above identity are zero. Thus, employing \eqref{limdens} and recalling that $Q$ is continuous, we conclude
\begin{equation}
\begin{aligned}
  & \frac{1}{\eps} \Big| \PP \Big\{ X(t)=k, X(t+\eps)=k, {\bf L}(t+\eps) \leq {\bf x}+ \eps \, {\bf v}(k) \Big\}  -  \exp \Big( - \int_t^{t+\eps} q_k(s) \, ds\Big) F_k(t, {\bf x})  \Big| \\
  & \quad = \frac{1}{\eps}  \bigg( \PP \big\{ X(t)=k, X(t+\eps)=k, {\bf L}(t+\eps) \leq {\bf x}+ \eps \, {\bf v}(k) \Big\}  -  \PP \big\{ X(s) = k\,,  s \in [t,t+\eps] \,,  {\bf L}(t+\eps) \leq {\bf x}+ \eps \, {\bf v}(k)\big\} \bigg)  \\
  &  \quad \leq \frac{1}{\eps} \bigg( \PP \Big( X(t)=k, X(t+\eps)=k\Big)  -  \PP \Big( X(s) = k\,,  s \in [t,t+\eps] \Big) \bigg) \\[2pt]
   &  \quad \leq \frac{1}{\eps}\Big(\PP \big\{ X(t)=k, X(t+\eps)=k\big\}  -  \PP\{X(t)=k\}\Big) \\
   & \qquad \qquad - \frac{1}{\eps}\Big[\exp \Big( - \int_t^{t+\eps} q_{i}(s) \, ds \Big) - 1 \Big] \PP\big\{X(t) = k \big\} \to 0 \quad \text{as} \quad \eps \to 0^+.
\end{aligned}
\end{equation}
Since $(1-q_k(t)\eps) F_k(t, {\bf x}) = \exp \Big( - \int_t^{t+\eps} q_k(s) \, ds\Big) F_k(t, {\bf x}) + o(\eps^2)$, the statement of the lemma follows.
\end{proof}

We can now turn back to the proof of Proposition~\ref{pde1dprop}

\begin{proof}[Proof of Proposition~\ref{pde1dprop}]
Let us suppress the superscript ${\bf v}$ in our calculations. Let $\widetilde{\mc U}$ denote the set of all points in ${\mc U}$ at which ${\bf F}$ is differentiable. Since ${\bf F}$ is Lipschitz continuous in ${\mc U}$,  Rademacher's theorem~\citep{Federer1969} implies that ${\bf F}$ is Lebesgue almost surely differentiable in ${\mc U}$ and therefore ${\mc U}\backslash \widetilde{\mc U}$ is of Lebesgue measure zero. Take any $k \in {\mc S}_n$. Fix any $(t, {\bf x}) \in \widetilde{\mc U}$.  For any $\eps>0$ we have
\begin{equation}\label{identchar}
	\begin{split}
		F_k\big(&t+\eps, {\bf x}+ \eps \, {\bf v}(k)\big) -F_k(t,{\bf x}) \\ 
			& = \Big(\sum_{i=1}^n \PP\{ X(t)=i\,,X(t+\eps)=k\,, {\bf L}(t+\eps) \leq {\bf x}+ \eps \, {\bf v}(k)\}\bigg)-F_k(t,{\bf x}).
	\end{split}
\end{equation}

Consider first the terms with $i \neq k$. By Lemma \ref{est1lmm}, we have
\begin{equation}
\begin{aligned}
   &  \frac{1}{\eps}\PP \big\{ X(t+\eps) = k,  X(t)=i  \big\} F_i\big(t+\eps, {\bf x}+ \eps \, {\bf v}(k) - \eps \, {\bf M} \big)\\[2pt]
   & \quad \leq \frac{1}{\eps}\PP \big\{ X(t)=k , X(t+\eps) = i\,, {\bf L}(t+\eps) \leq {\bf x}+ \eps \, {\bf v}(k)\big\} \PP\{X(t)=i\} \\[2pt]
   &  \quad \leq \frac{1}{\eps}\PP \big\{ X(t+\eps) = k, X(t)=i  \big\}  F_i\big(t+\eps, {\bf x}+ \eps \, {\bf v}(k) - \eps \, {\bf m} \big)
\end{aligned}
\end{equation}
where  ${\bf m}$ and ${\bf M}$ are as in Definition~\ref{def_min_max_v}. Since $Q$ is continuous we must have
\[
\lim_{\eps \to 0^+}\frac{1}{\eps}\PP \big\{ X(t+\eps) = i,  X(t)=k  \big\}  = q_{ki}(t)\PP\{X(t)=k\}
\]
and hence, employing the continuity of ${\bf F}$, we conclude
\begin{equation}\label{chard1}
\lim_{\eps \to 0^+}\frac{1}{\eps}\PP \Big\{ X(t)=i, X(t+\eps) = k, \, {\bf L}(t+\eps) \leq {\bf x}+ \eps \, {\bf v}(k) \Big\}  = q_{ki}(t) F_i(t, {\bf x})\,.
\end{equation}

Next consider the case $i=k$. By Lemma \ref{est2lmm} we have
\begin{equation}\label{chard2}
\lim_{\eps \to 0^+}\frac{1}{\eps}\Big(\PP \big\{ X(t)=k, X(t+\eps) = k, \, {\bf L}(t+\eps) \leq {\bf x}+ \eps \, {\bf v}(k) \big\} - F_k(t,{\bf x}) \Big) = -q_{k}(t) F_k(t,{\bf x})\,.
\end{equation}

Combining \eqref{identchar} with \eqref{chard1} and \eqref{chard2}  we obtain
\begin{equation}\label{derivchar}
\begin{aligned}
   &\lim_{\eps \to 0^+} \frac{1}{\eps}\Big(F_k\big(t+\eps, {\bf x}+ \eps \, {\bf v}(k)\big) -F_k(t, {\bf x}) \Big) \\
   &\qquad = -q_k(t)F_k(t, {\bf x})+\sum_{i \neq k} q_{ik}(t) F_i(t, {\bf x}) = \big({\bf F}(t, {\bf x})Q(t)\big)_k\,.
\end{aligned}
\end{equation}

Since ${\bf F}$ is differentiable at $(t, {\bf x}) \in \widetilde{\mc U}$ and the map $\eps \to \big(t+\eps, {\bf x}+ \eps \, {\bf v}(k)\big)$ is differentiable with the image contained in ${\mc U}$ for sufficiently small $\eps$, the chain rule is applicable (see \cite{Rudin1976}[Theorem 9.15]) and we conclude
\begin{equation}\label{chainr}
	\begin{split}
		\lim_{\eps \to 0^+} \frac{1}{\eps}\Big(F_k\big(t+\eps, {\bf x}+ \eps \, {\bf v}(k)\big) - F_k(t, {\bf x}) \Big) & = \frac{d}{d \eps} F_k\big(t+\eps, {\bf x}+ \eps \, {\bf v}(k)\big)\Big|_{\eps=0}\\
        	& = \del_t F_k(t, {\bf x}) + \sum_{j=1}^d v_j(k)\,\del_{x_j} F_k(t,{\bf x}).
	\end{split}
\end{equation}
Since both $k \in {\mc S}_n$ and $(t,{\bf x}) \in \widetilde{\mc U}$ were arbitrary, \eqref{derivchar} implies \eqref{pde1d}.
\end{proof}

We next show that ${\bf F}^{\bf v}$ as $t \to 0^+$ has certain continuity properties.
\begin{proposition}\label{inival}
  Let (A1)-(A2) hold. Let  ${\bf v}:{\mc S}_n \to \RR^d$ and ${\bf F}^{\bf v}$ as defined by \eqref{cdfvec}. Then 
    \begin{equation}\label{Finival}
    \lim_{t  \to 0^+}F^{\bf v}_k(t, {\bf x}) =  \1_{\RR_+^d}({\bf x}) \, \PP\{X(0)=k\} = F^{\bf v}_k(0, {\bf x}) \quad \text{for} \quad  {\bf x} \notin \partial \RR_+^d.
  \end{equation}
\end{proposition}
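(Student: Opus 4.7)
}

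The plan is to split the statement into its two equalities: the right-hand equality $F^{\bf v}_k(0,{\bf x}) = \1_{\RR_+^d}({\bf x})\PP\{X(0)=k\}$ is immediate from the definition \eqref{cdfvec}, since ${\bf L}^{\bf v}(0) = \int_0^0 {\bf v}(X(s))\,ds = 0$, so the event $\{{\bf L}^{\bf v}(0) \leq {\bf x}\}$ equals $\{0 \leq {\bf x}\}$, which is deterministic and occurs precisely when ${\bf x} \in \RR_+^d$. The substantive content is therefore the left-hand limit, and the argument for it hinges on the uniform a.s.\ bound
\begin{equation}
t\,{\bf m}^{\bf v} \;\leq\; {\bf L}^{\bf v}(t,\omega) \;\leq\; t\,{\bf M}^{\bf v}
\end{equation}
(componentwise) which follows at once from the definition \eqref{pint} and Definition~\ref{def_min_max_v}.

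Fix ${\bf x} \notin \partial \RR_+^d$. I would then split into two cases according to which open orthant-like region contains ${\bf x}$. In Case~A, suppose every component $x_j > 0$; set $\delta := \min_j x_j > 0$ and let $C := \max_j \max(|m^{\bf v}_j|, |M^{\bf v}_j|)$. Then for every $t < \delta/C$ the bound above gives $|{\bf L}^{\bf v}(t,\omega)|_\infty \leq t C < \delta \leq x_j$ for every $j$, a.s.\ in $\omega$. Hence the event $\{{\bf L}^{\bf v}(t)\leq{\bf x}\}$ has full probability and
\begin{equation}
F^{\bf v}_k(t,{\bf x}) \;=\; \PP\{X(t)=k\}
\end{equation}
for all sufficiently small $t>0$. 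By (A1) the transition probabilities are continuous at $t=0$ (indeed, $\PP\{X(t)=k\}\to\PP\{X(0)=k\}$ follows from the Kolmogorov equation, or directly from the right-continuity of trajectories together with $\PP\{X(t+h)=j\mid X(t)=i\}=\delta_{ij}+q_{ij}(t)h+o(h)$), so the limit equals $\PP\{X(0)=k\} = \1_{\RR_+^d}({\bf x})\PP\{X(0)=k\}$ as required.

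In Case~B, suppose some component satisfies $x_{j_0}<0$. Using the lower bound $L^{\bf v}_{j_0}(t) \geq t\,m^{\bf v}_{j_0} \geq -tC$, for $t<|x_{j_0}|/C$ we have $L^{\bf v}_{j_0}(t,\omega) > x_{j_0}$ a.s., so $\{{\bf L}^{\bf v}(t)\leq{\bf x}\}$ has probability zero and hence $F^{\bf v}_k(t,{\bf x})=0$ for all sufficiently small $t>0$. This matches $\1_{\RR_+^d}({\bf x})\PP\{X(0)=k\}=0$, completing the proof. I expect no real obstacle here: the only technical point worth being careful about is ensuring the uniform a.s.\ bound on ${\bf L}^{\bf v}(t)$ is legitimate, which is clear because the integrand ${\bf v}(X(s))$ is bounded by $\max(|{\bf m}^{\bf v}|,|{\bf M}^{\bf v}|)$ pointwise on $\Omega$, and measurability/separability of ${\bf L}^{\bf v}(t)$ was already established in the remark preceding Proposition~\ref{pde1dprop}.
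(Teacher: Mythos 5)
Your proposal is correct and follows essentially the same route as the paper: both arguments rest on the uniform bound ${\bf m}^{\bf v}t \leq {\bf L}^{\bf v}(t) \leq {\bf M}^{\bf v}t$, which forces the event $\{{\bf L}^{\bf v}(t)\leq {\bf x}\}$ to be almost sure or null once $t$ is small relative to the distance from ${\bf x}$ to $\partial\RR_+^d$, reducing $F^{\bf v}_k(t,{\bf x})$ to $\1_{\RR_+^d}({\bf x})\PP\{X(t)=k\}$ and then invoking continuity of $t\mapsto\PP\{X(t)=k\}$ at $0$. The only cosmetic difference is that you split into the two sign cases explicitly while the paper folds them into a single indicator (and sidesteps the degenerate ${\bf v}\equiv 0$ division by using $(1+\max(\|{\bf m}^{\bf v}\|_\infty,\|{\bf M}^{\bf v}\|_\infty))^{-1}$).
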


\begin{proof}[Proof]

First, we note that ${\bf L}(0)=0$ for all $\omega \in \Omega$ and hence $F^{\bf v}_k(0,{\bf x}) = \1_{\RR_+^d}({\bf x}) \PP\{X(0)=k\}$.

Observe that
    \begin{equation}
     {\bf m}^{\bf v}t \leq {\bf L}(t)=\int_0^t {\bf v}(X(s))\, ds \leq {\bf M}^{\bf v}t,
    \end{equation}
    where ${\bf m}^{\bf v}$ and ${\bf M}^{\bf v}$ as in Definition~\ref{def_min_max_v}. Fix $\delta>0$. Then for all $0<t<\delta(1+\max(||m^v||_\infty,||M^v||_\infty))^{-1}$, and every ${\bf x} \in \RR^d$ such that  $||{\bf x} - {\bf y}||>\delta$ for all $y \in \partial \RR_+^d$, we have
    \begin{equation}
    \begin{aligned}
     F_k(t,{\bf x}) &=\1_{\RR_+^d}({\bf x})\PP\{X(t)=k, {\bf L}(t) \leq {\bf x}\}\\
    &=\1_{\RR_+^d}({\bf x})\PP\{X(t)=k\} \to \1_{\RR_+^d}({\bf x})P\{X(0)=k\} \quad \text{as} \quad t \to 0^+.
    \end{aligned}
    \end{equation}
\end{proof}

\begin{remark}
From Proposition~\ref{inival} it follows that the `initial values'  of ${\bf F}^{\bf v}$ are discontinuous. Since the system \eqref{pde1d} a is linear hyperbolic system, discontinuities present at time $t=0$ will travel in space as time $t$ increases and therefore ${\bf F}^{\bf v}$ is not $C^1$ or even continuous. Nevertheless, one can show that \eqref{pde1d} holds in a weaker sense. To do that one needs to employ the notion of weak solutions, and we will pursue this avenue in an upcoming paper. However, for certain type of state space functions ${\bf v}$, relevant to our application, one can show additional regularity properties of ${\bf F}^{\bf v}$. We provide more details in Appendix~\ref{monotvsec}.
\end{remark}
}

\subsection{Path-integrals for monotone state space functions}\label{monotvsec}

Hyperbolic systems of partial differential equations  admit in general solutions that are not classical even if the initial (or boundary data) is smooth. Typically there are two distinct classes of solutions: strong solutions, which are Lipschitz continuous (see~\cite{Dafermos2010}), and weak solutions, which allow for discontinuities. Here, we will be using the first type of solutions.

\firstRevision{
\begin{definition}
Let ${\mc U} \subset \RR_+ \times \RR^d$ be open and let $A_1,\ldots,A_d,B \in L^{\infty}({\mc U};\, \RR^{n \times n})$.  We say that ${\bf u}(t,{\bf x}):\RR_+ \times \RR^d \to \RR^n$ is a {strong} solution of
  \begin{equation}\label{linsystpde}
    \del_t {\bf u}(t, {\bf x}) + \sum_{j=1}^d \edit{\del_{x_j} \big\{ {\bf u}(t,{\bf x}) \big\} A_j(t,{\bf x})} = \edit{ {\bf u}(t, {\bf x}) B(t,{\bf x})} \quad \text{in} \quad {\mc U} \subset \RR_+ \times \RR^d,
  \end{equation}
if ${\bf u}$ is Lipschitz continuous in ${\mc U}$, and the equation \eqref{linsystpde} holds for Lebesgue almost all points $(t, {\bf x})$ in ${\mc U}$.
\end{definition}

\smallskip

\begin{remark}
 By Rademacher's theorem~\citep{Federer1969} a function ${\bf u}(t,{\bf x})$ that is Lipschitz continuous in an open domain ${\mc U}$ is Lebesgue almost sure differentiable in ${\mc U}$. In fact, its pointwise partial derivatives, which exist almost everywhere, coincide with its corresponding weak partial derivatives (see \cite{Evans2010}).
\end{remark}

Regularity of solutions to hyperbolic problems depends on both the initial (or boundary) data and the domain itself. For linear hyperbolic problems as long as the initial data is smooth and the domain has a smooth boundary one may expect a solution to be (locally) smooth. Typically one studies solutions to hyperbolic problems on the domain ${\mc U}=\RR_+ \times \RR^d$ with initial data ${\bf u}_0({\bf x})$ at $t=0$ (Cauchy problem). The initial data for the vector of probabilities ${\bf F}^{\bf v}$ (studied in ${\mc U}$) are unfortunately discontinuous (which is shown below).  To avoid unnecessary difficulties, in Proposition \ref{prop_app} we split the space-time domain into two regions ${\mc U}_I$ and ${\mc U}_E$. In ${\mc U}_E$  the values of ${\bf F}^{\bf v}$ admit a simpler form while in ${\mc U}_I$ the vector ${\bf F}^{\bf v}$ is obtained via solving a linear hyperbolic system with smooth initial data. We note that the components of ${\bf F}^{\bf v}$ are in general merely Lipschitz continuous in ${\mc U}_I$. This is not surprising for two reasons. First, the domain is singular because it has a `corner' and the discontinuities of the derivatives of ${\bf F}^{\bf v}$ originating at points $\partial \RR_+^d$ travel along the corresponding characteristics. Second, the vector ${\bf F}^{\bf v}$ solves the same system of equations in the domain ${\mc U}$ with discontinuous initial data and hence it is in general not smooth.


\begin{definition}
  Let (A1)-(A2) hold. Let ${\bf v}:{\mc S}_n \to \RR^d$. We say that ${\bf v}$ is monotone along the process $X(t)$ if the map $t \to {\bf v}(X(t))$ is either non-increasing $\PP$-almost surely or non-decreasing $\PP$-almost surely.
\end{definition}

\begin{definition}\label{def_us}
Define the following regions in $\RR_+ \times \RR^d$:
\begin{equation}
 {\mc U}_I := \Big\{(t, {\bf x}): t>0,\, {\bf x} < {\bf M}^{\bf v} t\Big\} 
\end{equation}
and
\begin{equation}
{\mc U}_E := \big({\mc U}_I \cup \partial {\mc U}_I\big)^c,
\end{equation}
where the comparison is understood componentwise.
\end{definition}

\begin{proposition}\label{prop_app}
 Let $X(t)$ satisfy (A1)-(A2), $X(0)=n$, and $Q(t)$ be lower triangular. Let ${\bf v}:{\mc S}_n \to \RR^d$ be monotone along $X(t)$. Suppose that  $t \to {\bf v}(X(t))$ is non-increasing on $\Omega$, and ${\bf m}^{\bf v} < {\bf M}^{\bf v}$. For ${\bf x} \in \RR^d$, define $J(t,{{\bf x}}) = \big\{j: x_j<M^{\bf v}_j t \big\}$. Then ${\bf F}^{\bf v}$ defined by \eqref{cdfvec} has the following properties:
 \begin{enumerate}[label=(\roman*)]
 
	\item\label{prop_app_sub_smaller} For each $i \in {\mc S}_n$, with ${\bf v}(i)<{\bf M}^{\bf v}$, $F^{\bf v}_i(t, {\bf x})$ is Lipschitz continuous on $\RR_+ \times \RR^d$.
    
       \item\label{prop_app_sub_max} For each $i \in \mathcal{S}_n$, with ${\bf v}(i)\nless {\bf M}^{\bf v}$, $F^{\bf v}_i(t,{\bf x})=\1_{\overline{{\mc U}_E}}({\bf x}) \, \PP\big\{X(t)=i, L^{\bf v}_j(t) \leq x_j, j\in J(t,{{\bf x}})  \big\}$.

        \item\label{prop_app_sub_strong} ${\bf F}^{\bf v}$ is a strong solution of
        \begin{equation}\label{mpde}
          \del_t {\bf F}^{\bf v}(t, {\bf x}) + \sum_{j=1}^d \edit{\del_{x_j}{\bf F}^{\bf v}(t,{\bf x})V_j} = {\bf F}^{\bf v}(t,{\bf x})Q(t),
		\end{equation}
where $V_j=\diag(v_j(1)\,,\dots,v_j(n))$, in the open region ${\mc U}_I$.
Furthermore, let $(t, {\bf x}) \in \partial {\mc U}_I$. Then
\begin{equation}\label{boundaryval}
	\begin{split}
\lim_{(\bar{t},\bar{\bf x})\in {\mc U}_I \to (t,{\bf x})}F^{\bf v}_i(\bar{t}, \bar{\bf x}) & = \begin{cases}
	\PP\big\{X(t)=i, L^{\bf v}_j(t)\leq x_j, j\in J(t,{{\bf x}})  \big\}, & \text{if ${\bf v}(i)<{\bf M}^{\bf v}$},\\
    0,				& \text{otherwise},
\end{cases}\\
\end{split}
\end{equation}
and
\begin{equation}\label{eq_u_e}
	F^{\bf v}_i(\bar{t}, \bar{\bf x}) = \PP\big\{X(\bar{t})=i, L^{\bf v}_j(\bar{t})\leq\bar{x}_j, j \in J(t,{{\bf \bar{x}}}) \big\} \quad \text{for all}\;(\bar{t},\bar{\bf x})\in {\mc U}_E.\\
\end{equation}

\end{enumerate}
\end{proposition}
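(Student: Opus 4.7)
\emph{Proof proposal.} The plan is to dispose of the explicit parts first, then establish Lipschitz regularity in $\mathcal{U}_I$ so that Proposition~\ref{pde1dprop} can be invoked. The central driver is monotonicity: since $X(0) = n$ and $s \mapsto \mathbf{v}(X(s))$ is non-increasing $\mathbb{P}$-a.s., we necessarily have $\mathbf{v}(n) = \mathbf{M}^{\mathbf{v}}$ componentwise, and whenever $v_j(i) = M_j^{\mathbf{v}}$ for some coordinate $j$, on the event $\{X(t) = i\}$ the map $s \mapsto v_j(X(s))$ must equal $M_j^{\mathbf{v}}$ on all of $[0,t]$, since it starts and ends at the maximum while being non-increasing. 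This pins $L_j^{\mathbf{v}}(t) = M_j^{\mathbf{v}} t$ identically on that event. The constraint $L_j^{\mathbf{v}}(t) \le x_j$ in the definition of $F^{\mathbf{v}}_i(t,\mathbf{x})$ then forces $x_j \ge M_j^{\mathbf{v}} t$, yielding both the formula in (ii) and the identity \eqref{eq_u_e} on $\mathcal{U}_E$, and giving $F^{\mathbf{v}}_i \equiv 0$ on $\mathcal{U}_I$ for every degenerate state.

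For (i) I would proceed by backward induction on the state index $i$, using the lower-triangular structure of $Q$ so that $F^{\mathbf{v}}_i$ only depends on $F^{\mathbf{v}}_k$ with $k \ge i$. Setting $\tau_i := \inf\{s : X(s) = i\}$ and decomposing $\mathbf{L}^{\mathbf{v}}(t) = \mathbf{L}^{\mathbf{v}}(\tau_i) + (t-\tau_i)\mathbf{v}(i)$ on $\{X(t) = i\}$, I would write $F^{\mathbf{v}}_i(t,\mathbf{x})$ as an integral against the joint law of $(\tau_i, \mathbf{L}^{\mathbf{v}}(\tau_i), X(\tau_i^-))$ combined with the conditionally exponential sojourn at $i$ whose rate $q_i(t)$ is bounded by (A2). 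Lipschitz dependence on each $x_j$ then follows because the condition $v_j(i) < M_j^{\mathbf{v}}$ ensures strict smoothing of the distribution of $L_j^{\mathbf{v}}(t)$ in the $j$-direction by the random arrival time, while Lipschitz dependence in $t$ comes from the bounded hazard rate together with the continuity of $Q$. The inductive hypothesis supplies the requisite Lipschitz bounds on the predecessor CDFs entering through the conditioning on $X(\tau_i^-)$.

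With Lipschitz continuity of $\mathbf{F}^{\mathbf{v}}$ in $\mathcal{U}_I$ in hand, Proposition~\ref{pde1dprop} immediately yields \eqref{mpde} almost everywhere in $\mathcal{U}_I$, so $\mathbf{F}^{\mathbf{v}}$ is a strong solution there. The boundary values \eqref{boundaryval} follow from taking limits from inside $\mathcal{U}_I$: on degenerate states the limit is zero, since $F^{\mathbf{v}}_i$ already vanishes identically in the interior, while on regular states it matches the probabilistic expression by continuity, consistent with \eqref{eq_u_e}. I expect the main obstacle to be the Lipschitz estimate in step two—specifically, quantifying uniformly in $(t,\mathbf{x})$ the smoothing effect that random sojourn and arrival times exert on each marginal of $\mathbf{L}^{\mathbf{v}}(t)$ restricted to $\{X(t) = i\}$, and propagating the resulting bounds through the backward induction. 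The rest of the argument is structural once this estimate is available.
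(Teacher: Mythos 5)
Your proposal is correct and follows essentially the same route as the paper: part~\ref{prop_app_sub_max} via monotonicity forcing $L^{\bf v}_j(t)=M^{\bf v}_j t$ on $\{X(t)=i\}$ whenever $v_j(i)=M^{\bf v}_j$, part~\ref{prop_app_sub_smaller} via the smoothing effect of the jump times (with ${\bf v}(i)<{\bf M}^{\bf v}$ guaranteeing at least one strictly positive coefficient, hence a bounded density), and part~\ref{prop_app_sub_strong} by feeding the Lipschitz regularity into Proposition~\ref{pde1dprop}. The only organizational difference is in the Lipschitz estimate: you propose backward induction on the state index with conditioning on the first hitting time $\tau_i$ and the law of $\bigl(\tau_i,{\bf L}^{\bf v}(\tau_i),X(\tau_i^-)\bigr)$, whereas the paper exploits the triangular structure to bound the number of jumps by $n-1$ and sums explicitly over all paths $s_i^{(k)}$, writing each term as a probability involving $\sum_j T_j\bigl({\bf v}(s_{j-1})-{\bf v}(s_j)\bigr)$ composed with the linear map $(t,{\bf x})\mapsto{\bf x}-{\bf v}(i)t$; the latter avoids having to formalize your ``predecessor CDFs,'' which are not literally the functions $F^{\bf v}_k$ and would require a separate object to carry through the induction.
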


\begin{remark}
Computing the solution in Proposition~\ref{prop_app} for a given number $d$ of path-integrals requires computing solutions for $\bar{d}$ integrals with $\bar{d} < d$ on the boundary. These can be obtained by straightforwardly applying the proposition in lower dimensions. Note that for $d=1$, the values on the boundary can be directly obtained from the distribution of $X(t)$.
\end{remark}

\firstRevision{

\begin{remark}\label{rem_zero}
Note that for each $i \in {\mc S}_n$ we have
\begin{equation}
	F^{\bf v}_i(t,{\bf x}) = 0,
\end{equation}
for ${\bf x} \leq {\bf v}(i)t$.
\end{remark}

\begin{remark}\label{rem_generator_d}
The process $\big( X(t), {\bf L}^{\bf v}(t)\big)_{t \in \RR_+}$ is a time-inhomogeneous piecewise-deterministic strong Markov process~\cite[Chapter~2]{Davis1993}, and Proposition~\ref{prop_app} essentially shows that the generator is given by
\begin{equation}
	\mathcal{G}_t {\bf H}({\bf x}) = - \sum_{j=1}^d \edit{\del_{x_j}{\bf H}({\bf x}) V_j} + {\bf H}({\bf x})Q(t),
\end{equation}
for suitably defined functions ${\bf H}({\bf x})$. The stochastic transitions of $X(t)$ are described by $Q(t)$ and the deterministic evolution of ${\bf L}^{\bf v}(t)$ in each dimension is governed by the terms $V_j\,\del_{x_j}$.
However, in addition, Proposition~\ref{prop_app} establishes the regularity of ${\bf F}^{\bf v}(t, {\bf x})$, which is important for numerical computations.
\end{remark}

\begin{remark}\label{rem_extension}
Then the ancestral process with limited recombination satisfies assumptions (A1)-(A2), and thus, we focus on this case here. It is conceivable that these assumptions could be relaxed and Proposition~~\ref{prop_app} could be extended to more general Markov chains $X(t)$ with a (countably) infinite state space, and more general dynamics, for example, a non-triangular rate matrix $Q(t)$, or $\int_0^{\infty} q_i(s) \,ds < \infty$. However, the approach presented here in the proof of Proposition~\ref{prop_app} to show the necessary regularity of ${\bf F}^{\bf v}(t, {\bf x})$ uses the fact that $X(t)$ has absorbing states, and reaches them in finite time, after a finite number of jumps. For a more general version, this strategy would need to be adapted, or a different strategy used.
\end{remark}
}


  \begin{proof}[Proof of Proposition~\ref{prop_app}]

Let $\Delta$ denote the set of absorbing states of the process $X(t)$. Since $Q$ is lower triangular, $1 \in \Delta$ and thus  $\Delta$ is not empty.

Take any $i \in {\mc S}_n $ with  ${\bf v}(i) \nless {\bf M}^{\bf v}$. Since ${\bf v}$ is monotone along the process we conclude that
\begin{equation*}
L_j(t,\omega)=\int_0^t v_j\big(X(s,\omega)\big) \, ds =M_j^{\bf v} t \quad \text{for all} \quad j \notin J\big(1,{\bf v}(i)\big), \, \omega \in \{\tilde{\omega}: \SP X(t,\tilde{\omega})=i\}
\end{equation*}
and this yields~\ref{prop_app_sub_max}.

Recall next that for time-inhomogeneous Markov processes $X(t)$ (under the  assumptions (A1)-(A2)) the jumping times $T_1,T_2,T_3,\dots$ of $X(t)$ satisfy $\PP \big\{ T_1 > \alpha  \big\} =  \exp\big( - \int_0^{\alpha} q_1(s) \, ds \big)$
and for $k \geq 2$
\begin{equation}\label{jumpdistr}
\PP \Big\{ T_{k} > t+\alpha \big| \, T_{k-1}=t, X(T_{k-1})=i \Big\} =  \exp\Big( - \int_t^{t+\alpha} q_i(s) \, ds \Big)\,.
\end{equation}

Take any $i \in {\mc S}_n$ with ${\bf v}(i) < {\bf M}^{\bf v}$, in which case $i<n$.  Since $Q(t)$ is lower triangular, each trajectory of the process has at most $n-1$ jumps before it enters into the absorbing set $\Delta$. Thus we obtain  
\begin{equation*}
\Big\{\omega: X(\cdot,\omega) \;\; \text{enters the state $i$} \Big\} = \bigcup_{k=1}^{n-1} \Omega_{k}^{(i)}\,, \;\;\; \Omega_k^{(i)} = \Big\{\omega: \, \text{$X(\cdot,\omega)$ enters the state $i$ on the $k$-th jump}\Big\}\,.
\end{equation*}

We next denote $T_0=0$, $s_0=n$, $s_i^{(k)}=(s_1,s_2,\dots,s_{k-1},s_k=i) \in ({\mc S}_n)^k$, with $k \geq 1$, and 
\begin{equation*}
\begin{aligned}
A\big(s_i^{(k)}\big)  = \Big\{\omega: X(T_1)=s_1,\dots,X(T_{k-1})=s_{k-1},X(T_k)=s_k=i \Big\} \subset \Omega_k^{(i)}\,.
\end{aligned}
\end{equation*}

First, suppose that $i \notin \Delta$. For $(t,{\bf x}) \in \RR_+  \times \RR^d$, using the above partitioning, we write
\begin{equation}\label{Fipart}
\begin{aligned}
F_i^{\bf v}(t, {\bf x})&=\PP\{ X(t)=i, {\bf L}(t) \leq {\bf x} \} \\
& = \sum_{k=1}^{n-1} \sum_{s_i^{(k)}\in {\mc S}^k} \PP\left\{A\big(s_i^{(k)}\big),\SP T_{k}<t<T_{k+1}, \SP {\bf L}(t) \leq {\bf x} \right\}\\
& = \sum_{k=1}^{n-1} \sum_{s_i^{(k)}\in {\mc S}^k} \PP\Big\{A\big(s_i^{(k)}\big), T_{k}<t<T_{k+1}, \sum_{j=1}^k T_j\big({\bf v}(s_{j-1})-{\bf v}(s_j)\big) \leq {\bf x} - {\bf v}(i)t \Big\}.\,\\
\end{aligned}
\end{equation}

We now show that $F^{\bf v}_i$ is Lipschitz continuous. To this end, consider the function
\begin{equation}\label{temp1}
\begin{aligned}
G\big(t, {\bf x}; s_i^{(k)} \big)= \PP\Big\{A\big(s_i^{(k)}\big), T_{k}<t<T_{k+1}, \sum_{j=1}^k T_j\big({\bf v}(s_{j-1})-{\bf v}(s_j)\big)  \leq {\bf x} \Big\}.\,\\
\end{aligned}
\end{equation}
Observe that $G\big(t,{\bf x}; s_i^{(k)}\big)$ is well-defined for $(t,{\bf x}) \in \RR^{1+d}$. Moreover, since $i \notin \Delta$, the assumption (A2) implies that the process after entering the state $i$ leaves this state in finite time $\PP$-almost surely. Thus $\Omega_{k}^{(i)} \subset \{T_k<\infty\} \subset\{T_{k+1}<\infty\}$ and therefore 
\begin{equation}\label{temp2}
\begin{aligned}
G\big(t,{\bf x}; s_i^{(k)} \big)&= \PP\Big\{A\big(s^{(k)}\big), T_{k}<t, \sum_{j=1}^k T_j\big({\bf v}(s_{j-1})-{\bf v}(s_j)\big)  \leq {\bf x} \Big\}\,\\
& \qquad - \PP\Big\{A\big(s^{(k)}\big), T_{k+1}<t, \sum_{j=1}^k T_j\big({\bf v}(s_{j-1})-{\bf v}(s_j)\big)  \leq {\bf x}  \Big\}\\
&=: G_1\big(t,{\bf x}; s_i^{(k)}\big)-G_2\big(t,{\bf x}; s_i^{(k)}\big)\,.
\end{aligned}
\end{equation}



Now, using \eqref{jumpdistr} and  induction, one can show that for each $r \in {\mc S}_n$ and $k \geq 1$ 
\begin{equation}\label{temp3}
\PP\Big\{ A\big(s_r^{(k)}\big), T_{k+1} \leq z \Big\} = \int_{-\infty}^z f_{k+1}\big(\alpha \SP; s_r^{(k)}\big) \, d\alpha
\end{equation}
where $f_{k+1}\big( \cdot \, ; s_r^{(k)}\big)$ is a globally bounded function.
Thus, we conclude that the map
\begin{equation}
z \to \PP\Big\{ A\big(s_r^{(k)}\big), T_{k+1} \leq z \Big\}
\end{equation}
is globally Lipschitz for each $k \geq 1$.

Since ${\bf v}$ is non-increasing along the process, for each $s_i^{(k)}$ we  have 
\[
{\bf v}(n)={\bf M}^{\bf v} \geq {\bf v}(s_1) \dots \geq {\bf v}(s_k)={\bf v}(i)\,.
\]
By assumption ${\bf v}(i) < {\bf M}^{\bf v}$ and hence for each $l \in \{1,\ldots,d \}$ there exists $k_l \in \{1,\dots,k\}$ such that $v_l(s_{k_l-1})-v_l(s_{k_l})>0$, which guarantees that not all terms in the nonnegative sum $\sum_{j=1}^k T_j\big(v_l(s_{j-1})-v_l(s_j)\big)$ vanish.
Then, in view of the fact that the event $A\big(s_i^{(k)}\big)$ does not depend on the $(t, {\bf x})$-variable,
we can use~\eqref{temp3} and induction to conclude that
\begin{equation}\label{intLip}
\PP\Big\{ A\big(s_i^{(k)}\big), \sum_{j=1}^k T_j\big(v_l(s_{j-1})- v_l(s_j)\big)  \leq x_l \Big\} = \int_{-\infty}^{x_l} \tilde{f}_{kl}(s) \, ds\,, \quad k \geq 1\,, l \in \{1,\ldots,d\}
\end{equation} 
for some globally bounded function  $\tilde{f}_{kl}$. It can be shown that this implies that
\begin{equation}\label{intMap}
{\bf x} \to \PP\Big\{ A\big(s_i^{(k)}\big), \sum_{j=1}^k T_j\big({\bf v}(s_{j-1})- {\bf v}(s_j)\big)  \leq {\bf x} \Big\}
\end{equation} 
is globally Lipschitz. Combining \eqref{temp3} with \eqref{intMap} and using the definition of the Lipschitz continuity we conclude that $G_1\big(t,{\bf x}; s_i^{(k)}\big)$ and $G_2(t,{\bf x}; s_i^{(k)}\big)$ are globally Lipschitz and hence $G\big(t,{\bf x}; s_i^{(k)})$ is as well. 

Furthermore, any Lipschitz continuous function composed with a linear map is also Lipschitz continuous. Thus $\bar{G}\big(t,{\bf x};s_i^{(k)}\big):=G\big(B(t,{\bf x}); s_i^{(k)}\big)$, where $B(t,{\bf x})=\big(t,{\bf x}-{\bf v}(i)t\big)$, is globally Lipschitz. In \eqref{Fipart} each of the terms in the sum is one of the functions  $\bar{G}\big(x,t;s_i^{(k)}\big)$. Hence $F_i$ which is restricted to $(t,{\bf x}) \in [0,\infty) \times \RR^d$ is globally Lipschitz on this domain. 

Lastly, if $i\in \Delta$, observe that 
\begin{equation}
\Big\{ T_k<\infty, X(T_k)=i \Big\}  \subset \Big\{T_{k+1}=\infty\Big\}
\end{equation}
and therefore
\begin{equation}\label{Fnpart}
\begin{aligned}
F_i^{\bf v}(t,{\bf x})&=\PP\{ X(t)=i, {\bf L}(t)\leq{\bf x}\} \\
& = \sum_{k=1}^{n-1} \sum_{s_i^{(k)}\in {\mc S}^k} \PP\Big\{A\big(s_i^{(k)}\big), T_{k}<t, \sum_{j=1}^k T_j\big({\bf v}(s_{j-1})-{\bf v}(s_j)) \leq {\bf x} - {\bf v}(i)t \Big\}.\,\\
\end{aligned}
\end{equation}
Using an analogous approach (to the one in the case $i \notin \Delta$) one can show that each term in the above expression is globally Lipschitz continuous. This yields~\ref{prop_app_sub_smaller}.

From~\ref{prop_app_sub_smaller} and~\ref{prop_app_sub_max} it follows that ${\bf F}^{\bf v}$ is Lipschitz continuous in the open region ${\mc U}_I$. Then, by Proposition \ref{inival} we conclude that ${\bf F}^{\bf v}$ is a strong solution of \eqref{mpde} in ${\mc U}_I$. The boundary conditions~\eqref{boundaryval} and equation~\eqref{eq_u_e} follow directly from the definition of ${\bf F}^{\bf v}$. This proves~\ref{prop_app_sub_strong}.


\end{proof}

}

\section{Numerical Schemes}

\subsection{Upstream Numerical Scheme for Single-Locus Case}
\label{sec_num_alg_marginal}

Here we present a numerical algorithm for computing solutions to the system \eqref{eq_marginal_pde}. The numerical scheme is an upstream scheme based on the method of characteristics. In particular, the numerical scheme we develop makes use of the integral representation formulas \eqref{eq_sol_marginal_ode} and \eqref{eq_marginal_rate_int}.

To define a grid in the $(\time,\treeDimA)$-space suitable for computation, choose $\treeDimA_\text{max}$, the maximum value that the CDF $\P \{ \totalTreeLength \leq \treeDimA_\text{max}\}$ should be computed for. Due to Lemma~\ref{lem_cdf}, the relation $\P \{ \totalTreeLength \leq \treeDimA\} = \eff_1 ( \time_\text{max}, \treeDimA)$ holds for all $\treeDimA \leq \treeDimA_\text{max}$, with $\time_\text{max} := \frac{\treeDimA_\text{max}}{2}$. Thus $\time_\text{max}$ is set as the maximal gridpoint for $\time$. In addition to the maximum gridpoints, choose small step sizes $\Delta \time$ and $\Delta \treeDimA$. The number of gridpoints in the $\time$ dimension is then given by $\numTimePoints := \lceil \frac{\time_\text{max}}{\Delta \time} \rceil + 1$, and the set of gridpoints is given as
\begin{equation}\label{def_time_grid}
  \timeGrid := \big\{ 0, \Delta \time, 2 \Delta \time, \ldots, (\numTimePoints-1) \Delta \time, \min (\numTimePoints \Delta \time, \time_\text{max})  \big\}.
\end{equation}
For each point $\timeGrid_i$, define a grid in the $\treeDimA$-dimension as
\begin{equation}\label{def_tree_grid}
  \treeAGrid_{i} := \big\{ 0, \Delta \treeDimA, \ldots, \min (U \Delta \treeDimA, \sampleSize \timeGrid_i) \big\} \cup \big\{ 2 \Delta \time + \bar{\treeAGrid}_{i-1}, 3 \Delta \time + \bar{\treeAGrid}_{i-1}, \ldots, \sampleSize \Delta \time + \bar{\treeAGrid}_{i-1} \big\} \cup \big\{ 2 \timeGrid_i, 3 \timeGrid_i, \ldots, \sampleSize \timeGrid_i\big\},
\end{equation}
with $U = \lceil \frac{\sampleSize \timeGrid_i}{\Delta \treeDimA} \rceil$ and $\bar{\treeAGrid}_{i-1} := \max (\treeAGrid_{i-1})$. Furthermore, set $\numTreeAPoints_i := |\treeAGrid_i|$. The same grid will be used for all $\numLineages \in \{1,\ldots,\sampleSize\}$. The points $\numLineages \Delta \time + \bar{\treeAGrid}_{i-1}$ and $\numLineages \timeGrid_i$ are added for numerical stability reasons, to improve the accuracy of the interpolation we will perform in the subsequent steps.

Now fix $i \in \{0,\ldots,\numTimePoints\}$ and $\numLineages \in \{1,\ldots,\sampleSize\}$, and assume that $\eff_\ell ( \timeGrid_{i-1}, \treeAGrid_{i-1,j})$ has been computed for all $\ell \in \{1,\ldots,\sampleSize\}$ and $\treeAGrid_{i-1,j} \in \treeAGrid_{i-1}$. Furthermore, assume that $\eff_\ell ( \timeGrid_{i}, \treeAGrid_{i,j})$ has been computed for all $\ell \in \{\numLineages+1,\ldots,\sampleSize\}$ and $\treeAGrid_{i,j} \in \treeAGrid_{i}$. Under these assumptions, $\eff_\numLineages ( \timeGrid_{i}, \treeAGrid_{i,j})$ can be computed for all $\treeAGrid_{i,j} \in \treeAGrid_{i}$ as follows. If $\treeAGrid_{i,j} < \stateFunction (\numLineages)\timeGrid_{i}$, then
\begin{equation}\label{eq_sol_marginal_ode_zero}
  \eff_\numLineages ( \timeGrid_{i}, \treeAGrid_{i,j}) = 0.
\end{equation}
If $\treeAGrid_{i,j} =\sampleSize \timeGrid_{i}$, the maximal value of $\treeAGrid_{i}$, then
\begin{equation}\label{eq_sol_marginal_ode_boundary}
  \eff_\numLineages ( \timeGrid_{i}, \treeAGrid_{i,j}) = \P \Big\{ \ancestralProcess\big(\timeGrid_{i}\big) = \numLineages \Big\}.
\end{equation}

The values on the right-hand side can be pre-computed for all $\numLineages$ and $\timeGrid_{i} \in \timeGrid$ by solving the ODE~\eqref{eq_anc_proc_ode} numerically. In the general case, note that the characteristic of $\eff_\numLineages$ that goes through the point $(\timeGrid_{i}, \treeAGrid_{i,j})^{\T}$ and the boundary $\treeDimA = \sampleSize \time$ intersect at the point $(\intersectTimeA, \sampleSize \intersectTimeA)^{\T}$, with $\intersectTimeA := \frac{\treeAGrid_{i,j} - \stateFunction (\numLineages) \timeGrid_{i}}{\sampleSize - \stateFunction (\numLineages)}$. Thus, define
\begin{equation}\label{eq_trace_points}
  (\treeAGrid^\downarrow_{i,j}, \timeGrid^\downarrow_{i,j})^{\T} := \begin{cases}
      (\timeGrid_{i-1}, \treeAGrid_{i,j} - \stateFunction (\numLineages) \Delta \time)^{\T},  & \text{if $\intersectTimeA < \timeGrid_{i-1}$},\\
      (\intersectTimeA, \sampleSize \intersectTimeA)^{\T},                  & \text{otherwise},
    \end{cases} 
\end{equation}
the projection of $(\timeGrid_{i}, \treeAGrid_{i,j})^{\T}$ back along the corresponding characteristic to the previous time-slice $\timeGrid_{i-1}$, or onto the boundary $\treeDimA = \sampleSize \time$, whichever has the larger $\time$-component. This backward projection step is illustrated in Figure~\ref{fig_nalg_marg}(\subref{fig_nalg_marg_a}). Then, according to equation~\eqref{eq_sol_marginal_ode}
\begin{equation}\label{eq_sol_marginal_ode_step}
  \eff_\numLineages ( \timeGrid_{i}, \treeAGrid_{i,j}) = e^{-(\rateInt^{(1)}_\numLineages(\timeGrid_{i})- \rateInt^{(1)}_\numLineages(\timeGrid^\downarrow_{i,j}))} \Bigg( \int_{\timeGrid^\downarrow_{i,j}}^{\timeGrid_{i}} \inhODE^{(1)}_\numLineages (\alpha) e^{(\rateInt^{(1)}_\numLineages(\alpha)- \rateInt^{(1)}_\numLineages(\timeGrid^\downarrow_{i,j}))} d\alpha+ \eff_\numLineages ( \treeAGrid^\downarrow_{i,j}, \timeGrid^\downarrow_{i,j}) \Bigg)
\end{equation}
holds.

\begin{figure}
\begin{subfigure}[t]{0.45\textwidth}
\begin{center}
  \includegraphics[width=\textwidth]{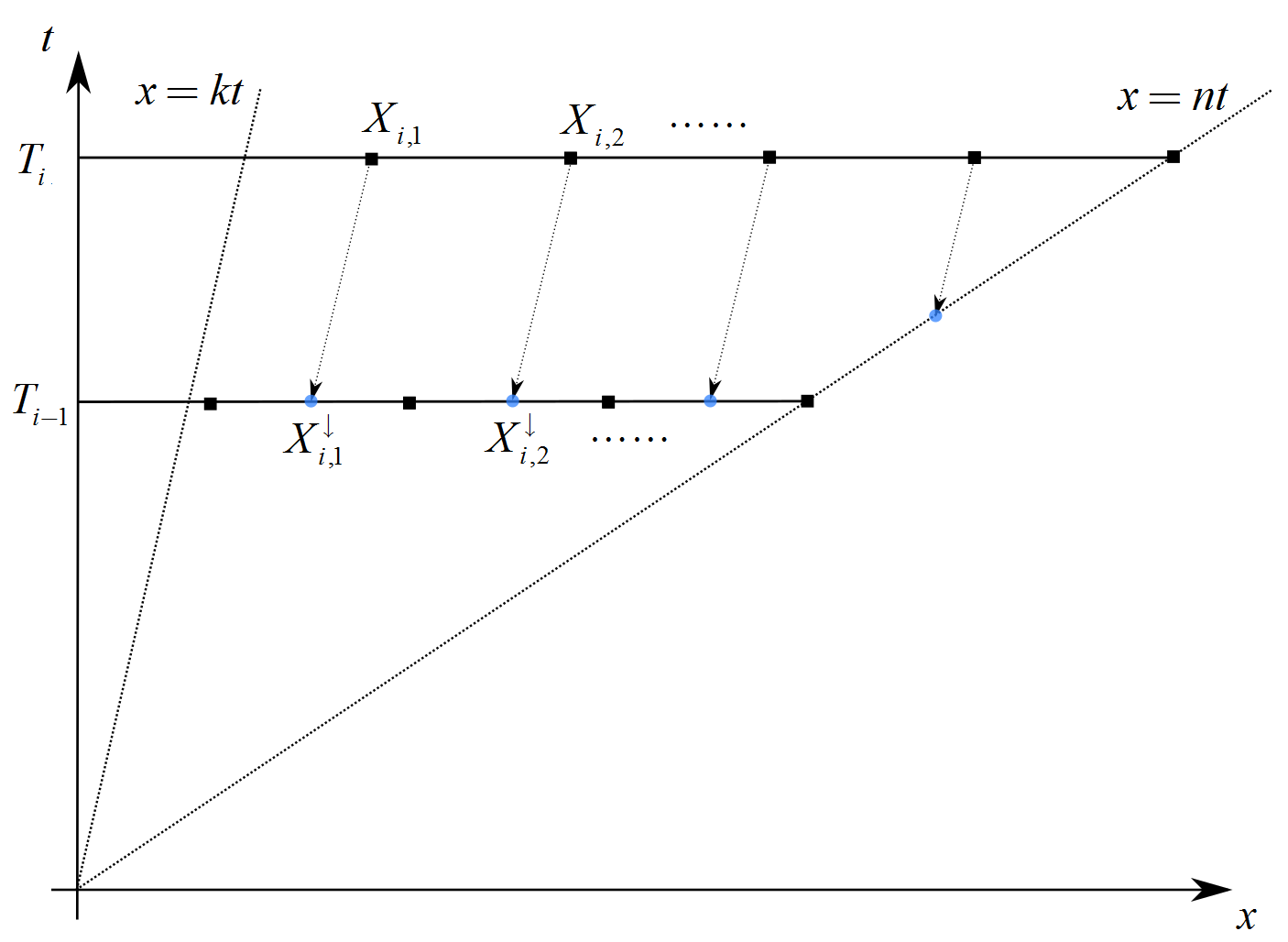}
  \caption{Projection the grid points backwards along the characteristics from time layer $\timeGrid_{i}$ to $\timeGrid_{i-1}$.}
  \label{fig_nalg_marg_a}
\end{center}
\end{subfigure}
\hspace{30pt} 
\begin{subfigure}[t]{0.45\textwidth}
\begin{center}
  \includegraphics[width=\textwidth]{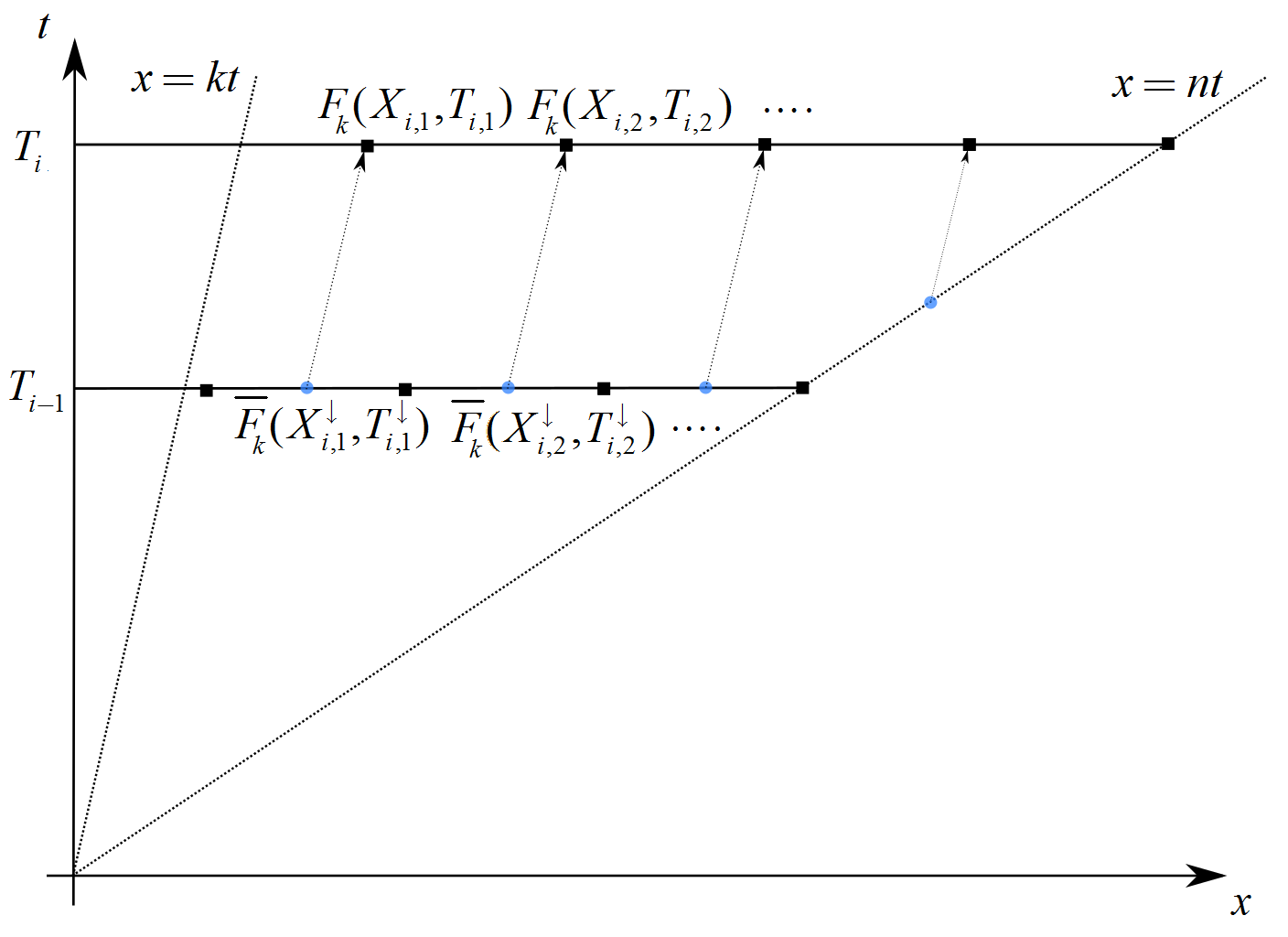}
  \caption{Propagating the interpolated values of the function $\eff_\numLineages$ via numerical integration.}
  \label{fig_nalg_marg_b}
\end{center}
\end{subfigure}
\caption{The back-tracing and propagation step of the upstream numerical scheme to compute $\eff_\numLineages$ at all points of the grid.}\label{fig_nalg_marg}
\end{figure}

The right-hand side of the equation~\eqref{eq_sol_marginal_ode_step} can now be computed using two approximations. Note that the point $( \timeGrid^\downarrow_{i,j}, \treeAGrid^\downarrow_{i,j})^{\T}$ is in general not on the grid $\treeAGrid_{i}$, and thus $\eff_\numLineages ( \timeGrid^\downarrow_{i,j}, \treeAGrid^\downarrow_{i,j})$ has not been pre-computed. If $\intersectTimeA  < \timeGrid_{i-1}$, the point is equal to $(\timeGrid_{i-1}, \treeAGrid_{i,j} - \stateFunction (\numLineages) \Delta \time)^{\T}$. In that case, identify the two grid points in $\treeAGrid_{i-1}$ that are closest to $\treeAGrid_{i,j} - \stateFunction (\numLineages)\Delta \time$ to the right and \st{to the} left. Then let $\bar{\eff}_\numLineages (  \timeGrid^\downarrow_{i,j},  \treeAGrid^\downarrow_{i,j})$ be the linear interpolation between the values of $\eff_\numLineages$ at those gridpoints. If $\intersectTimeA  \geq \timeGrid_{i-1}$, then the point is given by $(\timeGrid^\downarrow_{i,j}, \treeAGrid^\downarrow_{i,j})^{\T} = (\intersectTimeA , \sampleSize \intersectTimeA )^{\T}$ and is located on the boundary. Thus
\begin{equation}
  \eff_\numLineages ( \treeAGrid^\downarrow_{i,j}, \timeGrid^\downarrow_{i,j}) = \P \Big\{ \ancestralProcess\big(\intersectTimeA\big) = \numLineages \Big\},
\end{equation}
which is also not pre-computed. However, $\P \Big\{ \ancestralProcess\big(\timeGrid_{i-1}\big) = \numLineages \Big\}$ and $\P \Big\{ \ancestralProcess\big(\timeGrid_{i}\big) = \numLineages \Big\}$ have been pre-computed, and thus set $\bar{\eff}_\numLineages ( \timeGrid^\downarrow_{i,j}, \treeAGrid^\downarrow_{i,j})$ as the linear interpolation between these two values.

The second approximation is to compute the integral on the right-hand side of equation~\eqref{eq_sol_marginal_ode_step} using the trapezoidal rule. Thus, the values of $\eff_\numLineages$ on the grid can be computed using
\begin{equation}\label{eq_sol_marginal_ode_step_approx}
\begin{aligned}
  \eff_\numLineages ( \timeGrid_{i}, \treeAGrid_{i,j}) = & \frac{\Delta \time}{2} \Bigg( \inhODE^{(1)}_\numLineages (\timeGrid_{i}) + e^{-(\rateInt^{(1)}_\numLineages(\timeGrid_{i})- \rateInt^{(1)}_\numLineages(\timeGrid^\downarrow_{i,j}))} \inhODE^{(1)}_\numLineages (\timeGrid^\downarrow_{i,j}) \Bigg) + e^{-(\rateInt^{(1)}_\numLineages(\timeGrid_{i})- \rateInt^{(1)}_\numLineages(\timeGrid^\downarrow_{i,j}))} \bar{\eff}_\numLineages ( \timeGrid^\downarrow_{i,j}, \treeAGrid^\downarrow_{i,j})\\
& \qquad + o(\Delta\time^3)+o({\Delta \treeDimA}^2)       
\end{aligned}
\end{equation}
The terms $\inhODE_\numLineages (\cdot)$ depend on the values of $\eff_\ell$ with $\numLineages < \ell \leq \sampleSize$ that might not have been pre-computed on the grid either. However, the same interpolation schemes as for $\bar{\eff}_\numLineages$ can be applied. At this stage it is important though to strictly set $\eff_\ell$ to 0 if it should be 0 according to equation~\eqref{eq_eff_regions}. Lastly, the values for $\rateInt^{(1)}_\numLineages(\cdot)$ can either be obtained using analytic formulas in equation~\eqref{eq_marginal_rate_int} for certain classes of coalescent-speed functions we will consider (e.g. piece-wise constant), or by computing the requisite integrals using the trapezoidal rule, which can be done incrementally. The integration step of out numerical scheme is illustrated in Figure~\ref{fig_nalg_marg}(\subref{fig_nalg_marg_b}).

These equations lead naturally to a dynamic programming algorithm to compute $\eff_\numLineages$ on the specified grid. To this end, iterate through the values $\timeGrid_{i} \in \timeGrid$ in increasing order. For each $\timeGrid_{i}$, iterate through $\numLineages \in \{1,2,\ldots,\sampleSize\}$ in decreasing order, starting with $\numLineages = \sampleSize$. Then, for each fixed $\timeGrid_{i}$ and $\numLineages$, $\eff_\numLineages ( \timeGrid_{i}, \treeAGrid_{i,j})$ can be computed for every $\treeAGrid_{i,j} \in \treeAGrid_{i}$ using equations~\eqref{eq_sol_marginal_ode_zero}, \eqref{eq_sol_marginal_ode_boundary}, and~\eqref{eq_sol_marginal_ode_step_approx}. The order of iteration guarantees that all necessary quantities have been pre-computed. This dynamic program can be employed to compute $\eff_\numLineages$ on the specified grid for all $\numLineages$. Due to Lemma~\ref{lem_cdf}, the relation
\begin{equation}
  \P \{ \totalTreeLength \leq \treeAGrid_{\numTimePoints,j}\} = \eff_1 (\timeGrid_{M}, \treeAGrid_{\numTimePoints,j})
\end{equation}
holds, which yields the values of the CDF $\P \{ \totalTreeLength \leq \treeDimA\}$ on the specified grid $\treeAGrid_{\numTimePoints}$.

\subsection{Upstream Numerical Scheme for Two-Locus Case}
\label{sec_algo_joint}

In the two-locus case, we can compute $\eff_\state$ efficiently on a chosen grid similar to the marginal case. To this end, we again choose $\treeDimA_\text{max} = \treeDimB_\text{max}$, set $\time_\text{max} := \frac{1}{n} \treeDimA_\text{max}$, and choose step sizes $\Delta \time$ and $\Delta \treeDimA = \Delta \treeDimB$. Then, define the grid $\timeGrid$ as in definition~\eqref{def_time_grid}, $\numTimePoints = |\timeGrid|$ and for each $\timeGrid_{i}$, define $\treeAGrid_{i}$ as in definition~\eqref{def_tree_grid}. Furthermore, set $\treeBGrid_{i} := \treeAGrid_{i}$ and $\numTreeAPoints_{i} := |\treeBGrid_{i}|$. Thus, we use the regular grid $\treeAGrid_{i} \times \treeBGrid_{i}$ in the $(\treeDimA, \treeDimB)$-space.

Now fix $\timeGrid_{i}$ and $\state \in \recoLimitedStates$, and assume that $\eff_{\state'} (\timeGrid_{i-1}, \treeAGrid_{i-1,j}, \treeBGrid_{i-1,\ell})$ has been computed for all $\state' \in \recoLimitedStates$, $\treeAGrid_{i-1,j} \in \treeAGrid_{i-1}$, and $\treeBGrid_{i-1,\ell} \in \treeBGrid_{i-1}$. Furthermore, assume that $\eff_{\state'} (\timeGrid_{i}, \treeAGrid_{i,j}, \treeBGrid_{i,\ell})$ has been computed for all $\state'$ with $\state \prec \state'$, $\treeAGrid_{i,j} \in \treeAGrid_{i}$, and $\treeBGrid_{i,\ell} \in \treeBGrid_{i}$. To compute $\eff_\state ( \timeGrid_{i}, \treeAGrid_{i,j}, \treeBGrid_{i,\ell} )$, first check using equation~\eqref{eq_joint_boundaries} whether the requisite point lies on the boundary, or in the zero region. The values on the boundary according to equation~\eqref{eq_joint_boundaries} are computed as time-dependent CDFs of marginal integrals along the trajectories of the process $\ancestralRecoLimitedProcess$, and thus they can be computed using exactly the same procedure as detailed in Section~\ref{sec_num_alg_marginal}, replacing $\ancestralProcess$ by $\ancestralRecoLimitedProcess$. In the interior region, applying the trapezoidal rule to the solution of the first-order ODE, for all $\treeAGrid_{i,j} \in \treeAGrid_{i}$, and $\treeBGrid_{i,\ell} \in \treeBGrid_{i}$ the value of $\eff_{\state} (\timeGrid_{i}, \treeAGrid_{i,j}, \treeBGrid_{i,\ell})$ can be computed using
\begin{equation}
  \begin{split}
    \eff_\state ( \timeGrid_{i},& \treeAGrid_{i,j}, \treeBGrid_{i,\ell} )\\
      = &\frac{\Delta \time}{2} \Bigg( \inhODE^{(2)}_\state (\timeGrid_{i}) + e^{-(\rateInt^{(2)}_\state(\timeGrid_{i})- \rateInt^{(2)}_\state(\timeGrid^\downarrow_{i,j,\ell}))} \inhODE^{(2)}_\state (\timeGrid^\downarrow_{i,j,\ell}) \Bigg) + e^{-(\rateInt^{(2)}_\state(\timeGrid_{i})- \rateInt^{(2)}_\state(\timeGrid^\downarrow_{i,j,\ell}))} \bar{\eff}_\state ( \timeGrid^\downarrow_{i,j,\ell}, \treeAGrid^\downarrow_{i,j}, \treeBGrid^\downarrow_{i,\ell} )\\
      & \qquad + o(\Delta\time^3)+o({\Delta \treeDimA}^2) + o({\Delta \treeDimB}^2)\,.
  \end{split}
\end{equation}
Here
\begin{equation}
  (\timeGrid^\downarrow_{i,j,\ell}, \treeAGrid^\downarrow_{i,j}, \treeBGrid^\downarrow_{i,\ell})^{\T} := \begin{cases}
      (\timeGrid_{i-1}, \treeAGrid_{i,j} - \stateFunction^{\aLocus} (\state) \Delta \time, \treeBGrid_{i,\ell} - \stateFunction^{\bLocus} (\state) \Delta \time)^{\T},  & \text{if $\max (\intersectTimeA, \intersectTimeB) < \timeGrid_{i-1}$},\\
      \big(\intersectTimeA, \sampleSize \intersectTimeA, \treeBGrid_{i,\ell} - \stateFunction^{\bLocus} (\state) \cdot (\timeGrid_{i} - \intersectTimeA) \big)^{\T},                  & \text{if $\max (\timeGrid_{i-1}, \intersectTimeB) \leq \intersectTimeA$},\\
      \big(\intersectTimeB, \treeAGrid_{i,j} - \stateFunction^{\aLocus} (\state) \cdot (\timeGrid_{i} - \intersectTimeB), \sampleSize \intersectTimeB\big)^{\T},                  & \text{if $\max (\intersectTimeA, \timeGrid_{i-1}) \leq \intersectTimeB$},
    \end{cases} 
\end{equation}
with
\begin{equation}
  \intersectTimeA := \frac{\treeAGrid_{i,j} - \stateFunction^{\aLocus} (\state) \timeGrid_{i}}{\sampleSize - \stateFunction^{\aLocus} (\state)}
\end{equation}
being the $\time$-coordinate of the point of intersection between the characteristic through the point $(\timeGrid_{i}, \treeAGrid_{i,j}, \treeBGrid_{i,\ell})^{\T}$ and the boundary $\treeDimA = \sampleSize \time$, and
\begin{equation}
  \intersectTimeB := \frac{\treeBGrid_{i,\ell} - \stateFunction^{\bLocus} (\state) \timeGrid_{i}}{\sampleSize - \stateFunction^{\aLocus} (\state)}
\end{equation}
likewise for the boundary $\treeDimB = \sampleSize \time$.

The points $(\timeGrid^\downarrow_{i,j,\ell}, \treeAGrid^\downarrow_{i,j}, \treeBGrid^\downarrow_{i,\ell})^{\T}$ will in general not be on the grid of pre-computed values, and thus the approximation $\bar{\eff}_\state ( \timeGrid^\downarrow_{i,j,\ell}, \treeAGrid^\downarrow_{i,j}, \treeBGrid^\downarrow_{i,\ell} )$ has to be used. In the case $\max (\intersectTimeA, \intersectTimeB) < \timeGrid_{i-1}$, this value can be obtained by identifying the four points in $\treeAGrid_{i-1} \times \treeBGrid_{i-1}$ surrounding $(\treeAGrid^\downarrow_{i,j}, \treeBGrid^\downarrow_{i,\ell})$, and interpolating the respective values of $\eff_\state ( \timeGrid_{i-1}, \cdot, \cdot )$ linearly. In the case $\max (\timeGrid_{i-1}, \intersectTimeB) \leq \intersectTimeA$, the point $(\timeGrid^\downarrow_{i,j,\ell}, \treeAGrid^\downarrow_{i,j}, \treeBGrid^\downarrow_{i,\ell})^{\T}$ is on the boundary $\treeDimA = \sampleSize \time$, and
\begin{equation}
  \eff_\state (\timeGrid^\downarrow_{i,j,\ell}, \treeAGrid^\downarrow_{i,j}, \treeBGrid^\downarrow_{i,\ell}) = \P \big\{ \ancestralRecoLimitedProcess(\intersectTimeA) = \state,  \treeLength^\bLocus(\intersectTimeA) \leq \treeBGrid_{i,\ell} - \stateFunction^{\bLocus} (\state) \cdot (\timeGrid_{i} - \intersectTimeA) \big\}
\end{equation}
holds. The value of the time-dependent CDF on the right-hand can be obtained as the linear interpolation between the values $\P \big\{ \ancestralRecoLimitedProcess(\timeGrid_{i-1}) = \state,  \treeLength^\bLocus(\timeGrid_{i-1}) \leq \treeBGrid_{i,\ell} - \stateFunction^{\bLocus} (\state) \Delta \time \big\}$ and $\P \big\{ \ancestralRecoLimitedProcess(\timeGrid_{i}) = \state,  \treeLength^\bLocus(\timeGrid_{i}) \leq \treeBGrid_{i,\ell} \big\}$, which we pre-compute (or approximations thereof) using the numerical scheme for the marginal case (see Appendix~\ref{sec_num_alg_marginal}) on the boundary. By symmetry, the case $\max (\intersectTimeA, \timeGrid_{i-1}) \leq \intersectTimeB$ can be handled in the same way. Computing $\inhODE^{(2)}_\state(\cdot)$ will require some $\eff_{\state'}$ with $\state \prec \state'$, which can be obtained by similar interpolation procedures, or setting it to zero in the appropriate regions. The values of $\rateInt^{(2)}_\state(\cdot)$ can be computed according to equation~\eqref{eq_rate_int_joint} analytically or numerically, as before.

Again, we can implement these formulas in an efficient dynamic programming algorithm to compute the values of $\eff_\state ( \time, \treeDimA, \treeDimB )$ on the specified grid for all $\state \in \recoLimitedStates$, and thus compute
\begin{equation}
  \P \{ \totalTreeLength^\aLocus \leq \treeAGrid_{\numTimePoints,j} , \totalTreeLength^\bLocus \leq \treeBGrid_{\numTimePoints,\ell} \} = \eff_{(1,0,0,0)} ( \time_\text{max}, \treeAGrid_{\numTimePoints,j}, \treeBGrid_{\numTimePoints,\ell} ) + \eff_{(1,0,0,1)} ( \time_\text{max}, \treeAGrid_{\numTimePoints,j}, \treeBGrid_{\numTimePoints,\ell} ),
\end{equation}
the joint CDF of the total tree length at two linked loci evaluated on the specified grid.

\end{document}